\PassOptionsToPackage{unicode}{hyperref}
\PassOptionsToPackage{hyphens}{url}
\PassOptionsToPackage{dvipsnames,svgnames,x11names}{xcolor}
\documentclass[
  12pt]{article}
\usepackage{dsfont}
\usepackage{amsmath,amssymb,amsthm}
\usepackage[pagebackref]{hyperref}
\usepackage{iftex}
\ifPDFTeX
  \usepackage[T1]{fontenc}
  \usepackage[utf8]{inputenc}
  \usepackage{textcomp} 
\else 
  \usepackage{unicode-math}
  \defaultfontfeatures{Scale=MatchLowercase}
  \defaultfontfeatures[\rmfamily]{Ligatures=TeX,Scale=1}
\fi
\usepackage{lmodern}
\ifPDFTeX\else  
\fi
\IfFileExists{upquote.sty}{\usepackage{upquote}}{}
\IfFileExists{microtype.sty}{
  \usepackage[]{microtype}
  \UseMicrotypeSet[protrusion]{basicmath} 
}{}
\makeatletter
\@ifundefined{KOMAClassName}{
  \IfFileExists{parskip.sty}{%
    \usepackage{parskip}
  }{
    \setlength{\parindent}{0pt}
    \setlength{\parskip}{6pt plus 2pt minus 1pt}}
}{
  \KOMAoptions{parskip=half}}
\makeatother
\usepackage{xcolor}
\makeatletter
\ifx\paragraph\undefined\else
  \let\oldparagraph\paragraph
  \renewcommand{\paragraph}{
    \@ifstar
      \xxxParagraphStar
      \xxxParagraphNoStar
  }
  \newcommand{\xxxParagraphStar}[1]{\oldparagraph*{#1}\mbox{}}
  \newcommand{\xxxParagraphNoStar}[1]{\oldparagraph{#1}\mbox{}}
\fi
\ifx\subparagraph\undefined\else
  \let\oldsubparagraph\subparagraph
  \renewcommand{\subparagraph}{
    \@ifstar
      \xxxSubParagraphStar
      \xxxSubParagraphNoStar
  }
  \newcommand{\xxxSubParagraphStar}[1]{\oldsubparagraph*{#1}\mbox{}}
  \newcommand{\xxxSubParagraphNoStar}[1]{\oldsubparagraph{#1}\mbox{}}
\fi
\makeatother

\newtheorem{theorem}{Theorem}

\usepackage{longtable,booktabs,array}
\usepackage{calc} 
\usepackage{etoolbox}
\makeatletter
\patchcmd\longtable{\par}{\if@noskipsec\mbox{}\fi\par}{}{}
\makeatother
\IfFileExists{footnotehyper.sty}{\usepackage{footnotehyper}}{\usepackage{footnote}}
\makesavenoteenv{longtable}
\usepackage{graphicx}
\makeatletter
\def\maxwidth{\ifdim\Gin@nat@width>\linewidth\linewidth\else\Gin@nat@width\fi}
\def\maxheight{\ifdim\Gin@nat@height>\textheight\textheight\else\Gin@nat@height\fi}
\makeatother
\setkeys{Gin}{width=\maxwidth,height=\maxheight,keepaspectratio}
\makeatletter
\def\fps@figure{htbp}
\makeatother

\makeatletter
\@ifpackageloaded{caption}{}{\usepackage{caption}}
\AtBeginDocument{%
\ifdefined\contentsname
  \renewcommand*\contentsname{Table of contents}
\else
  \newcommand\contentsname{Table of contents}
\fi
\ifdefined\listfigurename
  \renewcommand*\listfigurename{List of Figures}
\else
  \newcommand\listfigurename{List of Figures}
\fi
\ifdefined\listtablename
  \renewcommand*\listtablename{List of Tables}
\else
  \newcommand\listtablename{List of Tables}
\fi
\ifdefined\figurename
  \renewcommand*\figurename{Figure}
\else
  \newcommand\figurename{Figure}
\fi
\ifdefined\tablename
  \renewcommand*\tablename{Table}
\else
  \newcommand\tablename{Table}
\fi
}
\@ifpackageloaded{float}{}{\usepackage{float}}
\floatstyle{ruled}
\@ifundefined{c@chapter}{\newfloat{codelisting}{h}{lop}}{\newfloat{codelisting}{h}{lop}[chapter]}
\floatname{codelisting}{Listing}

\makeatother
\makeatletter
\@ifpackageloaded{caption}{}{\usepackage{caption}}
\@ifpackageloaded{subcaption}{}{\usepackage{subcaption}}
\makeatother

\ifLuaTeX
  \usepackage{selnolig}  
\fi
\usepackage[]{natbib}
\usepackage{bookmark}

\usepackage{xr-hyper}

\usepackage{units}            
\usepackage[ruled,vlined]{algorithm2e}  
\newcommand{\MF}{\mathcal{M}} 
\newcommand{\bx}{X}
\renewcommand{\P}{\mathbb{P}} 

\IfFileExists{xurl.sty}{\usepackage{xurl}}{} 
\hypersetup{
  pdftitle={Computer code validation via mixture model estimation},
  pdfauthor={Author 1; Author 2},
  pdfkeywords={3 to 6 keywords, that do not appear in the title},
  colorlinks=true,
  linkcolor={blue},
  filecolor={Maroon},
  citecolor={Blue},
  urlcolor={Blue},
  pdfcreator={LaTeX via pandoc}}

\begin{document}

\def\spacingset#1{\renewcommand{\baselinestretch}%
{#1}\small\normalsize} \spacingset{1}


  \title{\bf Computer code validation via mixture model estimation}

  \author{
Negar Soleimani\thanks{Université Paris-Saclay, AgroParisTech, INRAE, UMR MIA Paris-Saclay, 91120, Palaiseau, France.}
\and
Pierre Barbillon\footnotemark[1]
\and
Kaniav Kamary\thanks{INSA Lyon, CNRS, École Centrale Lyon, Université Claude Bernard Lyon 1, ICJ UMR 5208, France.}
\and
Merlin Keller\thanks{EDF R\&D PRISME, Chatou, France.}
}

  \maketitle

\bigskip
\begin{abstract}
When computer codes model complex physical systems, calibration alone is insufficient; one must also assess whether a discrepancy term is needed. In this paper, we study computer code validation through a Bayesian mixture approach that compares a pure-code model with a discrepancy-corrected model. The method relies on the posterior distribution of a mixture weight, which measures the relative support of the two competing distributions. Under the assumption that the code is linear in the calibration parameters, or can be well approximated by a linear surrogate, we show that mixture component-shared parameters can be used to combine flexible modeling, even when noninformative priors are assigned to some common parameters. Inference is performed using a Metropolis-within-Gibbs algorithm.
In addition, we introduce a thresholded allocation rule that complements the global mixture weight by providing a local diagnostic of where the discrepancy-corrected component is truly needed along the input domain. Beyond global model comparison, the proposed approach is also able to perform local model discrimination by identifying where the discrepancy-corrected component is truly needed along the input domain.
\end{abstract}


\noindent%
{\it Keywords:} 
Mixture estimation model,
Computer code validation,
Bayesian model selection,
Noninformative prior.

\vfill

\newpage

\section{Introduction}\label{sec-intro}
Since field experiments are often impractical or economically expensive, computer codes have been developed as substitutes for many complex physical systems \citep{gramacy2020}.  
At the core of such codes are mathematical equations describing physical phenomena.
Two fundamental challenges must then be addressed: verification and validation. Verification \citep{Roache1998} consists of ensuring that the code provides a sufficiently accurate approximation of the underlying physical model. Validation, however, addresses a different question, namely whether the physical model itself adequately represents the real phenomenon of interest. In practice, validation is closely linked to calibration, whose purpose is to reduce uncertainty in unknown model parameters by combining the code with experimental data \citep{Roy2011}. Throughout this work, we assume that a verified code is available.
In the Bayesian calibration framework introduced by \citet{kennedy2001,DHAKJCCJAC2004}, the computer code is represented by a function $f(x,\pmb{\theta})$, where $\pmb{\theta}$ denotes unknown calibration parameters. To account for possible systematic departures between the code and the physical system, a discrepancy function $\delta(x)$ is introduced, typically endowed with a Gaussian process prior. From this perspective, validation amounts to determining whether such a discrepancy term is required or whether the computer model alone provides an adequate representation of reality.
A major difficulty, however, is that the joint estimation of calibration parameters and discrepancy may lead to confounding and identifiability issues, resulting in biased inference on $\pmb{\theta}$ and degraded predictive performance when discrepancy is neglected \citep{brynjarsdottir2014}. These difficulties have motivated an extensive literature, including discrepancy priors that constrain the bias to remain small \citep{gu2018scaled}, orthogonal Gaussian-process constructions \citep{plumlee2017,huang2025local}, modular or sequential strategies \citep{wong2017,MAUPIN2020106818}, and more recent adaptive formulations of model discrepancy \citep{leoni2024adaptive}.
Several approaches have also been proposed specifically for model validation. For instance, \citet{MJBJOBPRSJA2007} assessed predictive accuracy through tolerance bounds around the true process, whereas \citet{GDMKPBAPEP2016} formulated validation as a Bayesian model selection problem between a pure-code model and a discrepancy-corrected model. From a Bayesian perspective, model comparison is most commonly performed through Bayes factors or posterior model probabilities 
\citep{JOBLRP1996,STBKPBNHA1997,Robert2007,JMMNSPCPRJR2014}, while predictive criteria such as leave-one-out cross-validation and WAIC are also widely used 
\citep{vehtari2017,yao2018,GayaKetz2024}.
Although powerful, Bayes-factor-based procedures based on marginal likelihoods \citep{Knuth2014,naderi2026approximating} can become delicate in complex hierarchical settings, especially when weakly informative or improper priors are desirable. As a result, alternative approaches have gained increasing attention and Bayesian model averaging has been developed as a way to account for model uncertainty rather than selecting a single model \citep{Fragoso2015}. \citet{KKKMCPRJR2014} proposed a paradigm for Bayesian hypothesis testing in which the competing models are embedded within an encompassing mixture model. Rather than relying on Bayes factors, inference is based on the posterior distribution of the mixture weights, which quantifies the relative support for the competing components. This perspective is particularly attractive because it accommodates shared parameters across models, permits the use of improper priors under suitable conditions, and yields an interpretable measure of support. Subsequent work has further developed this framework from both methodological and computational viewpoints. In particular, \citet{Newman2026} considered efficient sampling strategies and model selection in high-dimensional mixture settings. Related ideas have also been adapted to specific application domains, including Bayesian model selection for generalized linear mixed models \citep{Xu2023}, while alternative criteria incorporating geometric or predictive information have recently been proposed to improve robustness in complex settings \citep{Ramirez2026}.
In this paper, we adapt the mixture-estimation approach of \citet{KKKMCPRJR2014} to computer code validation by comparing a pure-code model with a discrepancy-corrected model. We work under the assumption that the code is linear in the calibration parameters, or can be well approximated by a linear surrogate. The presence of a Gaussian-process discrepancy makes the present setting more challenging than the original mixture-testing framework, since it induces dependence among observations and raises additional identifiability issues.  
To handle this structure, we embed the pure-code model $\MF_0$ and the discrepancy-corrected model $\MF_1$ into a single encompassing mixture $\MF_\alpha$ whose two components share the calibration parameters $\pmb{\theta}$ and the noise variance $\lambda^2$ and differ only through the Gaussian-process discrepancy $\delta(X)$.
We show that, under suitable conditions, the parameter-sharing structure allows improper priors on the common calibration parameters while preserving posterior propriety. Since the posterior distribution has no standard form, we develop a Metropolis-within-Gibbs algorithm for the posterior inference. The resulting framework performs calibration and validation simultaneously. In addition, we introduce a thresholded allocation device that complements the global mixture weight $\alpha$ with a local diagnostic, indicating where along the input domain the discrepancy-corrected component is effectively needed.
The paper is organized as follows. Section~\ref{sec-meth} summarizes statistical modeling and methodologies starting with an introduction to the competing models, the mixture formulation, 
and then the Bayesian analysis of the model. The MCMC algorithm is described in Section~\ref{sec:algo}.
~Section~\ref{sec:case} illustrates the performance of the approach on synthetic experiments and a real dataset called {\it ball-drop dataset} and the paper ends with a conclusion in Section~\ref{sec-conc}.

\section{Statistical modeling framework}\label{sec-meth}
{\bf \large Models in competition.} The computer code is represented by a parametric function $f(x,\pmb{\theta})$ that aims at approximating the physical quantity of interest $r(x)\in\mathbb{R}$, where $x\in\mathcal{X}\subset\mathbb{R}^{p}$ is the controllable input and $\pmb{\theta}\in\Theta\subset\mathbb{R}^{d}$ is the vector of calibration parameters. When the code is costly, it is common to replace $f$ by a surrogate obtained via Gaussian-process emulation \citep{sacks1989}, polynomial expansions \citep{an2001quasi}, or a linearization approximation \citep{Bachoc2014}. We assume here that the code is linear in $\pmb{\theta}$ \citep{GDMKPBAPEP2016}, or that a linear surrogate $f(x, \pmb{\theta})=g(x)\pmb{\theta}$ is adequate; richer surrogate models \citep{kennedy2001} are beyond the scope of this paper.

To account for systematic differences between $r(x)$ and $g(x)\pmb{\theta}$, \cite{kennedy2001} introduced a discrepancy term
$\delta(x)=r(x)-g(x)\pmb{\theta}^*,$
where $\pmb{\theta}^*$ is the true value of $\pmb{\theta}$. 
If $X=(x_1,\ldots,x_n)^\top$ and $Y=(y_1,\ldots,y_n)^\top$ denote the design of 
observable
inputs and the corresponding vector of field measurements, respectively, we consider two competing models for the whole data set: for all $i\in\{1,\ldots,n\}$ the data follows either
\begin{align}\label{eq:2}
\MF_0: y_i &= g(x_i)\pmb{\theta} + \epsilon_i, \text{ or } \MF_1: y_i = g(x_i)\pmb{\theta} + \delta(x_i) + \epsilon_i,\quad\text{with } \epsilon_i\sim\mathcal{N}(0,\lambda^2).
\end{align}
The model $\MF_0$ treats the code as adequate up to measurement error, whereas $\MF_1$ accounts for possible model inadequacy by introducing a discrepancy function $\delta(X)$.
Assessing whether the discrepancy is needed to be added to the code, can thus be considered as a model-comparison problem between $\MF_0$ and $\MF_1$ \citep{GDMKPBAPEP2016}.
On the other hand, if we associate with each $y_i$ a latent variable allocation $\zeta_i\in\{0,1\}$, then including the discrepancy in the model can be represented as  
$
y_i = g(x_i)\pmb{\theta} + \zeta_i\,\delta(x_i) + \epsilon_i,
$
where $\zeta_i=0$ delivers $\MF_0$ and $\zeta_i=1$ issues the model $\MF_1$. The model selection procedure in \eqref{eq:2} is then translated to the testing hypothesis problem between $\mathcal{H}_0 : \forall i\in\{1,\ldots,n\},\zeta_i=0 $ and
$\mathcal{H}_1 :\forall i\in\{1,\ldots,n\}, \zeta_i=1$. Note that intermediate configurations are also permitted, quantified globally by $\alpha\in[0,1]$ and locally by the allocation rule of Section~\ref{sec:threshold}.

{\bf \large Mixture of $\MF_0$ and $\MF_1$.} Following \citet{KKKMCPRJR2014}, the models in \eqref{eq:2} are embedded into an encompassing mixture defined as: for all $i\in\{1,\ldots,n\}$
\begin{equation}\label{eq:1}
\MF_\alpha: y_i\sim \alpha \ell_{\MF_0}(\pmb{\theta}, \lambda^2;y_i, x_i)+(1-\alpha)\ell_{\MF_1}(\pmb{\theta}, \lambda^2, \delta;y_i, x_i), \quad \alpha\in[0,1]
\end{equation}
where $\ell_{\MF_0}$ and $\ell_{\MF_1}$ denote the following Gaussian densities 
\begin{align}\label{eq:7}
\ell_{\MF_0}(\pmb{\theta}, \lambda^2; y_i, x_i)
&=
\frac{1}{\sqrt{2\pi\lambda^2}}\exp\!\left(-\tfrac{1}{2\lambda^2}(y_i - g(x_i)\pmb{\theta})^2 \right),
\\
\ell_{\MF_1}(\pmb{\theta}, \lambda^2, \delta; y_i, x_i)
&=
\frac{1}{\sqrt{2\pi\lambda^2}}\exp\!\left(-\tfrac{1}{2\lambda^2}(y_i - g(x_i)\pmb{\theta} - \delta(x_i))^2 \right).\nonumber
\end{align}
By treating the discrepancy function $\delta(X)$ as an unknown quantity to be estimated jointly with the calibration parameter $\pmb{\theta}$ and the noise variance $\lambda^2$, and assigning a Gaussian-process prior to $\delta$, we obtain conditional independence of the observations $y_i$ and can perform standard Bayesian inference for the model \eqref{eq:1}.
\citet{KKKMCPRJR2014} demonstrated that, under suitable conditions, the posterior of $\alpha$ concentrates near 1 when the data are generated from $\MF_0$. In the mixture model \eqref{eq:1}, model inference is based on the posterior of $\alpha$ which governs the global competition between the two components : mass near $\alpha=1$ favours $\MF_0$, whereas mass near $\alpha=0$ favours $\MF_1$. 

{\bf \large Prior modeling.} We assign to the discrepancy vector $\delta(X) = (\delta(x_1),\ldots,\delta(x_n))^\top$ a zero-mean Gaussian process prior \citep{CERCKIW2006,kennedy2001} as follows
\begin{equation}\label{dltp}
\delta(X)\sim \mathcal{GP}(0, \Sigma_{\delta}), \quad \Sigma_{\delta}=\sigma_\delta^2\,\text{Corr}_{\gamma_\delta}(x_i, x_{i}'),
\end{equation}
where $\sigma_\delta^2>0$ is a scale parameter and $\text{Corr}_{\gamma_\delta}$ is a stationary correlation function with correlation length $\gamma_\delta$. In the simulation study section, we use the exponential correlation
$\text{Corr}_{\gamma_\delta}(x_i, x_{i}')
=
\exp\!\left(-||x_i-x_{i}'||_1/\gamma_\delta \right)$. 
The $x_i$s are re-scaled to $[0,1]^p$, so that $\gamma_\delta\in(0,1)$ governs the smoothness and effective range of dependence.
For the shared parameters $\pmb{\theta}$ and $\lambda^2$ in \eqref{eq:1}, we 
use the noninformative Jeffreys prior $\pi(\pmb{\theta}, \lambda^2)\,\propto\, \nicefrac{1}{(\lambda^2)^{d/2+1}}$
consisting of the flat distribution on $\pmb{\theta}$ and the scale-invariant density on $\lambda^2$ \citep{OJBVDBS2001}. For brevity, we
postpone to the appendix \ref{sup:propriety} the proof that the posterior is proper.
The discrepancy scale parameter is reparameterized as $\sigma_\delta^2 =\nicefrac{\lambda^2}{k}$, so that the parameter $k\in(\underline{k},1)$  (where $\underline{k}>0$) controls the ratio between measurement-noise and discrepancy scale parameter.
Note that the constraint $\sigma_\delta^2 > \lambda^2$ avoids degenerate cases in which the discrepancy is 
negligible.
We therefore consider $k \sim \mathcal{U}(\underline{k}, 1)$ with $\underline{k}=0.015$, a symmetric Beta prior on the mixture weight $\alpha \sim \mathcal{B}(a_0, a_0)$, and a weakly informative prior on the correlation length $\gamma_\delta \sim \mathcal{U}(0.1,1)$. Note that the lower bound in $\mathcal{U}(0.1,1)$ excludes extremely small correlation lengths for which the discrepancy becomes indistinguishable from white noise.

{\bf \large Identifiability.} A fundamental difficulty in calibration is the confounding between the calibration parameters $\pmb{\theta}$ and the discrepancy $\delta(X)$: the two can compensate for each other, since only their sum $g(x)\pmb{\theta}+\delta(x)$ is informed by the data. A possible remedy is the orthogonal Gaussian-process (OGP) construction of \citet{plumlee2017}, which enforces $\int_{\mathcal{X}} g(\xi)\,\delta(\xi)\,d\xi = 0$ and thereby restricts the discrepancy to the subspace orthogonal to the code response, removing this particular source of confounding. We implemented this construction and compared it with the classical GP prior; the details and the corresponding numerical results are reported in the Supplementary Material.
In our experiments, the orthogonal prior leads to posterior results very similar to those obtained with the classical Gaussian-process prior, in particular for the mixture weight $\alpha$. This does not imply that the $\pmb{\theta}$--$\delta(X)$ confounding has been fully removed. Rather, it suggests that, in our examples, enforcing orthogonality alone does not substantially alter the posterior allocation mechanism or the behaviour of $\alpha$. The remaining inferential difficulties may therefore involve both residual calibration--discrepancy confounding and, when the correlation length is very small, an additional ambiguity between a rough discrepancy and the measurement error.

{\bf \large Parameter posterior distributions.} As the joint posterior distribution of $(\delta,\pmb{\theta},\lambda^2,\alpha,k,\gamma_{\delta})$ has no closed form, we proceed by working with the corresponding full conditional distributions. Recall that $\zeta_i\in\{0,1\}$ denotes a latent variables indicating which mixture component in \eqref{eq:1} the observation $y_i$ belongs to, then the mixture model \eqref{eq:1} leads to the conditional distribution of $\zeta_i$ given by
\begin{equation}\label{zeta}
\P(\zeta_i=j\mid\cdot)\propto \left(\alpha\,f_{\MF_0}(y_i\mid x_i,\pmb{\theta},\lambda^2)\right)^{1-j}\left((1-\alpha)\,f_{\MF_1}(y_i\mid x_i,\delta,\pmb{\theta},\lambda^2,k,\gamma_\delta)\right)^{j}.
\end{equation}
Let $m=\sum_{i=1}^n \zeta_i$ be the number of observations allocated to $\MF_1$, with corresponding inputs $\pmb{x_m}=(x_{i_1},\ldots,x_{i_m})^\top$ and the $m-$length vector of observations $\pmb{y_m}$; similarly denote the remaining $n-m$ indices by $\pmb{x_{n-m}},\pmb{y_{n-m}}$. Under the GP prior \eqref{dltp}, the joint distribution of $(\delta,\pmb{y_m})$ 
is indeed Gaussian (see e.g.\ \citealp{CERCKIW2006}) and the conditional distribution of $\delta$ can be derived as follows
\begin{align}\label{musigdlt}
\delta \mid \pmb{y}, \zeta, \pmb{\theta}, \lambda^2, k,& \gamma_\delta, \alpha \sim \mathcal{N}_n(\hat{\mu}_\delta, \hat{\Sigma}_\delta),\nonumber\\
\hat{\mu}_\delta
= \Sigma_{\pmb{\delta},\pmb{y_m}}\Sigma_{\pmb{y_m},\pmb{y_m}}^{-1}\bigl[\pmb{y_m} - g(\pmb{x_m})\pmb{\theta}\bigr],& \qquad
\hat{\Sigma}_\delta
= \Sigma_{\pmb{\delta},\pmb{\delta}} - \Sigma_{\pmb{\delta},\pmb{y_m}}\Sigma_{\pmb{y_m},\pmb{y_m}}^{-1}\Sigma_{\pmb{y_m},\pmb{\delta}},
\end{align}
where $[\Sigma_{\pmb{\delta},\pmb{\delta}}]_{k,k'}=\sigma_\delta^2\,\text{Corr}_\delta(x_k,x_{k'})$, $[\Sigma_{\pmb{\delta},\pmb{y_m}}]_{k,i_l}=\sigma_\delta^2\,\text{Corr}_\delta(x_k,x_{i_l})$, and $[\Sigma_{\pmb{y_m},\pmb{y_m}}]_{i_l,i_{l'}}=\lambda^2\,\mathds{I}_{l=l'}+\sigma_\delta^2\,\text{Corr}_\delta(x_{i_l},x_{i_{l'}})$, for all $k,k'=1,\ldots,n$; $l,l'=1, \ldots, m$. 
The conditional posterior of the parameters $\pmb{\theta},\lambda^2, \alpha$, are given as
\begin{align}\label{eq:full-conditionals-main}
\pmb{\theta}\mid\pmb{y},\zeta,\delta,\lambda^2&,k,\alpha,\gamma_\delta
\sim \mathcal{N}_d(\hat{\mu}_{\pmb{\theta}}, \hat{\Sigma}_{\pmb{\theta}}),\nonumber\\
\hat{\Sigma}_{\pmb{\theta}}=\lambda^2 A^{-1}; \quad \hat{\mu}_{\pmb{\theta}}=A^{-1}\!&\left[g^\top(\pmb{x}_{n-m})\,\pmb{y}_{n-m}+g^\top(\pmb{x}_{m})\bigl(\pmb{y}_{m}-\delta(\pmb{x}_{m})\bigr)\right]\nonumber\\
A=g^\top(\pmb{x}_{n-m})&g(\pmb{x}_{n-m})+g^\top(\pmb{x}_{m})g(\pmb{x}_{m})\nonumber\nonumber\\
\lambda^2\mid\pmb{y},\zeta,\delta,\pmb{\theta}&,k,\alpha,\gamma_\delta
\sim \mathcal{IG}(\hat{a}_\lambda, \hat{b}_\lambda),\nonumber\nonumber\\
\hat{a}_\lambda = n+\frac{d}{2},\quad \hat{b}_\lambda
=
\frac12\sum_{i:\zeta_i=0}\bigl(y_i-g(x_i)\pmb{\theta}\bigr)^2
&+
\frac12\sum_{i:\zeta_i=1}\bigl(y_i-g(x_i)\pmb{\theta}-\delta(x_i)\bigr)^2
+
\frac{k}{2}\,
\delta(\pmb{x})^\top
\mathrm{Corr}_{\pmb{x}}^{-1}
\delta(\pmb{x})\nonumber\\
\alpha\mid\pmb{y},\zeta,\delta,\pmb{\theta},\lambda^2,k,\gamma_\delta
&\sim \mathcal{B}\mathrm{eta}(n-m+a_0,\, m+a_0),
\end{align}
and for the variance-ratio parameter $k$, the full conditional posterior is 
\begin{equation}\label{kpost}
\pi(k\mid\cdot)\propto
k^{n/2}\exp\left(
-\frac{1}{2\lambda^2}
k\,
\delta(\pmb{x})^\top
\mathrm{Corr}_{\pmb{x}}^{-1}
\delta(\pmb{x})
\right)\mathds{I}_{(\underline{k},1)}(k),
\end{equation}
that is, a Gamma distribution truncated to $(\underline{k},1)$.
The full conditional of $\gamma_\delta$ is not available in closed form and is therefore updated by a random-walk Metropolis--Hastings step, with proposal scale tuned to obtain a reasonable acceptance rate.

\section{MCMC algorithm}\label{sec:algo}         
The parameter space of the mixture model $\MF_\alpha$, has dimension $d+2n+4$, corresponding to $\pmb{\theta}$, $\delta(X)$,$(\zeta_j)_{1\le j\le n}$, $\lambda^2$, $\alpha$, $k$,$\gamma_\delta$. 
Since all full conditional distributions are not available in closed form, we therefore adopt a Metropolis-within-Gibbs sampler, see Algorithm~\ref{algo:mwg}, that updates blocks of parameters sequentially, using closed-form full conditionals where available and Metropolis--Hastings steps otherwise.

\begin{algorithm}[H]
\DontPrintSemicolon
\textbf{Initialization:} Choose initial values $(\delta^{(0)}, \pmb{\theta}^{(0)}, \lambda^{(0)}, \alpha^{(0)}, k^{(0)}, \gamma_\delta^{(0)})$.\\
\For{$t=1,\ldots,T$}{
  \textbf{(a)} Sample 
  $\pmb{\zeta}^{(t)}=(\zeta_1^{(t)},\ldots,\zeta_n^{(t)})$
  ~according to $ \Pr\!\left(\zeta_i=j \mid y_i,x_i,\delta^{(t-1)},\pmb{\theta}^{(t-1)},(\lambda^2)^{(t-1)},k^{(t-1)},\gamma_\delta^{(t-1)},\alpha^{(t-1)}\right),
    \qquad j\in\{0,1\}$.\;
  \textbf{(b)} Sample 
  $\delta^{(t)}|\pmb{y}, X,\pmb{\zeta}^{(t)},\pmb{\theta}^{(t-1)},(\lambda^2)^{(t-1)},k^{(t-1)},\gamma_\delta^{(t-1)} \sim
  \mathcal{N}_n\!\left(\hat{\mu}_\delta,\hat{\Sigma}_\delta\right)$.\;

  \textbf{(c)} Sample
  $\pmb{\theta}^{(t)}|\pmb{y}, X,\pmb{\zeta}^{(t)},\delta^{(t)},(\lambda^2)^{(t-1)},k^{(t-1)},\alpha^{(t-1)} \sim \mathcal{N}_d(\hat{\mu}_{\pmb{\theta}},\hat{\Sigma}_{\pmb{\theta}})$.\;

  \textbf{(d)} Sample
  $(\lambda^2)^{(t)}|\pmb{y}, X,\pmb{\zeta}^{(t)},\delta^{(t)},\pmb{\theta}^{(t)},k^{(t-1)},\alpha^{(t-1)},\gamma_\delta^{(t-1)} \sim \mathcal{IG}(\hat{a}_\lambda,\hat{b}_\lambda)$
  .\;

  \textbf{(e)} Sample $\alpha^{(t)}|\pmb{y}, X,\pmb{\zeta}^{(t)},\delta^{(t)},(\lambda^2)^{(t)},\pmb{\theta}^{(t)},k^{(t-1)},\gamma_\delta^{(t-1)} \sim \mathcal{B}(n-m^{(t)}+a_0; m^{(t)}+a_0)$
  .\;

  \textbf{(f)} Sample $k^{(t)}|\pmb{y}, X,\pmb{\zeta}^{(t)},\delta^{(t)},\pmb{\theta}^{(t)},(\lambda^2)^{(t)},\alpha^{(t)},\gamma_\delta^{(t-1)}$ from its full conditional distribution given in \eqref{kpost}
    .\;

  \textbf{(g)} Update $\gamma_\delta^{(t)}|\pmb{y}, X,\zeta^{(t)},\delta^{(t)},\pmb{\theta}^{(t)},(\lambda^2)^{(t)},\alpha^{(t)},k^{(t)}$ by a random-walk Metropolis--Hastings step 
  .\;
}
\caption{Metropolis-within-Gibbs algorithm}
\label{algo:mwg}
\end{algorithm}


We note here that when the competing models are well separated, the mixture approach ensures that the posterior distribution of the mixture weight $\alpha$ concentrates near 0 or 1, consistently selecting the true data-generating model \citep{KKKMCPRJR2014}. However, in the case of embedded models, where one model is a special case of the other, this separation becomes less clear. In particular, in our setting, $\MF_0$ is nested within $\MF_1$, corresponding to the special case where $\delta(X)=0$. As a result, the two mixture components are not well separated, and the larger model can approximate the smaller one arbitrarily well, leading to weak identifiability and confounding between components. Consequently, the posterior distribution of $\alpha$ may fail to concentrate at the extremes and instead exhibit boundary behavior or slower convergence, depending on the prior specification (see Section~\ref{sec:case}). This difficulty is therefore intrinsic to the model structure rather than an artifact of the inference algorithm, and should not be confused with label-switching issues encountered in symmetric mixture models. On the other hand, in many code calibration problems, the discrepancy may only affect a subset of the observations, for instance over a restricted region of the response domain. In such situations, the mixture weight $\alpha$ may fail to reflect the local nature of the discrepancy, as it aggregates information over the entire dataset. In such cases, $\alpha$ may again remain away from the extremes $\{0,1\}$ even when a local discrepancy is present. To make this local structure explicit, we introduce below a simple thresholding device in Algorithm \ref{algo:mwg} that forces the component $\MF_0$ to be selected whenever the discrepancy is too small to be practically distinguishable from measurement noise.

\subsection{Threshold-based allocation mechanism}\label{sec:threshold}
Conditional on the discrepancy vector $\delta(X)$, we define the following threshold indicator $\tau_i(s)=\mathds{I}\{|\delta(x_i)|>s\},$
where $s>0$ is a user-chosen threshold. We then replace the standard allocation prior $\P(\zeta_i=1\mid \alpha)=1-\alpha$ by 
\begin{equation*}
\P(\zeta_i=1\mid \delta,\alpha,s) = (1-\alpha)\,\tau_i(s),\qquad\P(\zeta_i=0\mid \delta,\alpha,s) = 1-(1-\alpha)\,\tau_i(s).
\end{equation*}
Hence, if $|\delta(x_i)|\le s$ the allocation is deterministic ($\zeta_i=0$ almost surely), whereas if $|\delta(x_i)|>s$ the usual mixture mechanism in \eqref{eq:1} applies and
so the threshold acts only through the allocation mechanism. 
This mechanism is implemented in the allocation step (a) of the algorithm \ref{algo:mwg} by setting the posterior probability of $\zeta_i=1$ to zero whenever $|\delta(x_i)|\le s$, while keeping the usual expression otherwise. 
This threshold has to be set with respect to the order of magnitude of the data at hand and a prior belief on the measurement noise.
Moreover, to avoid letting the deterministic region dominate inference on $\alpha$, the Beta update of $\alpha$ is computed using only the latent allocations $\zeta_i$ such that $\tau_i(s)=1$. Observations satisfying $|\delta(x_i)|\le s$ are excluded from the sufficient statistics entering the full conditional distribution of $\alpha$.
This preserves the interpretation of $\alpha$ as a global weight governing the competition between $\MF_0$ and $\MF_1$ \emph{where the discrepancy is potentially active}, while the sequence $\{\P(\zeta_i=1\mid Y)\}_{i=1}^n$ provides the desired point-wise information. 
Our thresholding mechanism is related in spirit to the latent
Gaussian-process approach for hidden simulator constraints proposed by \citep{MENZ2025102607}. However, in our method, the threshold is applied to the model discrepancy rather than to a latent process representing simulator failure.
In practice, this modification provides a local diagnostic: posterior summaries of $\P(\zeta_i=1\mid Y)$ highlight the regions where the discrepancy-corrected component is effectively required, even when $\alpha$ (as a global quantity) is less decisive.

\section{Numerical experiments}\label{sec:case}
In this section, we consider a collection of real ball-drop experiments \citep{de2020discovery}, in which the vertical trajectory of a falling ball is recorded using a high-speed camera. 
 Multiple ball types (Baseball, Blue Basketball and $\cdots$) have been used each with uniformly sampled time--height measurements on an interval $[t_0,t_n]$. The time variable will be re-scaled to $[0,1]$ to improve numerical conditioning and to match the domain of the GP prior. Under the assumption of idealized free fall without air resistance with zero initial velocity, the physical model reads
  $ f(t,\theta)
   \,=\,g(t) \theta=
   \begin{pmatrix}
   1 & -\tfrac{1}{2} t^{2}
   \end{pmatrix}
   \begin{pmatrix}
   h_0 \\ g_e
  \end{pmatrix},$
where $h_0$ and $g_e$ are the initial height and the gravitational acceleration, respectively. Real-world imperfections such as air drag, measurement noise, ball-specific aerodynamics, and small timing offsets, produce mild but meaningful departures from the idealized physics, making this collection well suited to assess our mixture-based approach. Before turning to the real data application, we first assess the performance of the proposed approach on synthetic datasets that closely resemble the structure and variability of the real data. This enables us to evaluate the ability of the model to recover known parameters in a controlled yet realistic setting, and to gain insight into its accuracy and robustness before applying it to real observations.

\subsection{Synthetic data}\label{sec:Syntheticdata}
We suppose that $\pmb{\theta}^*=(46.45, 9.8)^\top$, ${\lambda^2}^*=0.01$, $k^*=0.1$, and in the following, we proceed with some datasets simulated once from the model $\MF_0$, then from $\MF_1$ and finally from a mixture of both $\MF_0$ and $\MF_1$. Regarding the implementation of the algorithm \ref{algo:mwg}, we note that for each dataset, the posterior inference is based on 10,000 MCMC iterations, with the first 2,000 iterations discarded as burn-in.\\
\textbf{Data simulation under $\MF_0$:}
    For 50 datasets of size $n=45$ simulated from $\MF_0: y_i=g(x_i)\pmb{\theta}^*+\epsilon_i$ 
. Figure~\ref{fig:theta_alpha-classic} shows that the posterior means of $\pmb{\theta}$ concentrate around the true values used for simulating datasets which ensure the accuracy of the estimation procedure. 
\begin{figure}[h!]
    \centering
    \includegraphics[width=\textwidth, height=0.4\textheight]{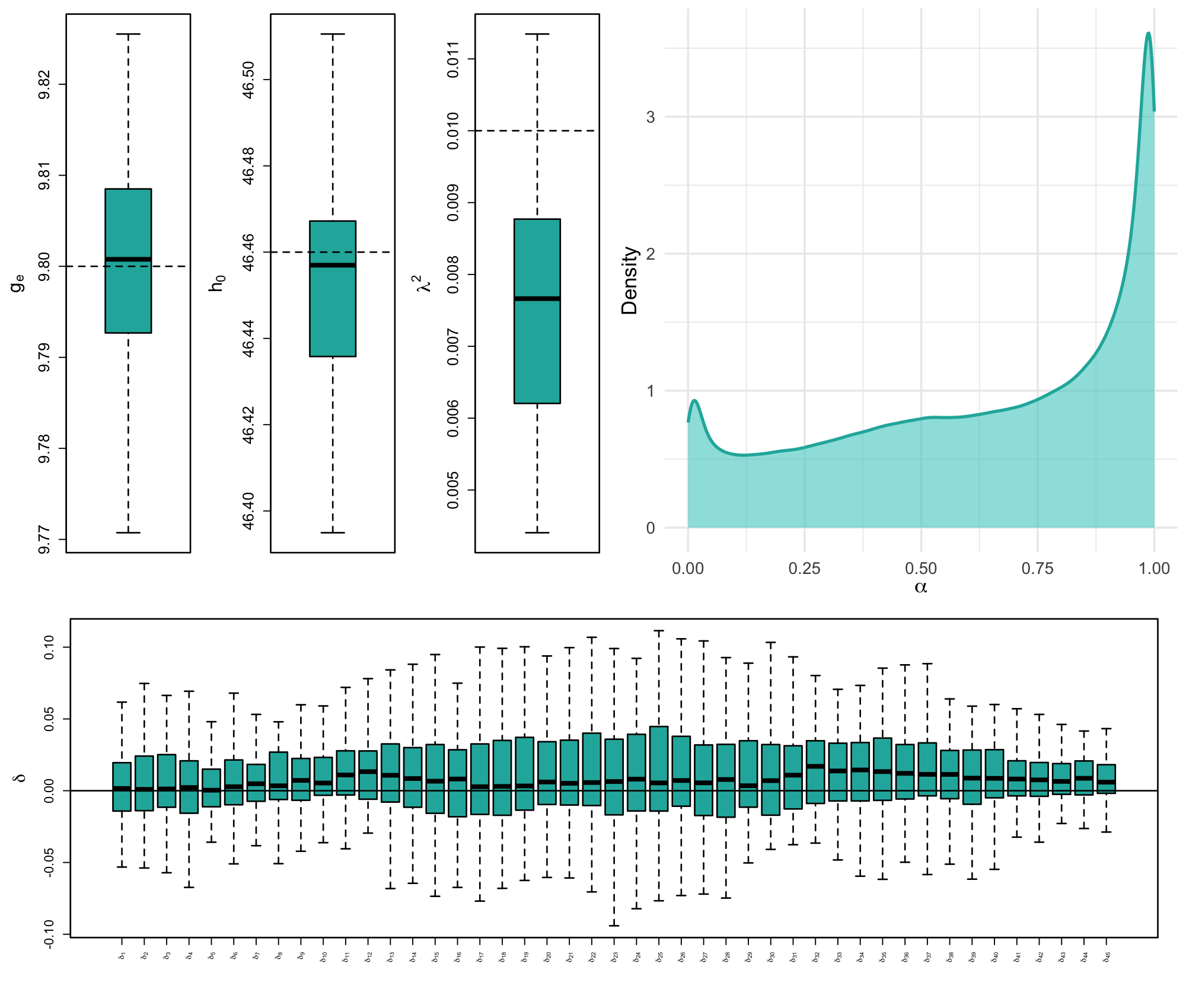}
    \caption{\small 50 data simulations under $\MF_0$. (Top Left) Distributions of the parameter posterior means compared to the true values (horizontal dotted lines). (Top Right) Pooled posterior densities of $\alpha$ across the 50 datasets. (Bottom) Distributions of the posterior means of $\delta$ computed across the 50 datasets.}
    \label{fig:theta_alpha-classic}
\end{figure}
Regarding the pooled posterior densities of $\alpha$, we observe a high concentration near 1, correctly favouring the no-discrepancy model. However, the bimodal behavior can be explained by the embedded nature of the competing models as we mentioned before. In other words, since $\MF_0$ is nested within 
$\MF_1$ with posterior $\delta$ values centered on zero as illustrated on the bottom plot of Figure~\ref{fig:theta_alpha-classic}, the model $\MF_1$ approximates $\MF_0$ arbitrarily well. Although the data are generated from $\MF_0$, the mixture formulation allows two competing explanations: either selecting 
$\MF_0$ directly (leading to $\alpha$ close to 1), or favoring 
$\MF_1$ with a discrepancy term $\delta$ close to zero. This lack of identifiability between the two components results in a posterior distribution of $\alpha$ that does not fully concentrate at 1, but instead exhibits a slight bimodality. Additional diagnostics for this experiment such as a posterior-predictive check against the simulated trajectories and the empirical coverage of $95\%$ credible intervals for $(g_e,h_0,\lambda^2)$ across the $50$ replicated datasets are reported in Section \ref{sup:m0_extra} 
~of the appendix.
On the other hand, the posterior means of $\lambda^2$ exhibit a slight downward bias relative to the true value, though small compared to the across-dataset variability—a behavior consistent with the well-known difficulty of estimating variance components in GP discrepancy models, where part of the residual variance is absorbed by $\delta$ even when its posterior concentrates near zero. This bias does not propagate to the calibration parameters or to the global model selection through $\alpha$.
Two complementary refinements of this baseline experiment are reported in Section \ref{sup:m0_sensitivity} of the appendix. First, applying the threshold-based allocation mechanism 
attenuates the residual bimodality of $\alpha$ by preventing negligible posterior discrepancy values from being assigned to $\MF_1$, thereby producing a sharper concentration around $\alpha=1$. Second, increasing the sample size to $n=100$ 
further reduces the across-dataset variability of the posterior means of $(g_e,h_0,\lambda^2)$ and tightens the pooled posterior of $\alpha$, in line with standard Bayesian asymptotic behavior. Combining both refinements yields more concentrated estimates and confirms the robustness of the proposed approach as the amount of information grows.

\textbf{Data simulation under $\MF_1$ :}
Since the correlation length $\gamma_\delta$ directly controls the smoothness of the Gaussian process discrepancy, its value can substantially affect the model posterior inference. In general, for inputs normalized to $[0,1]$, small values of $\gamma_\delta$ (typically $\gamma_\delta \lesssim 0.1$) generate highly irregular discrepancy realizations that may become nearly indistinguishable from additional white noise, thereby inducing confounding with the measurement noises. Intermediate values ($\gamma_\delta \in [0.2,0.5]$) generally produce smoother and more structured discrepancy functions, allowing meaningful departures from the mean model while remaining sufficiently identifiable. In contrast, larger values of $\gamma_\delta$ (typically $\gamma_\delta \gtrsim 0.5$) yield increasingly smooth discrepancy functions varying over broader regions of the input domain. When $\gamma_\delta$ approaches $1$, the discrepancy may behave as a slowly varying trend, which can in turn become difficult to distinguish from the mean structure $g(x)\pmb{\theta}$, leading to additional identifiability issues with the calibration parameters. Investigating a range of values of $\gamma_\delta$ is therefore important to assess the robustness of the proposed approach across different discrepancy regimes. We therefore study how the mixture model behaves when data are generated from the discrepancy-corrected model $\mathcal{M}_1$, with particular emphasis on the role of the discrepancy correlation length $\gamma_\delta$ in the posterior inference. 
For each $\gamma_\delta^\ast \in \{0.01,0.1,0.2,\ldots,0.9\}$, we generate $50$ datasets of size $n=45$ from the model $\MF_1$
~on the normalized time instants of the ball-drop experiment.
\begin{figure}[ht]
    \centering
    \includegraphics[width=0.9\textwidth]{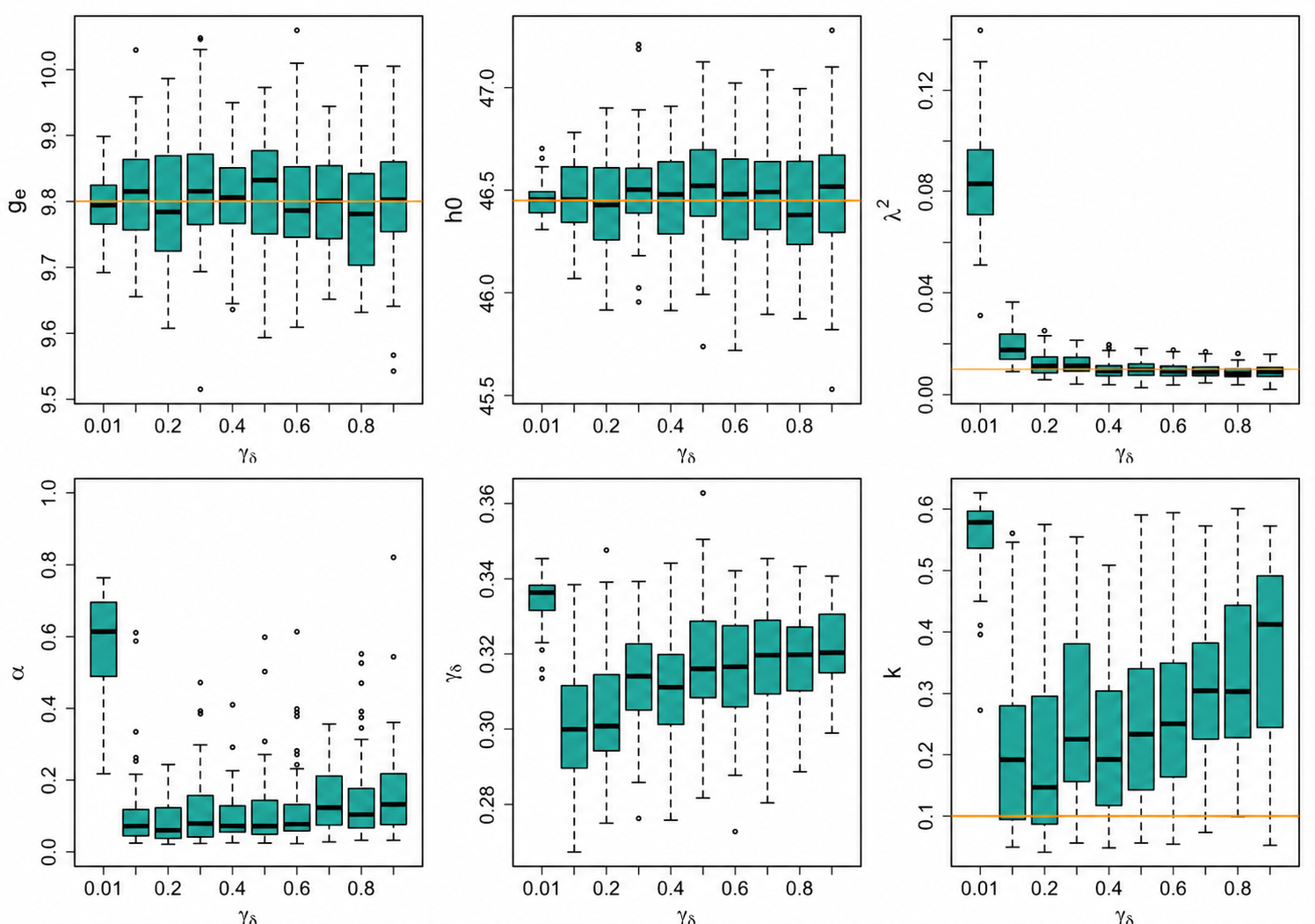}
    \caption{\small 50 data simulations under $\MF_1$. Each boxplot represents the posterior mean distribution obtained across the data replications and compared to the true values (horizontal orange lines).}
    \label{fig:m1-classicalGP}
\end{figure}
Figure~\ref{fig:m1-classicalGP} displays that the posterior means of $\alpha$ (bottom-left) concentrate near small values for all $\gamma_\delta^\ast \ge 0.1$, correctly favouring the discrepancy-corrected component. When $\gamma_\delta^\ast=0.01$, however, $\alpha$ shifts upward and becomes much more dispersed. This is because the discrepancy arising from a Gaussian process with a very small correlation length behaves like additional white noise, making it difficult to distinguish from the measurement error and leading again to identifiability issues. The calibration parameters $(h_0,g_e)$ are recovered well across the whole range. By contrast, the variance-related quantities $(\lambda^2,k,\gamma_\delta)$ vary more markedly, reflecting the usual trade-off in GP discrepancy models between amplitude and correlation length. Despite this partial non-identifiability, the posterior of $\alpha$ remains stable and continues to discriminate correctly.

\textbf{Simulation under a mixture of $\MF_0$ and $\MF_1$.}
\label{sec:sim_m1_threshold}
We now illustrate the threshold-based local diagnostic introduced in Section~\ref{sec:threshold} on a design in which the discrepancy is not present over a first part of the dataset ($i=1, \ldots, n_0$) and it is added afterwards. We keep the same observation grid as previous data simulations but modify first the simulation of the discrepancy $\delta^\ast$ from \eqref{dltp} by considering $k^*=0.01$, $\gamma^*_\delta=0.5$,
fixing then $n_0=20$ and simulate datasets from $
y_i = g(x_i)\pmb{\theta}^* + \mathds{I}\{i>n_0\}\,\delta^*(x_i) + \epsilon_i; i=1,\dots,n
$.
~To obtain a more interpretable thresholding experiment, we retain only simulated discrepancy paths such that, on the active region $i>n_0$, the discrepancy is sign-coherent and positive $\delta^\ast(x_i)\ge 0; \forall i>n_0$. We do not impose any minimum amplitude on the active values: some of them may fall below the threshold level $s=0.3$ and others may lie above it. This yields a design in which the first part of the trajectory is locally compatible with $\MF_0$, while the second part contains a positive discrepancy of mixed amplitude across the threshold, providing a realistic test bed for the threshold-based local diagnostic of Section~\ref{sec:threshold}.
We evaluate different configurations combining two aspects: fixing $g_e=g_e^\ast$ versus updating $g_e$ within the MCMC algorithm, and using thresholding versus no thresholding. The cases with fixed $g_e$ are reported in the main text, while those with updated $g_e$ are provided in the Supplementary Material.
We note here that by fixing $g_e=g_e^\ast$, we aim at eliminating uncertainty in the physical model component and therefore isolate the effect of the mixture allocation mechanism from calibration--discrepancy confounding.
In the case of Bayesian inference without thresholding, the latent allocations are driven by the global mixture mechanism through $\alpha$. 

\begin{figure}[h!]
        \centering
        \includegraphics[width=0.5\textwidth]{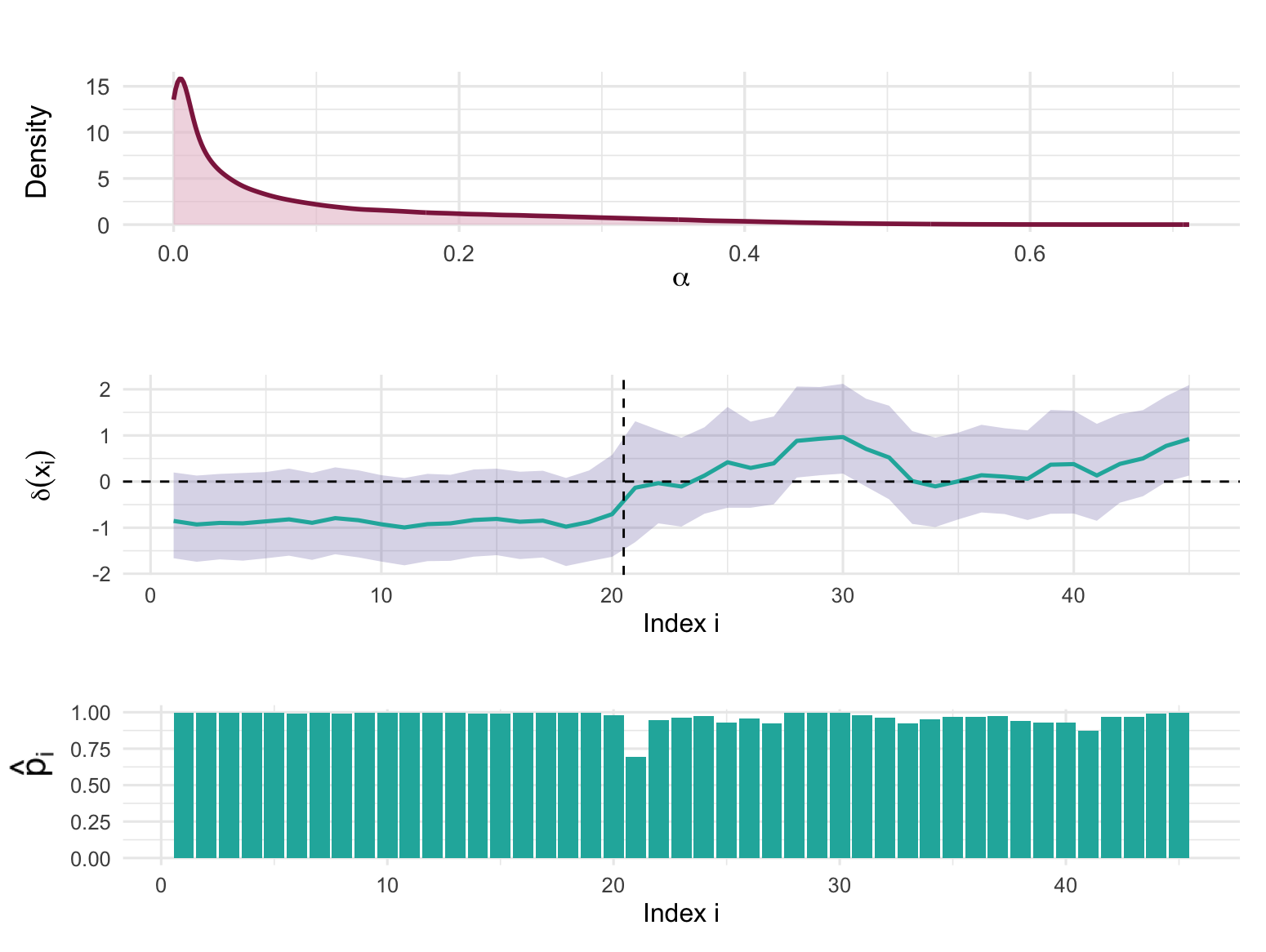}\includegraphics[width=0.5\textwidth]{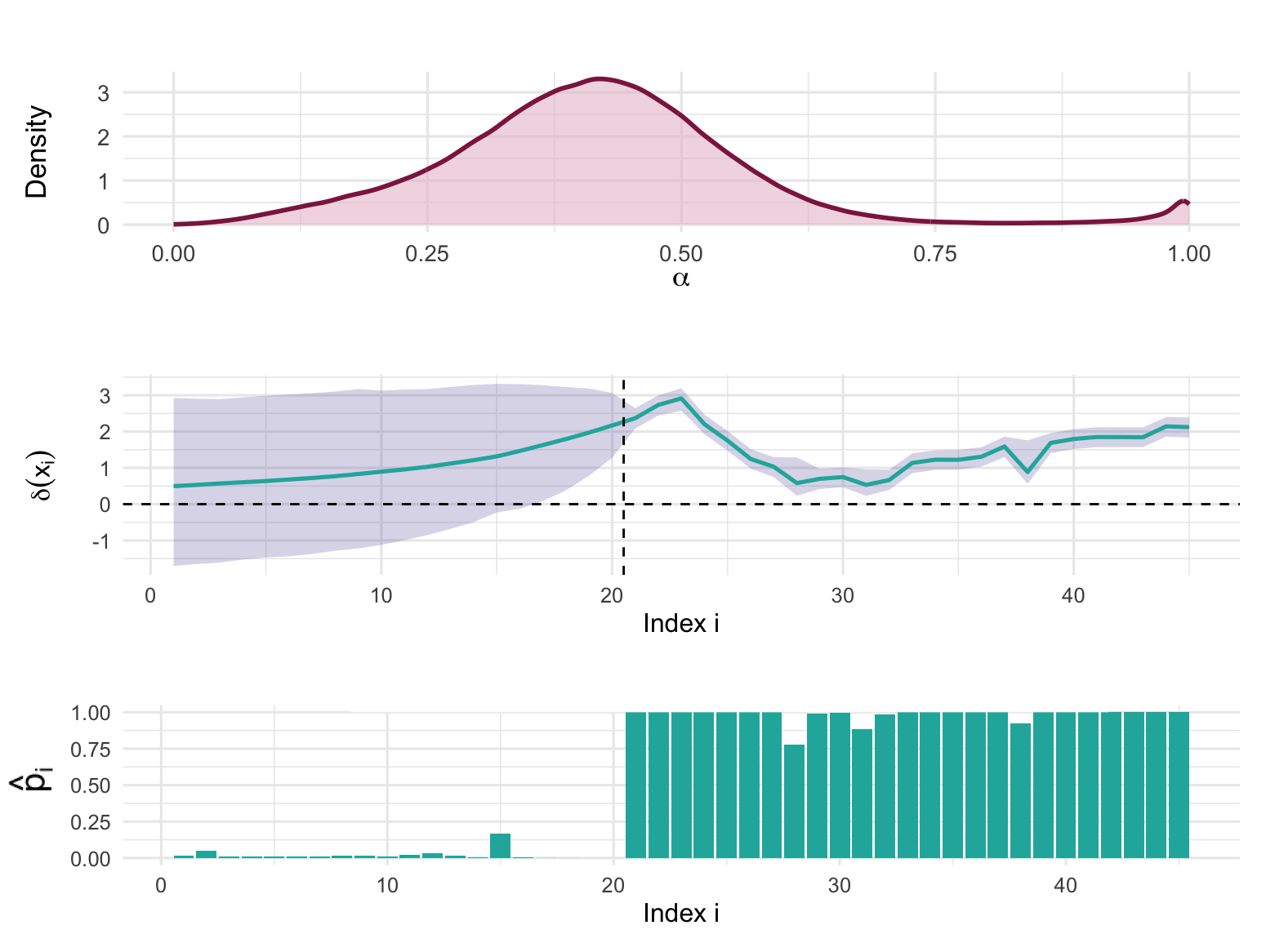}
        \caption{\small Bayesian inference : (Left panel) without thresholding and (Right panel) with thresholding, when $g_e$ is fixed during MCMC implementation but $h_0,\delta,\lambda^2,\alpha$ are estimated. (Top) Pooled posterior density of $\alpha$. (Middle) Posterior means of $\delta(x_i)$ with $95\%$ credible bands plotted for one representative dataset. The vertical dashed line marks the true activation boundary at $i=n_0+0.5$. (Bottom) Pointwise posterior inclusion probabilities $\hat p_i$.}
        \label{fig:seuil_senario2_4}
\end{figure}
Figure~\ref{fig:seuil_senario2_4} (left panel) shows that fixing $g_e$ alone is insufficient to produce a selective local diagnostic. The middle plots of the left panel illustrate the posterior mean of $\delta(x_i)$ (solid line) lying between approximately $-1.5$ and $2$ across the whole input domain, with $95\%$ pointwise credible bands of width roughly $\pm 2$. Note that, in this configuration, the discrepancy is present \emph{only} on the active region $i > n_0$ (vertical dashed line); on the inactive region $i \le n_0$, the true discrepancy is exactly zero. Despite this, the posterior of $\delta(x_i)$ remains markedly positive on the inactive region as well, which is a first indication that the unthresholded mixture allocation does not localize the discrepancy. The credible band shows a mild contraction as $i$ enters the active region: as we move into $i > n_0$, the observations $y_i$ start carrying direct information about $\delta(x_i)$, and the GP correlation lets the posterior at neighboring sites borrow strength from this informative region, so the band tightens. 
For a representative replicated dataset, we examine pointwise posterior ``inclusion" probabilities that we define as $\hat p_i = \P\bigl(\zeta_i = 1 \mid \pmb{y}, X, \pmb{\theta}, \delta, \lambda^2, k, \gamma_\delta, \alpha\bigr); i=1,\dots,n$. The inclusion probabilities are not available in closed-form and are therefore estimated from the MCMC outputs produced by Algorithm~\ref{algo:mwg} (illustrated on the bottom of Figure~\ref{fig:seuil_senario2_4}) using
\begin{align*}
    \hat p_i 
    \;=\;\frac{1}{T}\sum_{t=1}^{T}\mathds{I}_{\{\zeta_i^{(t)}=1\}}.
\end{align*}

where $T$ is the number of the MCMC iterations retained after discarding the burn-in period.
Averaging over the retained iterations therefore yields a sample-based estimate of the posterior probability that observation $i$ comes from the discrepancy-corrected component. In the bottom plot of 
In Figure~\ref{fig:seuil_senario2_4} (bottom-left panel), the pointwise inclusion probabilities $\hat p_i$ are almost equal to $1$ across the entire domain, including the discrepancy-inactive region, and the global mixture weight $\alpha$ concentrates near zero, so the mixture globally favors the discrepancy-corrected component. These results show that eliminating the $g_e$--$\delta$ confounding alone does not enable accurate localization of the discrepancy: the unthresholded mixture allocation continues to favor the discrepancy-corrected model everywhere, including where no discrepancy is present. This highlights the need for a threshold-based local diagnostic.

We therefore applied the thresholded allocations described in Section~\ref{sec:threshold} and Figure~\ref{fig:seuil_senario2_4} (right-panel) displays that the pointwise inclusion probabilities $\hat p_i$ stay close to zero before the true change-point and jump sharply to values close to one afterwards, in agreement with the data-generating design. The posterior mean of $\delta(x_i)$ increases substantially after the change-point, with much tighter credible bands in the active region. The posterior of $\alpha$ itself is centred at an intermediate value rather than at an extreme, confirming that $\alpha$ is best interpreted as a global summary of the dataset while local interpretation is carried by the $\hat p_i$'s. For the same simulated datasets, we also considered two other cases where $g_e$ is unknown and estimated and Algorithm \ref{algo:mwg} is implemented twice with and without thresholded allocations. 
Without thresholding, the standard mixture model tends to use the discrepancy-corrected component almost everywhere once it is globally preferred. While applying thresholded allocations leaves the pointwise allocation pattern weakly identified. Comparison with the case where $g_e=g_e^\ast$, this lack of identifiability is genuine confounding between the physical calibration and the discrepancy, not a failure of the thresholding device: once this confounding is reduced, the thresholded allocation map localizes the active region very sharply. Details and figures are reported in the appendix.

\subsection{Application to real datasets}
We now apply the mixture framework to the real ball-drop data, focusing on the  \emph{Blue Basketball} experiment. Results for all eleven balls are provided in the 
\href{https://github.com/negar-soleimani/estimate_mixture_models}{code repository}.

The normalized time domain is used for a stable inference; the initial height $h_0$ is inferred from the data, and, in the results shown below, we fix $g_e$ at its nominal value $9.8\,\mathrm{m/s^2}$ to reduce the well-known confounding between the physical calibration and the discrepancy term (Results with $g_e$ left free are qualitatively similar and are provided in the appendix). We compare two configurations: the Bayesian mixture model with the standard (unthresholded) allocation mechanism defined in Section \ref{sec-meth}, and the thresholded mixture of Section~\ref{sec:threshold} with $s=0.3$.
The two posterior summaries are displayed side by side in Figure~\ref{fig:real_blue_basketball}: panel~(a) shows the unthresholded fit, panel~(b) the thresholded one.
\begin{figure}[h!]
    \centering
    \begin{subfigure}[t]{0.49\textwidth}
        \centering
        \includegraphics[width=\linewidth]{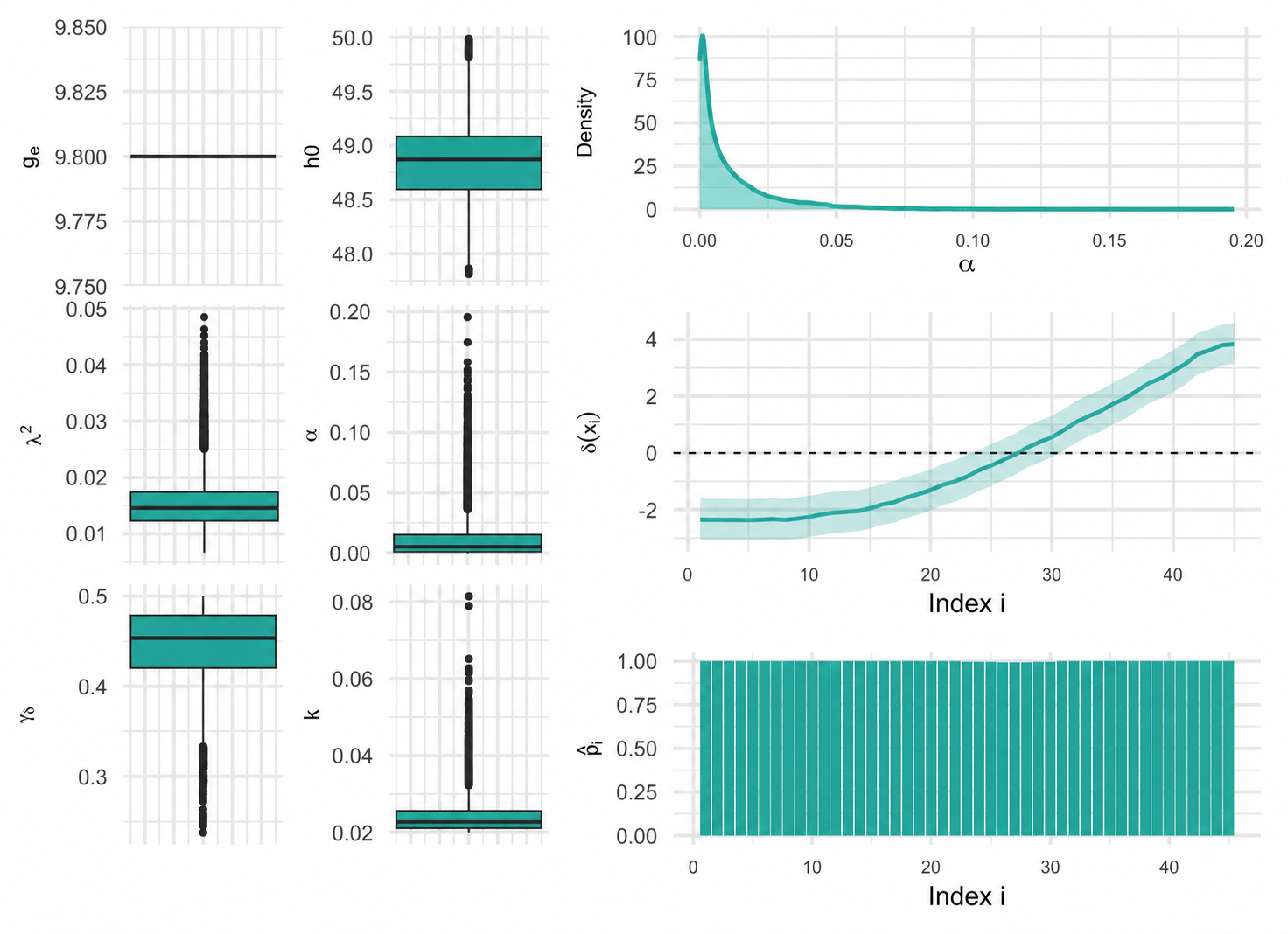}
        \caption{Unthresholded mixture.}
        \label{fig:real_blue_basketball_a}
    \end{subfigure}\hfill
    \begin{subfigure}[t]{0.49\textwidth}
        \centering
        \includegraphics[width=\linewidth]{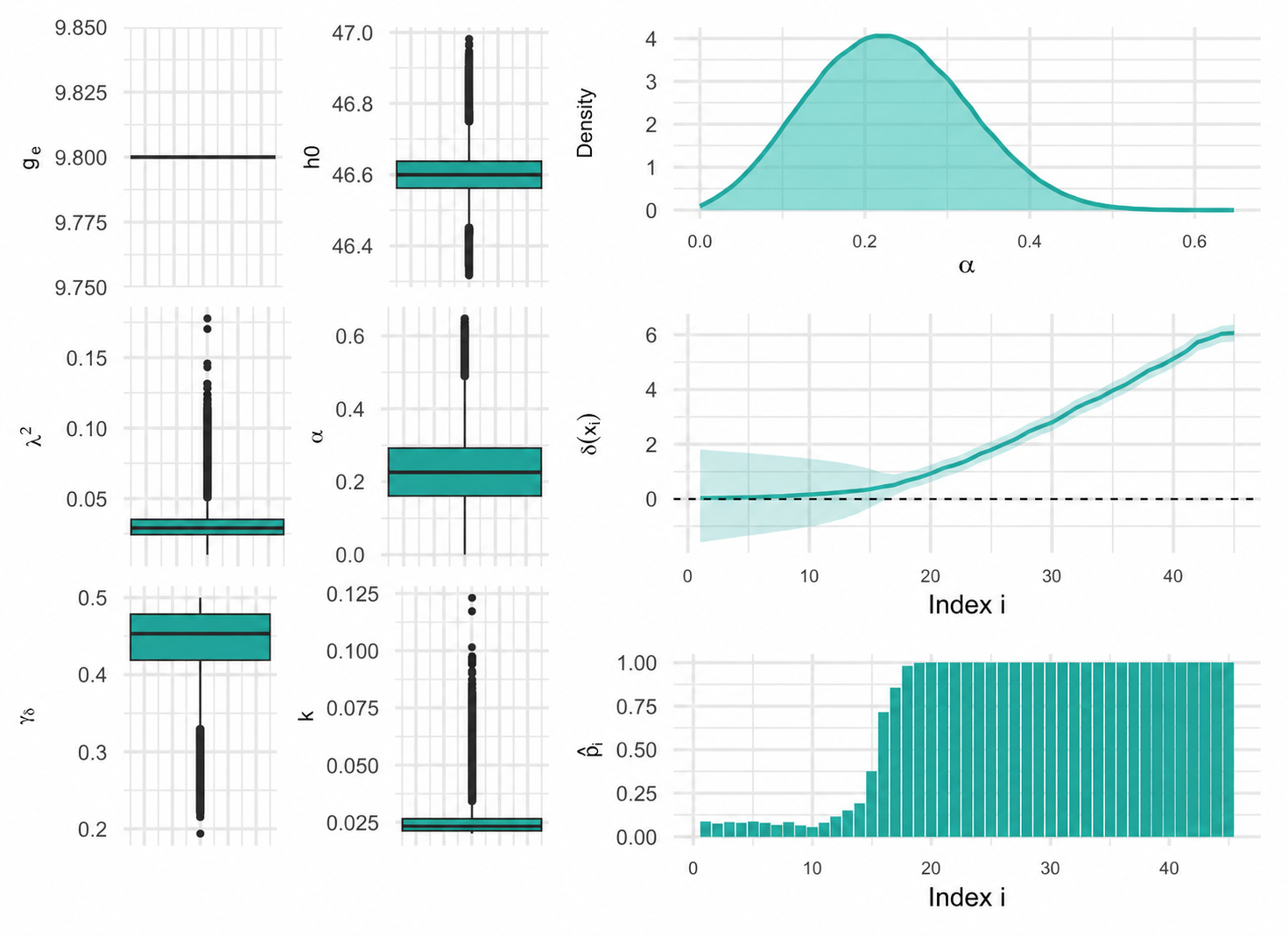}
        \caption{Thresholded mixture, $s=0.3$.}
        \label{fig:real_blue_basketball_b}
    \end{subfigure}
    \caption{\small Real data application (Blue Basketball). Posterior summaries when  $g_e^\ast = 9.8$ : (Left-Boxplots) Posterior distribution of the MCMC draws; (Top-Right-panel) Posterior density of $\alpha$ samples; (Middle-Right-panel) Posterior mean (solid-line) of $\delta(x_i)$ with pointwise $95\%$ credible bands; (Bottom-Right-panel) Pointwise posterior inclusion probabilities $\hat p_i$.}
    \label{fig:real_blue_basketball}
\end{figure}
Figure~\ref{fig:real_blue_basketball_a} presents the posterior estimations of the mixture model parameters when the gravitational constant is fixed at $g_e^\ast=9.8$ and the intercept parameter $h_0$ is estimated. The posterior density of $\alpha$ is sharply concentrated near zero, indicating that the mixture model strongly favours the discrepancy-corrected component. This conclusion is also visible from the pointwise inclusion probabilities: $\hat p_i$ equal to one for all observation indices treats the discrepancy as active over the whole trajectory. 
The marginal posterior distribution of $h_0$ is concentrated around values close to $49$, which represents an overestimation relative to the nominal value suggested by the prior information. This behavior is explained by the joint evolution of $h_0$ and the discrepancy. Since $g_e$ is fixed, the model compensates a larger intercept by estimating a discrepancy that is negative at the beginning of the trajectory and positive toward the end. Hence, the data mainly identify the combined term $h_0+\delta(t)$, rather than $h_0$ and $\delta(t)$ separately.
The hyperparameters still reflect a regime in which the discrepancy explains most of the systematic deviation from the idealized trajectory. The posterior of $\gamma_\delta$ is concentrated toward $0.5$, corresponding to a smooth discrepancy process, which is consistent with the monotone shape of the estimated $\delta(x_i)$. The posterior draws of $k$ are concentrated near small values, while $\lambda^2$ is also close to zero with a few larger draws. This indicates that the model attributes most of the structured deviation to the GP discrepancy rather than to independent measurement noise. This behavior is coherent with the narrow credible bands for $\delta(x_i)$, but it also confirms the importance of monitoring the interaction between $\lambda^2$, $k$, and the discrepancy variance. Overall, when the mixture allocation is not thresholded, the discrepancy-corrected component can dominate globally and may absorb part of the information that would otherwise contribute to estimating $h_0$. We next apply the thresholded allocation mechanism of Section~\ref{sec:threshold} with $s=0.3$, so that the discrepancy-corrected component can be selected only where the current draw of $|\delta(x_i)|$ is large enough to be practically distinguishable from measurement noise. The results are shown in Figure~\ref{fig:real_blue_basketball_b}. The posterior mean of $\delta(t_i)$ is close to zero at the beginning of the trajectory and then increases smoothly, reaching larger positive values toward the end of the experiment. This behaviour is consistent with the physical interpretation of the data: 
initially, the free-fall model provides an accurate description of the motion. As the ball accelerates, the unmodeled drag force becomes increasingly significant, resulting in a departure from the model. The thresholded allocation changes the interpretation of the mixture. The posterior of $\alpha$ is no longer concentrated near zero, but is centred at an intermediate value, indicating that both components contribute to explaining the data. The pointwise allocation probabilities, $\hat p_i$, provide the intended local diagnostic: the early part of the trajectory is mainly assigned to the no-discrepancy component, while after a transition around $i \simeq 16$-$18$, the discrepancy-corrected component is selected almost surely. Thus, thresholding localizes where the discrepancy becomes necessary, instead of assigning the whole trajectory to the discrepancy-corrected model.
This local reallocation also explains the changes in the marginal posterior distributions of $h_0$, $\lambda^2$, and $k$. Under thresholding, small early discrepancies are no longer allowed to compensate for an inflated intercept. As a result, the posterior of $h_0$ moves down from the over-estimated value obtained in Figure~\ref{fig:real_blue_basketball_a} and becomes concentrated near $46.6$, much closer to the nominal initial height. At the same time, part of the small-scale variation previously absorbed by the always-active discrepancy is now attributed to the residual error and to the active part of the Gaussian-process discrepancy, leading to larger posterior values of $\lambda^2$ and a corresponding shift in $k$. These differences are therefore a direct consequence of the thresholded allocation, which separates negligible deviations from practically relevant discrepancy.

\section{Conclusion}\label{sec-conc}
This paper deals with computer code validation through the mixture-estimation approach of \citet{KKKMCPRJR2014}, in the setting where the code is linear in the calibration parameters (or can be reasonably approximated by a linear surrogate). The pure-code and the discrepancy-corrected models are embedded into an encompassing mixture, and validation is based on the posterior of the mixture weight $\alpha$; calibration and validation are carried out simultaneously.
On the methodological side, we showed that the parameter-sharing structure of the mixture allows noninformative priors on the shared parameters without jeopardizing posterior propriety, and we developed a Metropolis-within-Gibbs algorithm for the posterior inference of the mixture model parameters. On the empirical side, the simulation experiments show that when the data are generated from $\MF_0$ the posterior of $\alpha$ concentrates near 1, while under $\MF_1$ it concentrates near 0 except at very small correlation lengths, for which the GP discrepancy becomes indistinguishable from white noise and the decision becomes less sharp.
We also introduced a thresholded allocation mechanism that complements the global decision carried by $\alpha$ with a local diagnostic indicating where the discrepancy is effectively needed. The simulations show that this device is especially useful when the discrepancy is active only on part of the input domain, and that the quality of the local diagnostic depends on the confounding between the physical calibration and the discrepancy-once this confounding is reduced, the thresholded allocation map becomes much sharper. The real ball-drop application confirms the practical value of the approach: at the global level, $\alpha$ indicates that the nominal free-fall code is insufficient; at the local level, thresholding mechanism identifies precisely where the physical model remains adequate and where an additional correction is needed. Additional results, including an orthogonal Gaussian process construction aimed at further reducing confounding, are reported in the appendix. A natural continuation of this work would be to extend the method to richer nonlinear surrogate models.

%
%
%

\paragraph{Generative AI statement}
OpenAI's ChatGPT (GPT-5.6 Luna, 2026) was used for language editing and assistance with R code. The authors reviewed and validated all AI-assisted content.

\paragraph{Acknowledgements} 
Negar Soleimani acknowledges support from a scholarship awarded by the Ministère de l’Europe et des Affaires étrangères (MEAE), France.

  \bibliography{bibliography.bib}

  \newpage
  
\appendix{
\section*{Appendix}
\addcontentsline{toc}{section}{Appendix}

This appendix collects material that complements the main paper. Section~\ref{sup:propriety}
proves posterior propriety under the Jeffreys prior on the shared parameters
(Theorem~\ref{thm:propriety}). Section~\ref{sup:jeffreys} derives the Jeffreys
prior for the Gaussian linear regression component used throughout the paper.
Section~\ref{sup:conditionals} details the full conditional posterior
distributions of $\boldsymbol{\theta}$, $\lambda^2$, $k$ and $\alpha$ used in
the Metropolis-within-Gibbs sampler. Section~\ref{sup:ogp} gives the orthogonal
Gaussian process (OGP) construction together with the closed-form integrals
needed for its implementation on the linearized ball-drop model.
Section~\ref{sup:sim} reports additional simulation results,
namely (\ref{sup:scnI_III}) presents the other configurations that are referenced in Section 4.1 
of the main paper, (\ref{sup:ogp_sim}) inference under
the OGP prior when data are simulated from a classical GP, and
(\ref{sup:oracle}) an oracle identifiability diagnostic for the discrepancy hyper-parameters $(k,\gamma_\delta)$. Section~\ref{sup:real} reports the real-data results when the gravitational constant $g_e$ is left free, referenced
in Section~4.2 of the main paper. Finally, Section~\ref{sup:cgp-ogp-real} compares the classical and orthogonal GP priors on the real dataset, showing that the orthogonality constraint does not resolve the calibration--discrepancy confounding in this application and thus justifying the use of the classical GP prior in the main analysis.


\section{Propriety of the joint posterior distribution}\label{sup:propriety}
Consider the mixture model $\MF_\alpha$, for all $i\in\{1,\ldots,n\}$:
\begin{equation}\label{eq:1a}
\MF_\alpha: y_i\sim \alpha \ell_{\MF_0}(\pmb{\theta}, \lambda^2;y_i, x_i)+(1-\alpha)\ell_{\MF_1}(\pmb{\theta}, \lambda^2, \delta;y_i, x_i),
\end{equation}
introduced in Section~2 of the main paper, in which each observation is either assigned to the pure-code component $\MF_0$ or to the discrepancy-corrected component $\MF_1$. 

\begin{theorem}\label{thm:propriety}
Let $g:\mathcal{X}\to\mathbb{R}^{d}; d\geq1$ and $X=(x_1,\ldots,x_n)$ be a design matrix with distinct design points. Assume that the design matrix elements are scaled to the unit interval and $g(X)$ is full column rank $d$ with $n>d$. Under the weakly informative prior specification below
\begin{align*}
    \pi(\boldsymbol{\theta},\lambda^2)&\propto (\lambda^2)^{-(\nicefrac{d}{2}+1)}; \quad \alpha\sim\mathcal{B}(a_0,a_0); \quad \delta(\bx)\sim\mathcal{N}_n(0,\Sigma_\delta)~\text{with}~ \Sigma_\delta=\nicefrac{\lambda^2}{k}\text{Corr}_{\gamma_\delta}
\end{align*}
then the posterior distribution of the mixture model $\MF_\alpha$ is proper if 
\begin{itemize}
    \item[I.] $\Sigma_\delta$ has a standard stationary kernel and $\text{Corr}_{\gamma_\delta}$ is continuous with respect to
$\gamma_\delta$;
    \item[II.] $\gamma_\delta$ and $k$ are supported on compact sets and are assigned proper prior distributions;
    \item[III.] prior on $k$ satisfies $\int k^{n/2}\pi(k)\,\mathrm{d}k<\infty.$
\end{itemize}
\end{theorem}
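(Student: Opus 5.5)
The plan is to show that the marginal $m(\pmb{y})=\int \pi(\pmb{y}\mid\pmb{\theta},\lambda^2,\delta,\alpha)\,\pi(\delta\mid\lambda^2,k,\gamma_\delta)\,\pi(\pmb{\theta},\lambda^2)\,\pi(\alpha)\pi(k)\pi(\gamma_\delta)\,\mathrm{d}(\cdot)$ is finite. We expand the mixture likelihood $\prod_{i=1}^n[\alpha\ell_{\MF_0}+(1-\alpha)\ell_{\MF_1}]$ into the finite sum over allocation vectors $\zeta\in\{0,1\}^n$. This gives $m(\pmb{y})=\sum_{\zeta}\int\alpha^{n-m}(1-\alpha)^m\pi(\alpha)\,\mathrm{d}\alpha\cdot I_\zeta$, where the Beta integrals are finite. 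It therefore suffices to prove that each $I_\zeta$, the integral over $(\pmb{\theta},\lambda^2,\delta,k,\gamma_\delta)$ of $\prod_{\zeta_i=0}\ell_{\MF_0}\prod_{\zeta_i=1}\ell_{\MF_1}$ times the priors, is finite.

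\textbf{Step 1 (integrate out $\delta$).} For fixed $\zeta$, the components $\delta(x_i)$ with $\zeta_i=0$ enter only through the prior and integrate marginally. Gaussian conjugacy then gives $\pmb{y_m}\mid\pmb{\theta},\lambda^2,k,\gamma_\delta\sim\mathcal{N}_m\bigl(g(\pmb{x_m})\pmb{\theta},\lambda^2 V\bigr)$ with $V=I_m+k^{-1}\mathrm{Corr}_{\gamma_\delta}(\pmb{x_m},\pmb{x_m})$. Combined with the $n-m$ independent $\MF_0$ terms, the full vector is $\pmb{y}\sim\mathcal{N}_n(g(X)\pmb{\theta},\lambda^2 W)$, where $W$ is block-diagonal with blocks $I_{n-m}$ and $V$. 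Since $V\succeq I_m$, we have $W\succeq I_n$, so $\det W\ge1$ and $W^{-1}\preceq I_n$.

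\textbf{Step 2 (integrate $\pmb{\theta}$ and $\lambda^2$).} With the flat prior, the integral over $\pmb{\theta}$ of the Gaussian kernel equals $(2\pi\lambda^2)^{d/2}\det(g^\top W^{-1}g)^{-1/2}$ times $\exp(-S_W/2\lambda^2)$. Here $S_W=\pmb{y}^\top[W^{-1}-W^{-1}g(g^\top W^{-1}g)^{-1}g^\top W^{-1}]\pmb{y}$ is the generalized residual sum of squares, and $g^\top W^{-1} g$ is invertible because $g(X)$ has full rank $d$. Integrating $\lambda^2$ against $(\lambda^2)^{-(d/2+1)}$ yields the inverse-gamma normalizer $\Gamma(n/2)(S_W/2)^{-n/2}$, finite provided $S_W>0$. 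This holds almost surely in $\pmb{y}$ because $n>d$ and $W$ is positive definite: $\pmb{y}$ does not lie in the column space of $g(X)$, a Lebesgue-null set. We are left with $I_\zeta\le C\int \det(W)^{-1/2}\det(g^\top W^{-1}g)^{-1/2}S_W^{-n/2}\,\pi(k)\pi(\gamma_\delta)\,\mathrm{d}k\,\mathrm{d}\gamma_\delta$.

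\textbf{Step 3 (uniform bounds in $(k,\gamma_\delta)$; main obstacle).} We have $\det W^{-1/2}\le1$. The difficulty is to bound $\det(g^\top W^{-1}g)^{-1/2}$ and $S_W^{-n/2}$ uniformly, since both could blow up as $W$ grows; this happens as $k\to0$ and is also affected by the $\mathrm{Corr}$ matrix.

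I would first use compactness (condition II, with $k\ge\underline{k}>0$ and $\gamma_\delta\in[0.1,1]$) together with continuity of $\mathrm{Corr}_{\gamma_\delta}$ in $\gamma_\delta$ (condition I). Since the design points are distinct, a standard stationary kernel gives positive-definite correlation matrices, so $W$ ranges over a compact set of positive definite matrices. On that set, $W\mapsto\det(g^\top W^{-1}g)^{-1/2}S_W^{-n/2}$ is continuous and positive for a.e. $\pmb{y}$, hence bounded, and properness follows because $\pi(k)\pi(\gamma_\delta)$ is proper.

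To avoid relying only on the lower bound $\underline{k}$, I would also give a direct bound in terms of $k$. Since $W\preceq (1+k^{-1}c)I$ with $c=\sup\|\mathrm{Corr}\|$, we have $W^{-1}\succeq k(k+c)^{-1}I$, and for $k\le1$ this gives the lower bound $W^{-1}\succeq k(1+c)^{-1}I$. This controls $\det(g^\top W^{-1}g)^{-1/2}\lesssim k^{-d/2}$ and $S_W\gtrsim k\,S_I$, where $S_I$ is the ordinary residual sum of squares, which is positive a.s. The resulting bound on the integrand is a power of $k$ (the $k^{n/2}$ factor appearing in \eqref{kpost}). Condition III, $\int k^{n/2}\pi(k)\,\mathrm{d}k<\infty$, is exactly what makes the remaining $k$-integral converge, and matching the exponents carefully is the step I expect to need the most care.

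Summing the finitely many finite $I_\zeta$ completes the proof.
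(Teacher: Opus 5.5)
Your main argument is correct and has the same skeleton as the paper's proof. You expand over the $2^n$ allocations, integrate $\delta$, $\pmb{\theta}$, $\lambda^2$ and $\alpha$ in closed form, and conclude by boundedness of the remaining $(k,\gamma_\delta)$-integrand on a compact set. Your quantities are the paper's in different notation. Let $\mathrm{Corr}_m$ be the correlation matrix of the allocated points and $\pmb{Q}=\mathds{I}_m+k\,\mathrm{Corr}_m^{-1}$. Then $V^{-1}=k(k\mathds{I}_m+\mathrm{Corr}_m)^{-1}=\mathds{I}_m-\pmb{Q}^{-1}$, so:
\begin{itemize}
\item $g^\top W^{-1}g$ is $M_{j,l}(k,\gamma_\delta)$;
\item $S_W$ is $C_{j,l}(k,\gamma_\delta)$, or $F_n$ and $\text{Const}$ for the pure allocations;
\item $\det W^{-1/2}=k^{m/2}|\mathrm{Corr}_m|^{-1/2}|\pmb{Q}|^{-1/2}$.
\end{itemize}
Packaging everything as one Gaussian with covariance $\lambda^2W$, $W\succeq\mathds{I}_n$, is still a real simplification. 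First, $\det W^{-1/2}\le1$ is immediate, so you never control $k^{m/2}$, $|\mathrm{Corr}|^{-1/2}$ and $|\pmb{Q}|^{-1/2}$ separately. That separate control is what drives the paper's discussion of $\gamma_\delta\to0$ and of condition III; in your form large $k$ is harmless since $W\to\mathds{I}_n$. Second, you keep $g^\top W^{-1}g$ as a single positive definite matrix. The paper instead splits it into determinants of the two allocation blocks, which are singular when a block has fewer than $d$ rows. Your compactness paragraph is a complete proof when $k\in[\underline{k},\bar k]$ with $\underline{k}>0$: $(k,\gamma_\delta)\mapsto W$ is continuous on a compact set, $W\succeq\mathds{I}_n$, and $S_W>0$ for every such $W$ once $\pmb{y}\notin\mathrm{Col}(g)$. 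It does not use condition III at all.

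The paragraph meant to ``avoid relying only on the lower bound $\underline{k}$'' is wrong and should be removed. Your estimates only give a negative power of $k$. Even with $\det V^{-1/2}\le k^{m/2}|\mathrm{Corr}_m|^{-1/2}$, you get $k^{(m-n-d)/2}$, i.e.\ $k^{-d/2}$ for the all-$\MF_1$ allocation, and this rate is attained. For $m=n$ and positive definite $\mathrm{Corr}$:
\begin{itemize}
\item $\det W^{-1/2}=k^{n/2}|k\mathds{I}_n+\mathrm{Corr}|^{-1/2}$;
\item $\det(g^\top W^{-1}g)^{-1/2}=k^{-d/2}\det\bigl(g^\top(k\mathds{I}_n+\mathrm{Corr})^{-1}g\bigr)^{-1/2}$;
\item $S_W=k\,\tilde S(k)$, where $\tilde S(k)$ is the generalized residual sum of squares for covariance $k\mathds{I}_n+\mathrm{Corr}$ and tends to a positive limit.
\end{itemize}
So this integrand is of exact order $k^{-d/2}$ as $k\to0$. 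It carries the positive weight $\int(1-\alpha)^n\pi(\alpha)\,\mathrm{d}\alpha$, so the marginal itself diverges, not just your bound. There is no spare $k^{n/2}$ to match: the $k^{n/2}$ in \eqref{kpost} is the normalizing constant of the $\delta$ prior and already sits inside $\det W^{-1/2}$. Condition III constrains only the upper tail of $\pi(k)$. It gives no control of $\int_0 k^{-d/2}\pi(k)\,\mathrm{d}k$, which diverges for $d\ge2$ whenever $\pi(k)$ does not vanish near $0$ (e.g.\ $\mathcal{U}(0,1)$), as the paper itself notes. The lower bound $\underline{k}>0$, i.e.\ reading condition II as compactness inside $(0,\infty)$, is therefore indispensable. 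Your proof should rest on the compactness argument alone.
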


\begin{proof}
For $Y=(y_1, y_2, \ldots, y_n)$, the marginal likelihood $m_{\MF_{\alpha}}(Y|X)$ is defined as 
\begin{align*}
    \int\left[\prod_{i=1}^n
    \left(
    \alpha \, \ell_{\mathcal{M}_0}(y_i \mid x_i, \theta, \lambda^2)
    +
    (1-\alpha)\, \ell_{\mathcal{M}_1}(y_i \mid x_i, \theta, \lambda^2, \delta)
    \right)
    \right]
    \pi(\pmb{\theta}, \lambda^2, \alpha, \delta, k, \gamma_\delta)
    \, \mathrm{d}\Theta,
\end{align*}

where
$\mathrm{d}\Theta = \mathrm{d}(\theta, \lambda^2, \alpha, \delta, k, \gamma_\delta)$. 
\small \begin{equation}\label{eq:18}
\int\prod_{i=1}^n\left[\alpha \nicefrac{\exp\left(-\frac{(y_i-g(x_i)\pmb{\theta})^2}{2\lambda^2} \right)}{(2\pi\lambda^2)^{1/2}}+ (1-\alpha)\nicefrac{\exp\left(-\frac{(y_i-g(x_i)\pmb{\theta}-\delta(x_i))^2}{2\lambda^2}\right)}{(2\pi\lambda^2)^{1/2}}\right]\pi(\pmb{\theta},\lambda,\alpha,\delta,k,\gamma_\delta)\mathrm{d}\Theta
\end{equation}
\normalsize
The marginal likelihood is a high-dimensional integral of a product-of-mixtures Gaussian kernel under an improper prior, and posterior propriety reduces to showing this integral is finite.
Let $\mathcal{S}=\{1, 2, \ldots, n \}$ and define, for $\nu = 0,1,\ldots,n$, $$\mathcal{A}^{(\nu)}=\{S\subset \mathcal{S} : |S|=\nu\}$$ the collection of all subsets of $\mathcal{S}$ of cardinality $\nu$, with
$|\mathcal{A}^{(\nu)}| = \binom{n}{\nu}$.
~If $\Omega=\{\mathcal{A}^{(1)}, \mathcal{A}^{(2)}, \ldots, \mathcal{A}^{(n-1)} \}$, then from $2^n=\sum_{\nu=0}^n\binom n\nu$, the total number of non-trivial allocations is therefore $|\Omega|=\sum_{\nu=1}^{n-1}|\mathcal{A}^{(\nu)}|=2^n-2$ ($\nu=0$ and $\nu=n$ correspond to pure components).
We also denote by $\mathcal{A}^{(\nu)}_j$, the $j-$th element of $\mathcal{A}^{(\nu)}$ 
and so the  $n-$fold binomial product in \eqref{eq:18} is then given by
\begin{align}\label{eq:20}
&\alpha^n \nicefrac{\exp\left(-\frac{1}{2\lambda^2}\left[\sum_{i=1}^n(y_i-g(x_i)\pmb{\theta})^2\right] \right)}{(2\pi\lambda^2)^{n/2}}\nonumber\\
&+\sum_{l=1}^{n-1}\alpha^{n-l}(1-\alpha)^l\sum_{j=1}^{\binom nl}\nicefrac{\exp\left(-\frac{1}{2\lambda^2}\left[\sum_{i\in \mathcal{S} \setminus\mathcal{A}^{(l)}_j}(y_i-g(x_i)\pmb{\theta})^2+\sum_{i\in \mathcal{A}^{(l)}_j}(y_i-g(x_i)\pmb{\theta}-\delta(x_i))^2\right] \right)}{(2\pi\lambda^2)^{n/2}}\nonumber\\
&+(1-\alpha)^n \nicefrac{\exp\left(-\frac{1}{2\lambda^2}\left[\sum_{i=1}^n(y_i-g(x_i)\pmb{\theta}-\delta(x_i))^2\right] \right)}{(2\pi\lambda^2)^{n/2}}.
\end{align}
\normalsize

The marginal likelihood is then computed by marginalizing out the parameters $\alpha, \pmb{\theta}, \lambda, \delta, \gamma_\delta$ and $k$ from \eqref{eq:20} starting first with integrating out the parameter $\delta$. 
The integral with respect to $\delta$ of the first term in \eqref{eq:20}
~$\int_{\delta}\pi(\delta)\mathrm{d}\delta=1$ since the prior is proper. For the second term, $l=|\mathcal{A}^{(l)}_j|$ 
~and denote by $Y_{\mathcal{A}^{(l)}_j}$: an $l-$components column vector of $y_i$ for all $i \in \mathcal{A}^{(l)}_j$. We also suppose that $g_{ \mathcal{A}^{(l)}_j}$ and $\delta_{ \mathcal{A}^{(l)}_j}$ indicate an $l\times d$ matrix of $g(x_i)$s and $l-$components column vector of $\delta(x_i)$s obtained for all $i \in \mathcal{A}^{(l)}_j$.  Since $\delta$ follows a Gaussian process centered on $n-$components vector of $0$ with covariance matrix $\Sigma_\delta=\nicefrac{\lambda^2}{k}\text{Corr}_{\gamma_\delta}$, then the GP restricted to finite indices $\{ i \in \mathcal{A}^{(l)}_j\}$ gives a multivariate normal distribution with the mean vector: $l-$components column vector of $0$ and $l\times l$ covariance matrix $\Sigma_{\mathcal{A}^{(l)}_j}=\nicefrac{\lambda^2}{k}\text{Corr}_{\gamma_\delta;\mathcal{A}^{(l)}_j}$ related to the corresponding submatrix.

~Using these notations, we can write 
\small $$\sum_{i\in \mathcal{A}^{(l)}_j}(y_i-g(x_i)\pmb{\theta}-\delta(x_i))^2=(Y_{\mathcal{A}^{(l)}_j}-g_{ \mathcal{A}^{(l)}_j}\pmb{\theta}-\delta_{ \mathcal{A}^{(l)}_j})^T(Y_{\mathcal{A}^{(l)}_j}-g_{\mathcal{A}^{(l)}_j}\pmb{\theta}-\delta_{\mathcal{A}^{(l)}_j})$$
\normalsize
when for any matrix $M$, $M^T$ indicates the transpose of $M$. If \small $\sum_{i=1}^n(y_i-g(x_i)\pmb{\theta}-\delta(x_i))^2=(Y-g\pmb{\theta}-\delta)^T(Y-g\pmb{\theta}-\delta)$ \normalsize where $g=g(X)$, $\delta=\delta(X)$ for simplicity and combining these expressions with the Gaussian prior distribution of $\delta$,
the two remaining terms in \eqref{eq:20} can be integrated with respect to
$\delta$ as follows:
\begin{equation}\label{eq:23}
\int_{\delta_{\mathcal{A}^{(l)}_j}}\left(\frac{|\text{Corr}_{\gamma_\delta;\mathcal{A}^{(l)}_j}|^{-1/2}}{(\nicefrac{2\pi \lambda^2}{k})^{\nicefrac{l}{2}}}\right)\exp\left(-\frac{1}{2\lambda^2}\left[
(\delta_{\mathcal{A}^{(l)}_j}-\pmb{\mu}_{j,l}^{\delta})^T\pmb{Q}_{j,l}(\delta_{\mathcal{A}^{(l)}_j}-\pmb{\mu}_{j,l}^{\delta})+\pmb{R}_{(j,l)}(\theta)
\right] \right)\mathrm{d}\delta_{\mathcal{A}^{(l)}_j}
\end{equation}
\normalsize
and
\begin{equation}\label{eq:24}
\int_{\delta}\left(\frac{|\text{Corr}_{\gamma_\delta}|^{-1/2}}{(\nicefrac{2\pi \lambda^2}{k})^{n/2}}\right)\exp\left(-\frac{1}{2\lambda^2}\left[
(\delta-\pmb{\mu}_{n}^{\delta})^T\pmb{Q}_n(\delta-\pmb{\mu}_{n}^{\delta})+\pmb{R}_n(\theta)
\right] \right)\mathrm{d}\delta
\end{equation}
\normalsize
where
\small \begin{align}\label{const:1}
\pmb{\mu}_{j,l}^{\delta}&=\pmb{Q}_{j,l}^{-1}(Y_{\mathcal{A}^{(l)}_j}-g_{ \mathcal{A}^{(l)}_j}\pmb{\theta})\nonumber\\
\pmb{Q}_{j,l}&=\mathds{I}_{l}+k\text{Corr}_{\gamma_\delta;\mathcal{A}^{(l)}_j}^{-1}\nonumber\\
\pmb{R}_{(j,l)}(\pmb{\theta})&=(Y_{\mathcal{A}^{(l)}_j}-g_{ \mathcal{A}^{(l)}_j}\pmb{\theta})^T
(\mathds{I}_{l}-\pmb{Q}_{j,l}^{-1})
(Y_{\mathcal{A}^{(l)}_j}-g_{ \mathcal{A}^{(l)}_j}\pmb{\theta})\nonumber\\
\pmb{\mu}_n^{\delta}&=\pmb{Q}_n^{-1}(Y-g\pmb{\theta})\nonumber\\
\pmb{Q}_n&=\mathds{I}_n+k\text{Corr}_{\gamma_\delta}^{-1}\nonumber\\
\pmb{R}_n(\pmb{\theta})&=(Y-g\pmb{\theta})^T(Y-g\pmb{\theta})-(Y-g\pmb{\theta})^T\pmb{Q}_n^{-1}(Y-g\pmb{\theta})
\end{align}
\normalsize
and $\mathds{I}_{l}$ indicates an identity matrix of size $l$. 
In order to solve the integral in \eqref{eq:23} and \eqref{eq:24}, we must prove that $\pmb{Q}_{j,l}$ and $\pmb{Q}_n$ are positive definite. For every non-zero column vector $\nu$ of $n$ real numbers, if we define $w=\text{Corr}_{\gamma_\delta;\mathcal{A}^{(l)}_j} \nu$, then 
$$w^T\text{Corr}_{\gamma_\delta;\mathcal{A}^{(l)}_j}^{-1}w=\nu^T\text{Corr}_{\gamma_\delta;\mathcal{A}^{(l)}_j}^T\nu.$$
Since the correlation matrix 
$\text{Corr}_{\gamma_\delta;\mathcal{A}^{(l)}_j}$ is symmetric and positive definite ($\text{Corr}_{\gamma_\delta;\mathcal{A}^{(l)}_j}^T=\text{Corr}_{\gamma_\delta;\mathcal{A}^{(l)}_j}$) then $w^T\text{Corr}_{\gamma_\delta;\mathcal{A}^{(l)}_j}^{-1}w=\nu^T\text{Corr}_{\gamma_\delta;\mathcal{A}^{(l)}_j}\nu>0$. 
Beside that, since $\nu^T\mathds{I}_{l}\nu=\sum_{i=1}^{l}\nu_i^2>0$
~and $k\in (0, 1)$ then
$\nu^T(\mathds{I}_{l}+k \text{Corr}_{\gamma_\delta;\mathcal{A}^{(l)}_j}^{-1})\nu=\nu^T\mathds{I}_{l}\nu+\nu^Tk\text{Corr}_{\gamma_\delta;\mathcal{A}^{(l)}_j}^{-1}\nu>0 $. 
We conclude then the matrix $\pmb{Q}_{j,l}$ and so $\pmb{Q}_n$ are both symmetric and positive definite. The two integrals  are then 
\begin{align}\label{eq:25}
\eqref{eq:23}
&=
k^{l/2}
|\mathrm{Corr}_{\gamma_\delta;\mathcal A_j^{(l)}}|^{-1/2}
|\pmb{Q}_{j,l}|^{-1/2}
\exp\left(-\frac{\pmb{R}_{(j,l)}(\pmb{\theta})}{2\lambda^2}\right),
\end{align}
\normalsize
and
\begin{align}\label{eq:26}
\eqref{eq:24}
&=
k^{n/2}
|\mathrm{Corr}_{\gamma_\delta}|^{-1/2}
|\pmb{Q}_n|^{-1/2}
\exp\left(-\frac{\pmb{R}_n(\theta)}{2\lambda^2}\right).
\end{align}
\normalsize
By replacing \eqref{eq:25} and \eqref{eq:26} in $m_{\MF_{\alpha}}(Y|X)$, we obtain
\small \begin{align}\label{eq:28}
m_{\MF_{\alpha}}(Y|X)&=\int\Bigg\{\alpha^n (2\pi\lambda^2)^{-n/2}\exp\left(-\frac{1}{2\lambda^2}\left[\sum_{i=1}^n(y_i-g(x_i)\pmb{\theta})^2\right] \right)\nonumber\\
&+\sum_{l=1}^{n-1}\alpha^{n-l}(1-\alpha)^l\sum_{j=1}^{\binom nl}(2\pi\lambda^2)^{-n/2}|\text{Corr}_{\gamma_\delta;\mathcal{A}^{(l)}_j}|^{-1/2}|\pmb{Q}_{j,l}|^{-\nicefrac{1}{2}}k^{\nicefrac{l}{2}}\nonumber\\
&\times \exp\left(-\frac{1}{2\lambda^2}\left[\sum_{i\in \mathcal{S}\setminus\mathcal{A}^{(l)}_j}(y_i-g(x_i)\pmb{\theta})^2+\pmb{R}_{(j,l)}(\pmb{\theta})\right] \right)\nonumber\\
&+(1-\alpha)^n(2\pi\lambda^2)^{-n/2}|\text{Corr}_{\gamma_\delta}|^{-1/2}|\pmb{Q}_n|^{-\nicefrac{1}{2}}k^{\nicefrac{n}{2}}\nonumber\\
&\times\exp\left(-\frac{1}{2\lambda^2}\left[\pmb{R}_n(\pmb{\theta})\right] \right)\Bigg\}\pi(\pmb{\theta}, \lambda^2, \alpha, k,\gamma_\delta)\mathrm{d}(\pmb{\theta}, \lambda^2, \alpha, k,\gamma_\delta)
\end{align}
\normalsize
Since $\pi(\pmb{\theta}, \lambda^2)\,\propto\,  (\lambda^2)^{-(d/2+1)}$  and focusing on integrating out $\pmb{\theta}$, then by completing the square in $\pmb{\theta}$, the integral of the first term in \eqref{eq:28} with respect to $\pmb{\theta}$ can be written as 
\begin{align}\label{eq:add}
&\int_{\pmb{\theta}} \exp\left(-\frac{1}{2\lambda^2}\left[(\pmb{\theta}-\pmb{\mu}_{(1,n)}^{\pmb{\theta}})^T(g^Tg)(\pmb{\theta}-\pmb{\mu}_{(1,n)}^{\pmb{\theta}})-\text{Const}\right] \right)(2\pi\lambda^2)^{-n/2}(\lambda^2)^{-(d/2+1)}\mathrm{d}\pmb{\theta}\nonumber\\
&=(2\pi)^{\nicefrac{(d-n)}{2}} (\lambda^2)^{-(n/2+1)}|g^Tg|^{-\nicefrac{1}{2}}\exp(-\frac{\text{Const}}{2\lambda^2}). 
\end{align}
where
\begin{align}\label{cons1n}
\pmb{\mu}_{(1,n)}^{\pmb{\theta}}&=(g^Tg)^{-1}g^TY\nonumber\\
\text{Const}&=Y^TY-Y^Tg(g^Tg)^{-1}g^TY
\end{align}
and the integration is with respect to a Gaussian kernel with precision $g^Tg/\lambda^2$. 

For the second integral with respect to $\pmb{\theta}$ in \eqref{eq:28}, we note that 
\small\begin{align}\label{QuadThet}
 \sum_{i\in \mathcal{S}\setminus\mathcal{A}^{(l)}_j}(y_i-g(x_i)\pmb{\theta})^2+\pmb{R}_{(j,l)}(\pmb{\theta})&=(Y_{\mathcal{S}\setminus\mathcal{A}^{(l)}_j}-g_{ \mathcal{S}\setminus\mathcal{A}^{(l)}_j}\pmb{\theta})^T(Y_{\mathcal{S}\setminus\mathcal{A}^{(l)}_j}-g_{\mathcal{S}\setminus\mathcal{A}^{(l)}_j}\pmb{\theta})\nonumber\\
&+(Y_{\mathcal A_j^{(l)}}-
g_{\mathcal A_j^{(l)}}\pmb\theta)^T
A_{j,l}
(Y_{\mathcal A_j^{(l)}}-
g_{\mathcal A_j^{(l)}}\pmb\theta).
\end{align} \normalsize
where $
A_{j,l}=\mathds{I}_l-\pmb{Q}_{j,l}^{-1}.$ The function $A_{j,l}$  is symmetric since both $\pmb{Q}_{j,l}$ and  $\pmb{Q}_{j,l}^{-1}$ are symmetric positive definite. However, in order to check the positive definiteness of $A_{j,l}$, we must verify whether the related eigenvalues are all positive. Because the correlation matrix $\text{Corr}_{\mathcal{A}^{(l)}_j}$ is positive definite, then there exists an orthogonal matrix $U$ of size $l\times l$ (i.e. $UU^T=\mathds{I}_{l}$) such that $\text{Corr}_{\gamma_\delta;\mathcal{A}^{(l)}_j}=U\Psi U^T$ when $\Psi$ is a diagonal matrix whose diagonal elements are the eigenvalues of the correlation matrix. This definition leads to rewrite the term $\text{Corr}_{\gamma_\delta;\mathcal{A}^{(l)}_j}^{-1}$ as follows
\begin{align}\label{ccccr}
\text{Corr}_{\gamma_\delta;\mathcal{A}^{(l)}_j}^{-1}
&=U\Psi^{-1} U^T
 \end{align}
where the inverse of the diagonal matrix $\Psi=\left[ \psi_{ii}\right]_{i=1}^{l}$ is obtained by replacing each element in the diagonal with its reciprocal as $\Psi^{-1}=\left[ \nicefrac{1}{\psi_{ii}}\right]_{i=1}^{l}$. On the other hands, $\mathds{I}_{l}=U\mathds{I}_{l} U^T$ and from \eqref{ccccr}, we then obtain
\begin{align*}
\mathds{I}_{l}-\pmb{Q}_{j,l}^{-1}&=\mathds{I}_{l}-\left(\mathds{I}_{l}+k\text{Corr}_{\gamma_\delta;\mathcal{A}^{(l)}_j}^{-1}\right)^{-1}\\
&=U\mathds{I}_{l} U^T-(U^T)^{-1}\left(\mathds{I}_{l} +\left[ \nicefrac{k}{ \psi_{ii}}\right]_{i=1}^{l} \right)^{-1}U^{-1}\\
&=U\mathds{I}_{l} U^T-U \left(\left[ 1+\nicefrac{k}{ \psi_{ii}}\right]_{i=1}^{l} \right)^{-1}U^T\\
&=U \left[ \nicefrac{k}{(\psi_{ii}+k) }\right]_{i=1}^{l}U^T.
 \end{align*}
where $k\in (0, 1)$, $\psi_{ii}>0$ and consequently $\mathds{I}_{l}-\pmb{Q}_{j,l}^{-1}$ has the positive eigenvalues and $A_{j,l}$ is positive definite.  Therefore, expanding both quadratic forms in \eqref{QuadThet} yields
\begin{align}\label{S14}
 \sum_{i\in \mathcal{S}\setminus\mathcal{A}^{(l)}_j}(y_i-g(x_i)\pmb{\theta})^2+\pmb{R}_{(j,l)}(\pmb{\theta})=
(\pmb\theta-\pmb\mu_{j,l}^{\pmb\theta})^T
M_{j,l}(k,\gamma_\delta)
(\pmb\theta-\pmb\mu_{j,l}^{\pmb\theta})
+C_{j,l}(k,\gamma_\delta)
\end{align}
where
\begin{align*}
\pmb\mu_{j,l}^{\pmb\theta}=M_{j,l}^{-1}&(k,\gamma_\delta)b_{j,l}; \quad \quad M_{j,l}(k,\gamma_\delta)=g_{\mathcal S\setminus\mathcal A_j^{(l)}}^Tg_{\mathcal S\setminus\mathcal A_j^{(l)}}
+g_{\mathcal A_j^{(l)}}^T
A_{j,l}g_{\mathcal A_j^{(l)}}\\
C_{j,l}(k,\gamma_\delta)&=Y_{\mathcal S\setminus\mathcal A_j^{(l)}}^TY_{\mathcal S\setminus\mathcal A_j^{(l)}}+Y_{\mathcal A_j^{(l)}}^TA_{j,l}Y_{\mathcal A_j^{(l)}}
-b_{j,l}^TM_{j,l}^{-1}(k,\gamma_\delta)b_{j,l}\\
&b_{j,l}=g_{\mathcal S\setminus\mathcal A_j^{(l)}}^TY_{\mathcal S\setminus\mathcal A_j^{(l)}}
+g_{\mathcal A_j^{(l)}}^TA_{j,l}Y_{\mathcal A_j^{(l)}}.
\end{align*}
To establish the positive definiteness of the matrix $M_{j,l}(k,\gamma_\delta)$, which is required for the Gaussian integral with respect to $\pmb{\theta}$ to be finite, let $v\in\mathbb{R}^{d}$ be arbitrary. Then
\begin{align}
v^{T}M_{j,l}(k,\gamma_\delta)v
&=v^{T}g_{\mathcal{S}\setminus\mathcal{A}^{(l)}_j}^{T}g_{\mathcal{S}\setminus\mathcal{A}^{(l)}_j}v
+v^{T}g_{\mathcal{A}^{(l)}_j}^{T}
\left(
\mathds{I}_{l}-\pmb{Q}_{j,l}^{-1}
\right)g_{\mathcal{A}^{(l)}_j}v\nonumber\\
&=\left\|
g_{\mathcal{S}\setminus\mathcal{A}^{(l)}_j}v
\right\|^{2}
+\left(g_{\mathcal{A}^{(l)}_j}v
\right)^{T}
\left(\mathds{I}_{l}-\pmb{Q}_{j,l}^{-1}
\right)
\left(g_{\mathcal{A}^{(l)}_j}v
\right).
\label{eq:Mjlquadratic}
\end{align}

Both terms on the right-hand side of \eqref{eq:Mjlquadratic} are nonnegative, hence $v^{T}M_{j,l}(k,\gamma_\delta)v \ge 0.$
Suppose now that $v^{T}M_{j,l}(k,\gamma_\delta)v=0.$ Since each term in \eqref{eq:Mjlquadratic} is nonnegative, both terms
must vanish separately; i.e. 
$
g_{\mathcal{S}\setminus\mathcal{A}^{(l)}_j}v=0;  g_{\mathcal{A}^{(l)}_j}v=0.
$
Combining these two relations gives
$$
gv=
\begin{pmatrix}
g_{\mathcal{S}\setminus\mathcal{A}^{(l)}_j}
\\[1mm]
g_{\mathcal{A}^{(l)}_j}
\end{pmatrix}
v=0.
$$

Since the full design matrix $g$ has full column rank $d$, its null space is trivial and consequently,
$gv=0$ implies
$v=0$, i.e., $\forall\, v\neq 0, v^{T}M_{j,l}(k,\gamma_\delta)v>0$. On the other hands, since $\mathds{I}_l-\pmb{Q}_{j,l}^{-1}$ is symmetric, so is $\left(g_{\mathcal{A}^{(l)}_j}v\right)^{T}\left(\mathds{I}_{l}-\pmb{Q}_{j,l}^{-1}\right)\left(g_{\mathcal{A}^{(l)}_j}v\right)$. Therefore both terms in the definition of $M_{j,l}(k,\gamma_\delta)$ are symmetric, so is $M_{j,l}(k,\gamma_\delta)$ and consequently $M_{j,l}(k,\gamma_\delta)$ is positive definite matrix. 
As a consequence, $M_{j,l}(k,\gamma_\delta)$ is invertible and all its eigenvalues are
strictly positive. On the other hands, $C_{j,l}(k,\gamma_\delta)$ can not be negative since the left-hand side of \eqref{S14} (sum of quadratic forms with $A_{j,l}>0$) is indeed nonnegative and $M_{j,l}(k,\gamma_\delta)>0$ on the right-hand side of \eqref{S14}. Furthermore, the equality $C_{j,l}(k,\gamma_\delta)=0$ holds if and only if there exists some parameter vector $\pmb\theta$ satisfying simultaneously 
$$
Y_{\mathcal S\setminus\mathcal A_j^{(l)}}=g_{\mathcal S\setminus\mathcal A_j^{(l)}}\pmb\theta; \quad \text{and} \quad A_{j,l}^{1/2}\left(Y_{\mathcal A_j^{(l)}}-g_{\mathcal A_j^{(l)}}\pmb\theta\right)=0.
$$
In other words, the data must fit the regression model exactly (taking into account the weighting by $A_{j,l}$). This event, however, has probability zero for continuous observation distribution we assumed and consequently, $C_{j,l}(k,\gamma_\delta)>0$ almost surely.
The Gaussian integral with respect to
$\pmb{\theta}$ is hence proper and evaluates to
\small\begin{align}\label{eq:30}
\int_{\mathbb{R}^{d}}
\exp\!\left(
-\frac{1}{2\lambda^{2}}
(\pmb{\theta}-\pmb{\mu}_{j,l})^{T}
M_{j,l}(k,\gamma_\delta)
(\pmb{\theta}-\pmb{\mu}_{j,l})+C_{j,l}(k,\gamma_\delta)
\right)
d\pmb{\theta}
&=\nonumber\\
(2\pi\lambda^{2})^{d/2}
|M_{j,l}(k,\gamma_\delta)|^{-1/2} &\exp\!\left(
-\frac{C_{j,l}(k,\gamma_\delta)}{2\lambda^2}
\right).
\end{align}\normalsize

The last integral with respect to $\pmb{\theta}$ in \eqref{eq:28} is given by
 \small \begin{align}\label{eq:31}
\int_{\pmb{\theta}}&\exp\left(-\frac{1}{2\lambda^2}\left[\pmb{R}_n(\theta)\right] \right)\mathrm{d}\pmb{\theta}\nonumber\\
&=\int_{\pmb{\theta}}\exp\left(-\frac{1}{2\lambda^2}\left[(\pmb{\theta}-\pmb{\mu}_{(2,n)}^{\pmb{\theta}})^Tg^T\left(\mathds{I}_n-\pmb{Q}_n^{-1} \right)g(\pmb{\theta}-\pmb{\mu}_{(2,n)}^{\pmb{\theta}})+F_n(k,\gamma_\delta)\right] \right)\mathrm{d}\pmb{\theta}\nonumber\\
&=(2\pi \lambda^2)^{\nicefrac{d}{2}}|g^T\left(\mathds{I}_n-\pmb{Q}_n^{-1} \right)g|^{-\nicefrac{1}{2}}\exp(-\nicefrac{F_n(k,\gamma_\delta)}{2\lambda^2})
\end{align}
\normalsize
where
\small \begin{align}
\pmb{\mu}_{(2,n)}^{\pmb{\theta}}&=\left[g^T\left(\mathds{I}_n-\pmb{Q}_n^{-1} \right)g\right]^{-1}g^T\left(\mathds{I}_n-\pmb{Q}_n^{-1} \right)Y\nonumber\\
F_n(k,\gamma_\delta)&=Y^T\left(\mathds{I}_n-\pmb{Q}_n^{-1} \right)Y-Y^T\left(\mathds{I}_n-\pmb{Q}_n^{-1} \right)g\left[g^T\left(\mathds{I}_n-\pmb{Q}_n^{-1} \right)g\right]^{-1}g^T\left(\mathds{I}_n-\pmb{Q}_n^{-1} \right)Y
\end{align}\normalsize
Because $\text{Corr}_{\gamma_\delta}^{-1}$ is symmetric positive definite, we can easily prove that 
 $g^T\left(\mathds{I}_n-\pmb{Q}_n^{-1} \right)g=g^T\left(\mathds{I}_n-\left(\mathds{I}_n+k\text{Corr}_{\gamma_\delta}^{-1} \right)^{-1} \right)g$
 is also a symmetric positive definite matrix (same proof as for $g^T_{\mathcal{A}^{(l)}_j}(\mathds{I}_{l}-\pmb{Q}_{j,l}^{-1})g_{\mathcal{A}^{(l)}_j}$).
By replacing \eqref{eq:add}, \eqref{eq:30} and \eqref{eq:31} in the marginal likelihood \eqref{eq:28} and focusing on integrating out $\lambda$, we obtain
\small \begin{align}\label{eq:32}
m_{\MF_{\alpha}}&(Y|X)= \int \alpha^n (2\pi )^{\nicefrac{(d-n)}{2}}|g^Tg|^{-\nicefrac{1}{2}}(\lambda^2)^{-(\nicefrac{n}{2}+1)}\exp(-\nicefrac{\text{Const}}{2\lambda^2})\pi(\alpha, k)\mathrm{d}(\lambda^2, \alpha, k)\nonumber\\
&+\int \sum_{l=1}^{n-1}\alpha^{n-l}(1-\alpha)^l\sum_{j=1}^{\binom nl}(2\pi)^{(d-n)/2}|\text{Corr}_{\gamma_\delta;\mathcal{A}^{(l)}_j}|^{-1/2}|\pmb{Q}_{j,l}|^{-\nicefrac{1}{2}}k^{\nicefrac{l}{2}}\nonumber\\
&\times 
|M_{j,l}(k,\gamma_\delta)|^{-1/2}(\lambda^2)^{-(\nicefrac{n}{2}+1)} \exp\!\left(-\frac{C_{j,l}(k,\gamma_\delta)}{2\lambda^2}
\right)\pi(\alpha, k,\gamma_\delta)\mathrm{d}(\lambda^2,\alpha, k,\gamma_\delta)\nonumber\\
&+\int(1-\alpha)^n(2\pi)^{\nicefrac{(d-n)}{2}}|\text{Corr}_{\gamma_\delta}|^{-1/2}|\pmb{Q}_n|^{-\nicefrac{1}{2}}k^{\nicefrac{n}{2}}|g^T\left(\mathds{I}_n-\pmb{Q}_n^{-1} \right)g|^{-\nicefrac{1}{2}}\nonumber\\
&\times(\lambda^2)^{-(\nicefrac{n}{2}+1)}\exp(-\nicefrac{F_n(k,\gamma_\delta)}{2\lambda^2})\pi(\alpha, k,\gamma_\delta)\mathrm{d}(\lambda^2,\alpha, k,\gamma_\delta)
\end{align}
\normalsize

The right-hand side of the inequality \eqref{eq:32} is integrable with respect to $\lambda^2$ provided that the quantities $\text{Const}$ and $F_n(k,\gamma_\delta)$ are positive.   The term $\text{Const}$ defined in \eqref{cons1n} can be rewritten as 
\begin{align}
\text{Const}&=Y^T\left[\mathds{I}_n-\mathbf{H}_g\right]Y
\end{align}
in which $\mathbf{H}_g=g(g^Tg)^{-1}g^T$ is the hat matrix associated with the linear model, i.e. the orthogonal projection matrix onto the column space of $g$. In particular, $\hat{Y}=\mathbf{H}_gY$ denotes the vector of fitted values \citep{DCHREW1978}. 
The matrix $\mathbf{H}_g$ satisfies the following classical properties: First, $\mathbf{H}_g$ is symmetric and idempotent,i.e. $ \mathbf{H}_g^T=\mathbf{H}_g,
    ~\mathbf{H}_g^2=\mathbf{H}_g$ 
    ; Second, since $g$ is a full rank matrix,  the matrix $\mathbf{H}_g$ is positive semidefinite and admits the spectral decomposition $\mathbf{H}_g=U_{\mathbf{H}_g}\Psi_{\mathbf{H}_g}U^T_{\mathbf{H}_g}$; And third, the eigenvalues of $\mathbf{H}_g$ denoted by $\psi^{\mathbf{H}_g}_{ii}=\left[\Psi_{\mathbf{H}_g}\right]_{ii}$ are either $0$ or $1$. Using this decomposition, we obtain       
\begin{align}\label{cons1n2}
\text{Const}&=Y^TU_{\mathbf{H}_g}\left[\mathds{I}_n-
\Psi_{\mathbf{H}_g}
\right]U^T_{\mathbf{H}_g}Y\nonumber\\
&=Y^TU_{\mathbf{H}_g}\operatorname{diag}\!\left(1-\psi_{11}^{\mathbf{H}_g},\ldots,1-\psi_{nn}^{\mathbf{H}_g}\right)U^T_{\mathbf{H}_g}Y\nonumber\\
&=\sum_{i=1}^n \left((1-\psi^{\mathbf{H}_g}_{ii})(\sum_{j=1}^n y_j u_{ji})^2\right)
\end{align}
where $u_{ji}$ denotes the $(j,i)$-th entry of $U_{\mathbf{H}_g}$. Since each coefficient $1-\psi_{ii}^{\mathbf{H}_g}$ is either $0$ or $1$, equation \eqref{cons1n2} shows that
$\text{Const}$ is nonnegative.
Moreover, with
$\text{Const}=Y^T(Y-\hat Y),$
 equality $\text{Const}=0$ holds if and only if $Y=\hat Y$, that is, if $Y$ belongs exactly to the column space of $g$.
Because the observations are generated from a continuous model, the event
$
Y\in\mathrm{Col}(g)
$
has probability zero. Therefore,
$
\mathbb P\!\left(\text{Const}>0\right)=1,
$
so that $\text{Const}$ is almost surely strictly positive.

For the quantity $F_n(k,\gamma_\delta)$, define $\Upsilon_n:=\mathds{I}_n-\pmb{Q}_n^{-1}$ to simplify the notation. Since
$
\pmb{Q}_n=\mathds{I}_n+k\,\mathrm{Corr}_{\gamma_\delta}^{-1},
$
with $k>0$ and $\mathrm{Corr}_{\gamma_\delta}^{-1}$ positive definite, it follows that $\pmb{Q}_n> \mathds{I}_n$, and $
\Upsilon_n=\mathds{I}_n-\pmb{Q}_n^{-1}> 0.
$
Hence, there exists a unique symmetric positive definite square root
$\Upsilon_n^{1/2}=U_{\Upsilon}\Psi_{\Upsilon_n}^{1/2}
U_{\Upsilon}^T$ such that $\Upsilon_n=\Upsilon_n^{1/2}\Upsilon_n^{1/2},$
when
$\Upsilon_n=U_{\Upsilon}\Psi_{\Upsilon_n}U_{\Upsilon}^T$
is the spectral decomposition of $\Upsilon_n$, with
$U_{\Upsilon}U_{\Upsilon}^T=U_{\Upsilon}^TU_{\Upsilon}=\mathds{I}_n$
and
$\Psi_{\Upsilon_n}=\operatorname{diag}\!\left(
\psi^{\Upsilon_n}_{11},\ldots,\psi^{\Upsilon_n}_{nn}
\right).$
The function $F_n(k,\gamma_\delta)$ can be therefore rewritten as
\begin{align}
F_n(k,\gamma_\delta)
&=
Y^T\Upsilon_n^{1/2}
\Bigg[
\mathds{I}_n
-
\Upsilon_n^{1/2}g
\left(
g^T\Upsilon_n g
\right)^{-1}
g^T\Upsilon_n^{1/2}
\Bigg]
\Upsilon_n^{1/2}Y.
\end{align}

If we define
\begin{align*}
    \mathbf P_{\Upsilon}
    =
    \Upsilon_n^{1/2}g
    \left(
    g^T\Upsilon_n g
    \right)^{-1}
    g^T\Upsilon_n^{1/2}.
\end{align*}

then, the matrix $\mathbf P_{\Upsilon}$ is the orthogonal projection matrix onto the column space of $\Upsilon_n^{1/2}g$. Therefore,
$
\mathds{I}_n-\mathbf P_{\Upsilon}
$
is symmetric positive semidefinite, implying that
$
F_n(k,\gamma_\delta)\ge 0.
$

Moreover, equality holds if and only if
$
\Upsilon_n^{1/2}Y\in\mathrm{Col}(\Upsilon_n^{1/2}g),
$
which is equivalent to $Y\in\mathrm{Col}(g)$ since $\Upsilon_n^{1/2}$ is invertible.
Because the observations arise from a continuous model, the event
$
Y\in\mathrm{Col}(g)
$
has probability zero. Consequently,
$
\mathbb P\!\left(F_n(k,\gamma_\delta)>0\right)=1.
$
Hence, $F_n(k,\gamma_\delta)$ is almost surely strictly positive. 

The integrals with respect to $\lambda$ are then, respectively, equal to  
$$
\frac{2^{(\nicefrac{n}{2})}\Gamma(\nicefrac{n}{2})}{\text{Const}^{\nicefrac{n}{2}}}; \quad \frac{2^{(\nicefrac{n}{2})}\Gamma(\nicefrac{n}{2})}{C_{j,l}(k,\gamma_\delta)^{\nicefrac{n}{2}}}; \quad
\frac{2^{(\nicefrac{n}{2})}\Gamma(\nicefrac{n}{2})}{F_n(k,\gamma_\delta)^{\nicefrac{n}{2}}}.
  $$
Since $\alpha\sim\mathcal{B}(a_0,a_0)$ with $a_0>0$, the integrals with
respect to $\alpha$ appearing in \eqref{eq:32} reduce to Beta-function ratios $$
\frac{\left[\Gamma(n+a_0)\Gamma(2a_0)\right]}{[\Gamma(a_0)\Gamma(n+2a_0)]}; \quad  
\frac{[\Gamma(n-l+a_0)\Gamma(l+a_0)\Gamma(2a_0)]}{[\Gamma(n+2a_0)\Gamma(a_0)^2]};\quad  \frac{[\Gamma(n+a_0)\Gamma(2a_0)]}{[\Gamma(a_0)\Gamma(n+2a_0)]}$$
These quantities are finite for all $a_0>0$, since the Gamma function is well-defined and finite on $(0,\infty)$. Moreover, from the Beta-function identity we can easily show that each ratio is uniformly bounded by $1$. This leads us to simplify the inequality \eqref{eq:32} as follows
\small \begin{align}\label{eq:33}
m_{\MF_{\alpha}}(Y|X)&\leq \int \frac{2^{(\nicefrac{d}{2})}\Gamma(\nicefrac{n}{2})}{\text{Const}^{\nicefrac{n}{2}}} \pi ^{\nicefrac{(d-n)}{2}}|g^Tg|^{-\nicefrac{1}{2}}\pi( k,\gamma_\delta)\mathrm{d}(k,\gamma_\delta)\nonumber\\%
&+\int \sum_{l=1}^{n-1}\sum_{j=1}^{\binom nl}\frac{\Gamma(\nicefrac{n}{2})|\text{Corr}_{\gamma_\delta;\mathcal{A}^{(l)}_j}|^{-1/2}|\pmb{Q}_{j,l}|^{-\nicefrac{1}{2}}k^{\nicefrac{l}{2}}}{C_{j,l}(k,\gamma_\delta)^{\nicefrac{n}{2}}}\pi^{\nicefrac{(d-n)}{2}}|g^T_{\mathcal{S}\setminus\mathcal{A}^{(l)}_j}g_{\mathcal{S}\setminus\mathcal{A}^{(l)}_j}|^{-\nicefrac{1}{4}}\nonumber\\
&\times |g^T_{\mathcal{A}^{(l)}_j}\left(\mathds{I}_{l}-\pmb{Q}_{j,l}^{-1}\right)g_{\mathcal{A}^{(l)}_j}|^{\nicefrac{-1}{4}}\pi(k,\gamma_\delta)\mathrm{d}(k,\gamma_\delta)\nonumber\\%
+\int\frac{2^{(\nicefrac{d}{2})}\Gamma(\nicefrac{n}{2})}{F_n(k,\gamma_\delta)^{\nicefrac{n}{2}}}&|\text{Corr}_{\gamma_\delta}|^{-1/2}|\pmb{Q}_n|^{-\nicefrac{1}{2}}k^{\nicefrac{n}{2}}|g^T\left(\mathds{I}_n-\pmb{Q}_n^{-1} \right)g|^{-\nicefrac{1}{2}}\pi^{\nicefrac{(d-n)}{2}}
\pi(k,\gamma_\delta)\mathrm{d}(k,\gamma_\delta)
\end{align}
\normalsize
We note here that in the right hand-side of \eqref{eq:33}, having proper priors on $k$ and $\gamma_\delta$ is necessary but not generally sufficient to conclude that it is finite. For instance, when $\gamma_\delta \to 0$, then $\text{Corr}_{\gamma_\delta}$ becomes nearly singular and $|\text{Corr}_{\gamma_\delta}|^{-1/2}\to \infty$ (the same occurs for $|\text{Corr}_{\gamma_\delta;\mathcal{A}_j^{(l)}}|^{-1/2}, |\pmb{Q}_n|^{-1/2}, |\pmb{Q}_{j,l}|^{-1/2}$). To avoid this degeneracy, we assume that the parameter $\gamma_\delta$ belongs to a compact set $[\underline{\gamma}, \bar{\gamma}]; 0<\underline{\gamma}< \bar{\gamma}\leq 1$ to ensure that the correlation matrices remain uniformly positive definite,
determinants stay bounded away from zero and all inverse determinants remain finite. 

Regarding the integrals with respect to $k$, the polynomial terms in $k$ such as $k^{l/2}$ and $k^{n/2}$ must be bounded to ensure that every integrand is bounded by a finite constant, i.e. the prior on $k$ must have finite moment of order $n/2$. On the other hands, when $k\to 0$ then $\pmb{Q}_n\to \mathds{I}_n$ and $|g^T\left(\mathds{I}_n-\pmb{Q}_n^{-1} \right)g|^{-\nicefrac{1}{2}}\to \infty$ (same for the term with $\mathds{I}_l-\pmb{Q}_{j,l}$ in the second integral). Thus, the second and third integrands in \eqref{eq:33} behaves like a ratio of quantities both tending to zero/infinity. Since, $|g^T\left(\mathds{I}_n-\pmb{Q}_n^{-1} \right)g|^{-\nicefrac{1}{2}}$ and $F_n(k,\gamma_\delta)^{-n/2}$ are order $k^{-d/2}$ and $k^{-n/2}$, respectively, and the integral also contains $k^{n/2}$, therefore $|g^T\left(\mathds{I}_n-\pmb{Q}_n^{-1} \right)g|^{-\nicefrac{1}{2}}F_n(k,\gamma_\delta)^{-n/2}k^{n/2}$ behaves like $k^{-d/2}$. So the integral is finite if $d/2<1$ and diverges at $k=0$ for all $d\geq 2$. Furthermore, the discrepancy prior parametrization $\delta(X)\sim \mathcal{GP}(0,\frac{\lambda^2}{k}\text{Corr}_{\gamma_\delta})$ becomes degenerate/infinite variance when $k=0$. 
Hence, we consider $k\in [\underline{k},1]$ to ensure no singularity for all $d>1$. 

By compactness and assigning proper prior distributions to $k$ and $\gamma_\delta$, every integrand appearing in \eqref{eq:33}
is consequently bounded by a finite constant times a proper density.
This proves that
\begin{align*}
    m_{\MF_\alpha}(Y\mid X)<\infty.
\end{align*}

 and the posterior distribution of the mixture model parameters is therefore proper. 
\end{proof}

\section{Jeffreys prior for the shared parameters $(\pmb{\theta},\lambda^2)$}\label{sup:jeffreys}
For the following Gaussian linear regression model
\begin{equation}\label{eq:supp-lm}
Y = g(X)\boldsymbol{\theta} + \boldsymbol{\varepsilon},
\qquad
\boldsymbol{\varepsilon}\sim\mathcal{N}\big(\mathbf{0},\lambda^2\mathds{I}_n\big),
\end{equation}
with $\boldsymbol{\theta}\in\mathbb{R}^d$ and $\lambda^2>0$
 the log-likelihood is
\begin{align*}
    \ell(\boldsymbol{\theta},\lambda^2) =
    -\tfrac{n}{2}\log(2\pi) - \tfrac{n}{2}\log(\lambda^2)
    -\tfrac{1}{2\lambda^2}(Y-g(X)\boldsymbol{\theta})^\top(Y-g(X)\boldsymbol{\theta}).
\end{align*}

For simplicity, we consider $\mathbf{r}=Y-g(X)\boldsymbol{\theta}$ and after differentiating,
\begin{align*}
    \frac{\partial^2\ell}{\partial\boldsymbol{\theta}\,\partial\boldsymbol{\theta}^\top}
    =-\frac{1}{\lambda^2}g(X)^\top g(X),
    \qquad
    \frac{\partial^2\ell}{\partial(\lambda^2)^2}
    =\frac{n}{2\lambda^4}-\frac{1}{\lambda^6}\mathbf{r}^\top\mathbf{r},
    \qquad
    \frac{\partial^2\ell}{\partial\boldsymbol{\theta}\,\partial\lambda^2}
    =-\frac{1}{\lambda^4}g(X)^\top\mathbf{r}.
\end{align*}

the Fisher
information matrix is block-diagonal,
\begin{equation}\label{eq:supp-fisher}
I(\boldsymbol{\theta},\lambda^2)=
\begin{pmatrix}
\displaystyle\frac{1}{\lambda^2}g(X)^\top g(X) & \mathbf{0}_{d\times 1}\\[4pt]
\mathbf{0}_{1\times d} & \displaystyle\frac{n}{2\lambda^4}
\end{pmatrix}\in\mathbb{R}^{(d+1)\times(d+1)}.
\end{equation}
Consequently,
\begin{align*}
    \det I(\boldsymbol{\theta},\lambda^2)
    = \det\!\Big(\tfrac{1}{\lambda^2}g(X)^\top g(X)\Big)\cdot\frac{n}{2\lambda^4}
    = \det(g(X)^\top g(X))\cdot\frac{n}{2}(\lambda^2)^{-(d+2)},
\end{align*}

and the Jeffreys prior is then
\begin{equation}\label{eq:supp-jeffreys}
\pi(\boldsymbol{\theta},\lambda^2)\;\propto\;
\frac{1}{(\lambda^2)^{d/2+1}},
\end{equation}
which is an improper-flat prior on $\boldsymbol{\theta}$ and an improper scale-invariant distribution on $\lambda^2$~\citep{OJBVDBS2001}.
In the ball-drop application with the two dimensional code 
$g(x)=(1,-x^2/2)$, we have
$\pi(\boldsymbol{\theta},\lambda^2)\propto 1/(\lambda^2)^{2}$.
We note here that when one of the calibration parameters is fixed in a given scenario, the Jeffreys prior is applied to the remaining free regression parameters.

\section{Full conditional posterior distributions}\label{sup:conditionals}

Given the latent allocations $\boldsymbol{\zeta}=(\zeta_1,\ldots,\zeta_n)^\top$,
the completed-data likelihood function is given as
\begin{equation}\label{eq:supp-compl-lik}
\mathcal{L}(\boldsymbol{\theta},\delta,\boldsymbol{\zeta},\alpha,\lambda^2; Y, X)
=\alpha^{n-\sum_{i=1}^n\zeta_i}(1-\alpha)^{\sum_{i=1}^n\zeta_i}
\prod_{i:\zeta_i=0}\mathcal{N}\!\big(y_i\mid g(x_i)\boldsymbol{\theta},\lambda^2\big)
\prod_{i:\zeta_i=1}\mathcal{N}\!\big(y_i\mid g(x_i)\boldsymbol{\theta}+\delta(x_i),\lambda^2\big).
\end{equation}
Let $m=\sum_{i=1}^{n}\zeta_i$ and denote by $\pmb{x}_m,\pmb{y}_m,\delta_m$ the restrictions of $X,Y,\delta(X)$ to $\{i:\zeta_i=1\}$, and by $\pmb{x}_{n-m},\pmb{y}_{n-m}$ the restrictions to $\{i:\zeta_i=0\}$. We derive in turn the full conditionals of $\boldsymbol{\theta}$, $\lambda^2$, $k$ and $\alpha$. The conditional distribution of $\delta$ is the standard Gaussian-process conditioning formula reported in the main paper, and the conditional of $\gamma_\delta$ is updated by a Metropolis--Hastings step.

\subsection{Full conditional of $\boldsymbol{\theta}$}\label{sup:cond-theta}

Under the prior $\pi(\boldsymbol{\theta}, \lambda^2)\propto \frac{1}{(\lambda^2)^{d/2+1}}$, the conditional posterior of $\boldsymbol{\theta}$ is proportional to,
\begin{align*}
    p(\boldsymbol{\theta}\mid\cdot)\;\propto\;
    \exp\!\left\{-\tfrac{1}{2\lambda^2}\sum_{\zeta_i=0}(y_i-g(x_i)\boldsymbol{\theta})^{2}\right\}
    \exp\!\left\{-\tfrac{1}{2\lambda^2}\sum_{\zeta_i=1}(y_i-g(x_i)\boldsymbol{\theta}-\delta(x_i))^{2}\right\}.
\end{align*}

Equivalently, the likelihood contribution for $\boldsymbol{\theta}$ can be written as
\begin{align*}\small
p(\boldsymbol{\theta}\mid\cdot)\;&\propto\;
\exp\!\left\{
-\frac{1}{2\lambda^2}
\left[
\left(\pmb{y}_{n-m}-g(\pmb{x}_{n-m})\boldsymbol{\theta}\right)^{\top}
\left(\pmb{y}_{n-m}-g(\pmb{x}_{n-m})\boldsymbol{\theta}\right)\right]\right\}\\
&\times
\exp\!\left\{
-\frac{1}{2\lambda^2}\left[
\left(\pmb{y}_m-\delta_m-g(\pmb{x}_m)\boldsymbol{\theta}\right)^{\top}
\left(\pmb{y}_m-\delta_m-g(\pmb{x}_m)\boldsymbol{\theta}\right)
\right]\right\}.
\end{align*}
Completing the square in $\boldsymbol{\theta}$ gives
\begin{align*}
    \small
    -\tfrac{1}{2\lambda^2}\left(\boldsymbol{\theta}^\top\!\Big[\underbrace{\big(g(\pmb{x}_{n-m})^\top g(\pmb{x}_{n-m})+g(\pmb{x}_m)^\top g(\pmb{x}_m)\big)}_{\displaystyle A}\Big]\boldsymbol{\theta}
    -2 \boldsymbol{\theta}^\top\!\Big[\underbrace{\big(g(\pmb{x}_{n-m})^\top \pmb{y}_{n-m}+g(\pmb{x}_m)^\top(\pmb{y}_m-\delta_m)\big)}_{\displaystyle B}\Big]\right).
\end{align*}

Hence the full conditional is multivariate Gaussian,
\begin{equation}\label{eq:supp-theta-full}
\boldsymbol{\theta}\mid Y,\delta,\lambda^2,\boldsymbol{\zeta}
\;\sim\;
\mathcal{N}_d\!\big(\hat{\boldsymbol{\mu}}_{\boldsymbol{\theta}},\hat{\Sigma}_{\boldsymbol{\theta}}\big), \quad \text{with}\quad \hat{\boldsymbol{\mu}}_{\boldsymbol{\theta}} = A^{-1}B,
\qquad
\hat{\Sigma}_{\boldsymbol{\theta}} = \lambda^2\,A^{-1}.
\end{equation}

In the ball-drop application, depending on the inferential scenario, one or both
components of $\boldsymbol{\theta}=(h_0,g_e)^\top$ may be fixed. In such cases,
the full conditional must be derived directly for the free parameter, rather
than sampling from the joint posterior and subsequently overwriting the fixed
component.

The computer model
$f(t,\boldsymbol{\theta})$ defined in the main paper in which $g_e$ is the gravitational parameter.
Writing the second column of the design as $c_i := -\tfrac{1}{2}t_i^{2}$, so that
the computer model reads $f(t_i,\boldsymbol{\theta}) = h_0 + g_e\,c_i$, the
scenario-specific updates are obtained as follows.

\paragraph{Case 1: $g_e$ fixed.}
If $g_e=g_{e}^{fix}$, only $h_0$ is updated ($d=1$) and
\begin{align*}
h_0 \mid . &\;\propto\;
\alpha^{n-m}(2\pi\lambda^2)^{-(n-m)/2}
\exp\left\{
-\frac{1}{2\lambda^2}\displaystyle\sum_{i:\zeta_i=0}(y_i-h_0-c_i g_{e}^{fix})^2
\right\}\\
&\times(1-\alpha)^{m}(2\pi\lambda^2)^{-m/2}
\exp\left\{
-\frac{1}{2\lambda^2}\displaystyle\sum_{i:\zeta_i=1}(y_i-h_0-c_i g_{e}^{fix}-\delta_i)^2
\right\}
(\lambda^2)^{-3/2}\\
\propto
\exp&\left\{
-\frac{1}{2\lambda^2}
\left[
(n-m) h_0^2
-2h_0\sum_{i:\zeta_i=0}(y_i-c_i g_{e}^{fix})
+
m h_0^2
-2h_0\sum_{i:\zeta_i=1}(y_i-c_i g_{e}^{fix}-\delta_i)
\right]
\right\}\\
&\propto
\exp\left(
-\frac{n}{2\lambda^2}
\left(
h_0-
\frac{1}{n}
\left(
\displaystyle
\sum_{i:\zeta_i=0}(y_i-c_i g_{e}^{fix})
+
\sum_{i:\zeta_i=1}(y_i-c_i g_{e}^{fix}-\delta_i)
\right)
\right)^2
\right)
\end{align*}
The conditional posterior distribution of $h_0|.$ is therefore Gaussian $\mathcal{N}(\hat{\mu}_{h_0}, \widehat{\mathrm{Var}}_{h_0})$ with

\begin{align*}
    \hat{\mu}_{h_0}
    =
    \frac{1}{n}
    \left(
    \displaystyle
    \sum_{i:\zeta_i=0}(y_i-c_i g_{e}^{fix})
    +
    \sum_{i:\zeta_i=1}(y_i-c_i g_{e}^{fix}-\delta_i)
    \right)
    \qquad \text{and}\qquad
    \displaystyle
    \widehat{\mathrm{Var}}_{h_0}
    =
    \frac{\lambda^2}{n}
\end{align*}

\paragraph{Case 2: $h_0$ fixed.}
\begin{align*}
g_{e} \mid . &\;\propto\;
\alpha^{(n-m)}(2\pi\lambda^2)^{-(n-m)/2}
\exp\left(
-\frac{1}{2\lambda^2}\displaystyle\sum_{i:\zeta_i=0}(y_i-h_0-c_i g_{e})^2
\right)\\
&\times
(1-\alpha)^m(2\pi\lambda^2)^{-m/2}
\exp\left(
-\frac{1}{2\lambda^2}\displaystyle\sum_{i:\zeta_i=1}(y_i-h_0-c_i g_{e}-\delta_i)^2
\right)(\lambda^2)^{-3/2}
\end{align*}
Letting $y_i' = y_i-h_0$,
\begin{align*}
g_{e} \mid . &\;\propto\;
\exp\left\{
-\frac{1}{2\lambda^2}
\left[
g_{e}^2
\left(
\sum_{i:\zeta_i=0}c_i^2+\sum_{i:\zeta_i=1}c_i^2
\right)
-
2g_{e}
\left(
\sum_{i:\zeta_i=0}c_i y_i'
+
\sum_{i:\zeta_i=1}c_i(y_i'-\delta_i)
\right)
\right]
\right\}\\
&=
\exp\left\{
-\frac{1}{2\lambda^2}
\left(
\sum_{i:\zeta_i=0}c_i^2+\sum_{i:\zeta_i=1}c_i^2
\right)
\left[
g_{e}-
\frac{
\displaystyle
\sum_{i:\zeta_i=0}c_i y_i'
+
\sum_{i:\zeta_i=1}c_i(y_i'-\delta_i)
}{
\displaystyle
\sum_{i:\zeta_i=0}c_i^2+\sum_{i:\zeta_i=1}c_i^2
}
\right]^2
\right\}
\end{align*}
The conditional posterior distribution of $g_e|.$ is indeed Gaussian $\mathcal{N}(\hat{\mu}_{g_e}, \widehat{\mathrm{Var}}_{g_e})$ where

\begin{align*}
    \displaystyle
\hat{\mu}_{g_e}
=
\frac{
\displaystyle
\sum_{i:\zeta_i=0}c_i(y_i-h_0)
+
\sum_{i:\zeta_i=1}c_i(y_i-h_0-\delta_i)
}{
\displaystyle
\sum_{i:\zeta_i=0}c_i^2+\sum_{i:\zeta_i=1}c_i^2
}
\quad \text{and}\quad
\displaystyle
\widehat{\mathrm{Var}}_{g_e}
=
\frac{\lambda^2}{
\displaystyle
\sum_{i:\zeta_i=0}c_i^2+\sum_{i:\zeta_i=1}c_i^2
}
\end{align*}

\paragraph{Case 3: both fixed.}
If both $g_e$ and $h_0$ are fixed, no Gibbs update is performed for $\boldsymbol{\theta}$.

\subsection{Full conditional of $\lambda^2$}\label{sup:cond-lambda}
Using the prior density
$\pi(\delta\mid\lambda^2,k,\gamma_\delta)\propto
(\lambda^2)^{-n/2}\exp\big[-\tfrac{k}{2\lambda^2}\delta^\top\mathrm{Corr}_{\gamma_\delta}^{-1}\delta\big]$
together with the Jeffreys prior
$\pi(\lambda^2)\propto(\lambda^2)^{-(d/2+1)}$,
\begin{align*}
    p(\lambda^2\mid\cdot)&\;\propto\;(\lambda^2)^{-n/2}
    \exp\!\left\{-\tfrac{1}{2\lambda^2}\Big[\textstyle\sum_{i:\zeta_i=0}(y_i-g(x_i)\boldsymbol{\theta})^2+\sum_{i:\zeta_i=1}(y_i-g(x_i)\boldsymbol{\theta}-\delta(x_i))^2\Big]\right\}\\
    &\times
    (\lambda^2)^{-n/2}
    \exp\!\big[-\tfrac{k}{2\lambda^2}\delta^\top\mathrm{Corr}_{\gamma_\delta}^{-1}\delta\big]\,
   (\lambda^2)^{-(d/2+1)}.
\end{align*}
Here $\boldsymbol{\theta}$ denotes the current value of the calibration parameter, with fixed
components kept at their prescribed values whenever such a constrained update is used.
Collecting the powers of $\lambda^2$ gives
\begin{align*}
    p(\lambda^2\mid\cdot)\;\propto\;(\lambda^2)^{-(n+d/2)-1}\exp\!\left\{-\tfrac{C}{\lambda^2}\right\},
\end{align*}
with
\begin{align*}
    C = \tfrac{1}{2}\!\left[\sum_{i:\zeta_i=0}(y_i-g(x_i)\boldsymbol{\theta})^2
    + \sum_{i:\zeta_i=1}(y_i-g(x_i)\boldsymbol{\theta}-\delta(x_i))^2
    + k\,\delta^\top\mathrm{Corr}_{\gamma_\delta}^{-1}\delta\right].
\end{align*}
This is the kernel of an inverse-gamma distribution, so
\begin{equation}\label{eq:supp-lambda-full}
\lambda^2\mid X,Y,\boldsymbol{\theta},\delta,\boldsymbol{\zeta},k,\gamma_\delta
\;\sim\;\mathcal{IG}\!\Big(n+\tfrac{d}{2},\;C\Big).
\end{equation}

\subsection{Full conditional of $k$}\label{sup:cond-k}
Under the prior $k\sim\mathcal{U}(\underline{k},1)$, using
$|\Sigma_\delta|=(\lambda^2/k)^n\,|\mathrm{Corr}_{\gamma_\delta}|$,
\begin{align*}
    p(\delta\mid X, k,\gamma_\delta,\lambda^2)
    \;=\;
    (2\pi)^{-n/2}\,\Big|\tfrac{\lambda^2}{k}\mathrm{Corr}_{\gamma_\delta}\Big|^{-1/2}
    \exp\!\Big[-\tfrac{k}{2\lambda^2}\delta^\top\mathrm{Corr}_{\gamma_\delta}^{-1}\delta\Big]
    \;\propto\;
    k^{n/2}\exp\!\Big[-\tfrac{k}{2\lambda^2}\delta^\top\mathrm{Corr}_{\gamma_\delta}^{-1}\delta\Big].
\end{align*}
Combining this with the prior $k\sim\mathcal{U}(\underline{k},1)$, whose density is
constant on $(\underline{k},1)$, gives
\begin{equation}\label{eq:supp-k-full}
p(k\mid X,\delta,\lambda^2,\gamma_\delta)
\;\propto\;
k^{(n/2+1)-1}\exp\!\Big[-\tfrac{1}{2\lambda^2}\,\delta^\top\mathrm{Corr}_{\gamma_\delta}^{-1}\delta\;k\Big]\,
\mathbf{1}_{(\underline{k},1)}(k).
\end{equation}
which is a Gamma$\big(n/2+1,\;\tfrac{1}{2\lambda^2}\delta^\top\mathrm{Corr}_{\gamma_\delta}^{-1}\delta\big)$
kernel truncated to $(\underline{k},1)$. Samples can be drawn either by inverse-CDF
transform on the truncated Gamma or by rejection sampling using an untruncated Gamma proposal.

\subsection{Full conditional of $\alpha$}\label{sup:cond-alpha}
Under the symmetric Beta prior $\alpha\sim\mathcal{B}(a_0,a_0)$,
and given the allocations $\boldsymbol{\zeta}$, the product of the prior and the completed likelihood 
$\alpha^{\sum_i\mathbf{1}\{\zeta_i=0\}}(1-\alpha)^{\sum_i\mathbf{1}\{\zeta_i=1\}}=\alpha^{n-m}(1-\alpha)^{m}$ contribute
\begin{equation}\label{eq:supp-alpha-full}
\alpha\mid\boldsymbol{\zeta}\;\sim\;
\mathcal{B}\!\big(n-m+a_0,\;m+a_0\big)
\end{equation}
used in the Metropolis-within-Gibbs algorithm of Section~3 
of the main paper.
\section{Orthogonal Gaussian process prior for the discrepancy}\label{sup:ogp}

As mentioned in Section~2 of the main paper, a fundamental difficulty in
Bayesian calibration is the confounding between the calibration parameters
$\boldsymbol{\theta}$ and the discrepancy function $\delta(X)$, since only
the sum $g(x)\boldsymbol{\theta}+\delta(x)$ is directly informed by the data.
To reduce this non-identifiability, \citet{plumlee2017} proposed the
\emph{orthogonal Gaussian process} (OGP) construction, which restricts the
discrepancy to the subspace orthogonal to the span of the computer-code
outputs, through the constraint
\begin{equation}\label{eq:supp-ortho}
\int_{\mathcal{X}} g(\xi)\,\delta(\xi)\,\mathrm{d}\xi = 0.
\end{equation}
This prevents the discrepancy from mimicking variations in the calibration
parameters. Letting $c(x,x')$ denote the covariance kernel of the
(non-orthogonal) GP prior on $\delta$, the OGP prior replaces it with the
orthogonalized kernel
\begin{equation}\label{eq:supp-cstar}
c^{*}(x,x') = c(x,x') - h(x)^\top H^{-1} h(x'),
\end{equation}
where
\begin{equation}\label{eq:supp-hH}
h(x) = \int_{\mathcal{X}} c(x,\xi)\,g(\xi)\,\mathrm{d}\xi,
\qquad
H = \iint_{\mathcal{X}\times\mathcal{X}} c(\xi',\xi)\,g(\xi)\,g(\xi')^\top\,\mathrm{d}\xi\,\mathrm{d}\xi'.
\end{equation}

\subsection{Closed-form expressions for $h(x)$ and $H$ in the ball-drop application}
\label{sec:closed-form-h-H}
 
The discrepancy process $\delta$ is a Gaussian process on $\mathcal{X}=[0,1]$ in the
rescaled variable
\begin{align*}
    \xi \;=\; \frac{t - t_{\min}}{a}, \qquad a \;=\; t_{\max} - t_{\min},
\end{align*}
with the exponential covariance kernel
\begin{align*}
    c(x,\xi) = \frac{\lambda^{2}}{k}\exp\!\left(-\frac{|x-\xi|}{\gamma_\delta}\right),\qquad\gamma_\delta>0,
\end{align*}
where $\sigma_{\delta}^{2} = \lambda^{2}/k$ is the marginal variance and
$\gamma_{\delta}$ is the correlation length (both on the rescaled $[0,1]$ scale).
The physical model, in normalized coordinates, is
\begin{align*}
    h(\xi,\boldsymbol{\theta}) \;=\; h_{0} - \tfrac{1}{2}\, g_e\,(a\xi + t_{\min})^{2},
    \qquad \boldsymbol{\theta} \;=\; (h_{0},\, g_e)^{\top},
\end{align*}
so that the design row in normalized coordinates is
\begin{align*}
    g(\xi) \;=\;
    \begin{pmatrix}
       1 \\[2pt]
       -\dfrac{1}{2}\,(a\xi + t_{\min})^{2}
    \end{pmatrix}.
\end{align*}
To evaluate~\eqref{eq:supp-cstar}, we compute both $h(x)$ and $H$ in closed forms and throughout, we write $\gamma$ for $\gamma_{\delta}$
and $E \;=\; e^{-1/\gamma}$ in order to simplify notations.

\subsubsection{Computation of $h(x)$}
 
\begin{equation}\label{eq:supp-h-expr}
h(x) = \frac{\lambda^{2}}{k}
\begin{pmatrix}
J(x,\gamma)\\[6pt]
-\dfrac{1}{2}\,\left[a^{2} I(x,\gamma) + 2a\,t_{\min}K(x,\gamma)+t_{\min}^{2}J(x,\gamma)\right]
\end{pmatrix},
\end{equation}
We define
\begin{align*}
    \left\{
    \begin{aligned}
      J(x,\gamma) &= \int_0^1 e^{-|x-\xi|/\gamma}\,\mathrm{d}\xi,\\
      K(x,\gamma) &= \int_{0}^{1} \xi\, e^{-|x-\xi|/\gamma}\, d\xi,\\
      I(x,\gamma) &= \int_{0}^{1} \xi^{2}\, e^{-|x-\xi|/\gamma}\, d\xi.
    \end{aligned}
    \right.
\end{align*}

\subsubsection*{Computation of $J(x,\gamma)$}

\textbf{Left piece.} Substitute $u = x - \xi$:
\begin{align*}
    \int_{0}^{x} e^{-(x-\xi)/\gamma}\, d\xi
    \;=\; \int_{0}^{x} e^{-u/\gamma}\, du
     \;=\; \gamma\bigl(1 - e^{-x/\gamma}\bigr).
\end{align*}
 
\textbf{Right piece.} Substitute $u = \xi - x$:
\begin{align*}
    \int_{x}^{1} e^{-(\xi-x)/\gamma}\, d\xi
    \;=\; \int_{0}^{1-x} e^{-u/\gamma}\, du
    \;=\; \gamma\bigl(1 - e^{-(1-x)/\gamma}\bigr).
\end{align*}
 
\textbf{Sum.}
\begin{align*}
    J(x,\gamma) \;=\; \gamma\bigl(2 - e^{-x/\gamma} - e^{-(1-x)/\gamma}\bigr).
\end{align*}

\subsubsection*{Computation of $K(x,\gamma)$}

\begin{align*}
    \int_{0}^{1} \xi\, e^{-|x-\xi|/\gamma}\, d\xi
    \;=\; \int_{0}^{x} \xi\, e^{-(x-\xi)/\gamma}\, d\xi
    \;+\; \int_{x}^{1} \xi\, e^{-(\xi-x)/\gamma}\, d\xi.
\end{align*}
 
\textbf{Left piece $K_{L}(x)$.}
\begin{align*}
    K_{L}(x) \;=\; e^{-x/\gamma}\int_{0}^{x}\xi\, e^{\xi/\gamma}\, d\xi.
\end{align*}

Integration by parts with $u = \xi$, $dv = e^{\xi/\gamma}\, d\xi$, so $v = \gamma\, e^{\xi/\gamma}$:
\begin{align*}
    \int_{0}^{x}\xi\, e^{\xi/\gamma}\, d\xi
    \;=\; \bigl[\gamma\xi\, e^{\xi/\gamma}\bigr]_{0}^{x} - \gamma\int_{0}^{x} e^{\xi/\gamma}\, d\xi
    \;=\; \gamma x\, e^{x/\gamma} - \gamma^{2}\bigl(e^{x/\gamma} - 1\bigr).
\end{align*}
 which deduce
\begin{align*}
    K_{L}(x) \;=\; \gamma x - \gamma^{2} + \gamma^{2}\, e^{-x/\gamma}.
\end{align*}
\textbf{Right piece $K_{R}(x)$.}
\begin{align*}
    K_{R}(x) \;=\; \int_{0}^{1-x}(u+x)\, e^{-u/\gamma}\, du
    \;=\; \int_{0}^{1-x} u\, e^{-u/\gamma}\, du \;+\; x\int_{0}^{1-x} e^{-u/\gamma}\, du.
\end{align*}
The second integral is $\gamma\bigl(1 - e^{-(1-x)/\gamma}\bigr)$. For the first, integration by parts:
\begin{align*}
    \int_{0}^{1-x} u\, e^{-u/\gamma}\, du
    \;=\; \bigl[-\gamma u\, e^{-u/\gamma}\bigr]_{0}^{1-x} + \gamma\int_{0}^{1-x} e^{-u/\gamma}\, du
    \;=\; -\gamma(1-x)\, e^{-(1-x)/\gamma} + \gamma^{2}\bigl(1 - e^{-(1-x)/\gamma}\bigr).
\end{align*}
which leads to 
\begin{align*}
    K_{R}(x) \;=\; \gamma x + \gamma^{2} - \gamma(1+\gamma)\, e^{-(1-x)/\gamma}
\end{align*}

\textbf{Sum.}
\begin{align*}
    K(x, \gamma) \;=\; 2\gamma x \;+\; \gamma^{2}\, e^{-x/\gamma} \;-\; \gamma(1+\gamma)\, e^{-(1-x)/\gamma}.
\end{align*}

\subsubsection*{Computation of $I(x,\gamma)$}

\begin{align*}
    \int_{0}^{1} \xi^{2}\, e^{-|x-\xi|/\gamma}\, d\xi
\;=\; \int_{0}^{x} \xi^{2}\, e^{-(x-\xi)/\gamma}\, d\xi
\;+\; \int_{x}^{1} \xi^{2}\, e^{-(\xi-x)/\gamma}\, d\xi.
\end{align*}

\textbf{Left piece $I_{L}(x)$.} Integrate by parts twice.
First, $u = \xi^{2}$, $dv = e^{\xi/\gamma}d\xi$:
\begin{align*}
    \int_{0}^{x}\xi^{2} e^{\xi/\gamma}\, d\xi
    \;=\; \gamma x^{2} e^{x/\gamma} - 2\gamma\int_{0}^{x}\xi\, e^{\xi/\gamma}\, d\xi.
\end{align*}

\begin{align*}
    \int_{0}^{x}\xi^{2} e^{\xi/\gamma}\, d\xi
    \;=\; \gamma x^{2} e^{x/\gamma} - 2\gamma^{2} x\, e^{x/\gamma} + 2\gamma^{3} e^{x/\gamma} - 2\gamma^{3}.
\end{align*}
 
\begin{align*}
    I_{L}(x) \;=\; \gamma x^{2} - 2\gamma^{2} x + 2\gamma^{3} - 2\gamma^{3}\, e^{-x/\gamma}.
\end{align*}

\textbf{Right piece $I_{R}(x)$.} Substitute $u = \xi - x$ and expand $(u+x)^{2}$:
\begin{align*}
    I_{R}(x)
    \;=\; \int_{0}^{1-x} u^{2}\, e^{-u/\gamma}\, du
    \;+\; 2x\int_{0}^{1-x} u\, e^{-u/\gamma}\, du
    \;+\; x^{2}\int_{0}^{1-x} e^{-u/\gamma}\, du.
\end{align*}

The last two have already been computed. For the first, integration by parts with $u' = u^{2}$:
\begin{align*}
    \int_{0}^{1-x} u^{2}\, e^{-u/\gamma}\, du
    \;=\; -\gamma(1-x)^{2} e^{-(1-x)/\gamma} + 2\gamma\int_{0}^{1-x} u\, e^{-u/\gamma}\, du.
\end{align*}

\begin{align*}
    I_{R}(x) \;=\; \gamma x^{2} + 2\gamma^{2} x + 2\gamma^{3} - \bigl(\gamma + 2\gamma^{2} + 2\gamma^{3}\bigr)\, e^{-(1-x)/\gamma}.
\end{align*}

\textbf{Sum.} 
\begin{align*}
    I(x,\gamma) \;=\; 2\gamma x^{2} + 4\gamma^{3} - 2\gamma^{3}\, e^{-x/\gamma} - \gamma\bigl(1 + 2\gamma + 2\gamma^{2}\bigr)\, e^{-(1-x)/\gamma}.
\end{align*}
 
\subsubsection{Computation of $H(\gamma_{\delta})$}
$H$ is the $2\times 2$ matrix
\begin{align*}
    H(\gamma_{\delta})
    \;=\; \int_{0}^{1}\!\!\int_{0}^{1} g(x)\, e^{-|x-\xi|/\gamma}\, g(\xi)^{\top}\, dx\, d\xi.
\end{align*}
 
By symmetry of the kernel, $H_{12} = H_{21}$, so three distinct entries must be computed.
The full covariance-scaled matrix is $(\lambda^{2}/k)\, H(\gamma_{\delta})$.

For each entry, we perform the inner integral over $\xi$ first; the result is
$J(x,\gamma)$, $K(x,\gamma)$, or $I(x,\gamma)$. The remaining (outer) integral over $x$ is then a
single integral on $[0,1]$.

\subsubsection*{Entry $H_{11}$}
 
With $g_{1}(\xi) = 1$,
\begin{align*}
    H_{11}
    \;=\; \int_{0}^{1}\Bigl[\underbrace{\int_{0}^{1} e^{-|x-\xi|/\gamma}\, d\xi}_{=\, J(x,\gamma)}\Bigr]\, dx
    \;=\; \int_{0}^{1} J(x,\gamma)\, dx.
\end{align*}

We integrate $J(x,\gamma) = 2\gamma - \gamma e^{-x/\gamma} - \gamma e^{-(1-x)/\gamma}$ piece by piece:
\begin{itemize}
\item $\displaystyle \int_{0}^{1} 2\gamma\, dx \;=\; 2\gamma$.
\item $\displaystyle \int_{0}^{1} e^{-x/\gamma}\, dx \;=\; \gamma - \gamma E$, so $\displaystyle -\gamma\int = -\gamma^{2} + \gamma^{2} E$.
\item $\displaystyle \int_{0}^{1} e^{-(1-x)/\gamma}\, dx \;=\; \gamma - \gamma E$ (by $u = 1-x$), so $\displaystyle -\gamma\int = -\gamma^{2} + \gamma^{2} E$.
\end{itemize}
Summing:
\begin{align*}
    H_{11} \;=\; 2\gamma - 2\gamma^{2} + 2\gamma^{2}\, E.
\end{align*}

\subsubsection*{Entry $H_{12} = H_{21}$}
 
With $g_{1}(x) = 1$ and $g_{2}(\xi) = -\tfrac{1}{2}(a\xi + t_{\min})^{2}$, the inner integral is
\begin{align*}
    \int_{0}^{1} (a\xi + t_{\min})^{2}\, e^{-|x-\xi|/\gamma}\, d\xi
    \;=\; a^{2}\, I(x,\gamma) \;+\; 2 a t_{\min}\, K(x,\gamma) \;+\; t_{\min}^{2}\, J(x,\gamma).
\end{align*}

Hence
\begin{align*}
    H_{12} \;=\; -\tfrac{1}{2}\Bigl[a^{2}\int_{0}^{1} I(x,\gamma)\, dx \;+\; 2 a t_{\min}\int_{0}^{1} K(x,\gamma)\, dx \;+\; t_{\min}^{2}\int_{0}^{1} J(x,\gamma)\, dx\Bigr].
\end{align*}

The last integral is $H_{11}$, already computed. The other two are computed below.
 
\paragraph{Integral $\int_{0}^{1} K(x,\gamma)\, dx$.}
With $K(x,\gamma) = 2\gamma x + \gamma^{2} e^{-x/\gamma} - \gamma(1+\gamma)\, e^{-(1-x)/\gamma}$:
\begin{itemize}
\item $\displaystyle \int_{0}^{1} 2\gamma x\, dx \;=\; \gamma$.
\item $\displaystyle \gamma^{2}\int_{0}^{1} e^{-x/\gamma}\, dx \;=\; \gamma^{2}(\gamma - \gamma E) \;=\; \gamma^{3} - \gamma^{3} E$.
\item $\displaystyle -\gamma(1+\gamma)\int_{0}^{1} e^{-(1-x)/\gamma}\, dx \;=\; -\gamma^{2}(1+\gamma)(1 - E) \;=\; -\gamma^{2} - \gamma^{3} + (\gamma^{2} + \gamma^{3})\, E$.
\end{itemize}
Summing:
\begin{align*}
    \int_{0}^{1} K(x,\gamma)\, dx \;=\; \gamma - \gamma^{2} + \gamma^{2}\, E.
\end{align*}
 
\paragraph{Integral $\int_{0}^{1} I(x,\gamma)\, dx$.}
With $I(x,\gamma) = 2\gamma x^{2} + 4\gamma^{3} - 2\gamma^{3} e^{-x/\gamma} - \gamma(1 + 2\gamma + 2\gamma^{2})\, e^{-(1-x)/\gamma}$:
\begin{itemize}
\item $\displaystyle \int_{0}^{1} 2\gamma x^{2}\, dx \;=\; \tfrac{2\gamma}{3}$.
\item $\displaystyle \int_{0}^{1} 4\gamma^{3}\, dx \;=\; 4\gamma^{3}$.
\item $\displaystyle -2\gamma^{3}\int_{0}^{1} e^{-x/\gamma}\, dx \;=\; -2\gamma^{3}(\gamma - \gamma E) \;=\; -2\gamma^{4} + 2\gamma^{4} E$.
\item $\displaystyle -\gamma(1 + 2\gamma + 2\gamma^{2})\int_{0}^{1} e^{-(1-x)/\gamma}\, dx
\;=\; -\gamma^{2}(1 + 2\gamma + 2\gamma^{2})(1 - E)$\\
\phantom{xx}$\;=\; -\gamma^{2} - 2\gamma^{3} - 2\gamma^{4} + (\gamma^{2} + 2\gamma^{3} + 2\gamma^{4})\, E$.
\end{itemize}
Summing:
\begin{align*}
    \int_{0}^{1} I(x,\gamma)\, dx \;=\; \tfrac{2\gamma}{3} - \gamma^{2} + 2\gamma^{3} - 4\gamma^{4} + \bigl(\gamma^{2} + 2\gamma^{3} + 4\gamma^{4}\bigr)\, E.
\end{align*}
 
\paragraph{Assembly of $H_{12}$.}
\begin{align*}
    \begin{aligned}
    H_{12} \;=\; -\tfrac{1}{2}\Bigl[\;
    & a^{2}\Bigl(\tfrac{2\gamma}{3} - \gamma^{2} + 2\gamma^{3} - 4\gamma^{4} + (\gamma^{2} + 2\gamma^{3} + 4\gamma^{4})\, E\Bigr) \\
    +\;& 2 a t_{\min}\Bigl(\gamma - \gamma^{2} + \gamma^{2}\, E\Bigr) \\
    +\;& t_{\min}^{2}\Bigl(2\gamma - 2\gamma^{2} + 2\gamma^{2}\, E\Bigr)
    \;\Bigr].
    \end{aligned}
\end{align*}

(When $t_{\min} = 0$, this collapses to $-\tfrac{a^{2}}{2}\int_{0}^{1} I(x,\gamma)\, dx$, recovering the
expression for the un-shifted design.)
 
\subsubsection*{Entry $H_{22}$}
 
With $g_{2}(x) = -\tfrac{1}{2}(ax + t_{\min})^{2}$, the inner integral is again
$a^{2} I(x,\gamma) + 2at_{\min} K(x,\gamma) + t_{\min}^{2} J(x,\gamma)$, so
\begin{align*}
    H_{22} \;=\; \tfrac{1}{4}\int_{0}^{1}(ax + t_{\min})^{2}\Bigl[a^{2} I(x,\gamma) + 2at_{\min} K(x,\gamma) + t_{\min}^{2} J(x,\gamma)\Bigr]\, dx.
\end{align*}

We will use the following elementary integrals:
\begin{align*}
\int_{0}^{1} x\, e^{-x/\gamma}\, dx &\;=\; \gamma^{2} - \gamma E - \gamma^{2} E,
& \int_{0}^{1} x\, e^{-(1-x)/\gamma}\, dx &\;=\; \gamma - \gamma^{2} + \gamma^{2} E, \\
\int_{0}^{1} x^{2}\, e^{-x/\gamma}\, dx &\;=\; 2\gamma^{3} - \gamma E - 2\gamma^{2} E - 2\gamma^{3} E,
& \int_{0}^{1} x^{2}\, e^{-(1-x)/\gamma}\, dx &\;=\; \gamma - 2\gamma^{2} + 2\gamma^{3} - 2\gamma^{3} E.
\end{align*}

\paragraph{Integral $\int_{0}^{1} x J(x,\gamma)\, dx$.}
\begin{itemize}
\item $\displaystyle \int 2\gamma x\, dx \;=\; \gamma$.
\item $\displaystyle -\gamma \int x e^{-x/\gamma}\, dx \;=\; -\gamma^{3} + \gamma^{2} E + \gamma^{3} E$.
\item $\displaystyle -\gamma \int x e^{-(1-x)/\gamma}\, dx \;=\; -\gamma^{2} + \gamma^{3} - \gamma^{3} E$.
\end{itemize}
Summing (non-$E$: $\gamma - \gamma^{2}$; $E$ coeff: $\gamma^{2}$):
\begin{align*}
    \int_{0}^{1} x J(x,\gamma)\, dx \;=\; \gamma - \gamma^{2} + \gamma^{2}\, E.
\end{align*}
 
\paragraph{Integral $\int_{0}^{1} x^{2} J(x,\gamma)\, dx$.}
\begin{itemize}
\item $\displaystyle \int 2\gamma x^{2}\, dx \;=\; \tfrac{2\gamma}{3}$.
\item $\displaystyle -\gamma \int x^{2} e^{-x/\gamma}\, dx \;=\; -2\gamma^{4} + \gamma^{2} E + 2\gamma^{3} E + 2\gamma^{4} E$.
\item $\displaystyle -\gamma \int x^{2} e^{-(1-x)/\gamma}\, dx \;=\; -\gamma^{2} + 2\gamma^{3} - 2\gamma^{4} + 2\gamma^{4} E$.
\end{itemize}
Summing:
\begin{align*}
    \int_{0}^{1} x^{2} J(x,\gamma)\, dx \;=\; \tfrac{2\gamma}{3} - \gamma^{2} + 2\gamma^{3} - 4\gamma^{4} + (\gamma^{2} + 2\gamma^{3} + 4\gamma^{4})\, E.
\end{align*}
 
\paragraph{Integral $\int_{0}^{1} x K(x,\gamma)\, dx$.}
\begin{itemize}
\item $\displaystyle \int 2\gamma x^{2}\, dx \;=\; \tfrac{2\gamma}{3}$.
\item $\displaystyle \gamma^{2}\int x e^{-x/\gamma}\, dx \;=\; \gamma^{4} - \gamma^{3} E - \gamma^{4} E$.
\item $\displaystyle -\gamma(1+\gamma)\int x e^{-(1-x)/\gamma}\, dx
\;=\; -\gamma^{2}(1-\gamma^{2}) - \gamma^{3}(1+\gamma)\, E$
\\ \phantom{xx} $\;=\; -\gamma^{2} + \gamma^{4} - (\gamma^{3} + \gamma^{4})\, E$.
\end{itemize}
Summing:
\begin{align*}
    \int_{0}^{1} x K(x,\gamma)\, dx \;=\; \tfrac{2\gamma}{3} - \gamma^{2} + 2\gamma^{4} - (2\gamma^{3} + 2\gamma^{4})\, E.
\end{align*}
 
\paragraph{Integral $\int_{0}^{1} x^{2} K(x,\gamma)\, dx$.}
\begin{itemize}
\item $\displaystyle \int 2\gamma x^{3}\, dx \;=\; \tfrac{\gamma}{2}$.
\item $\displaystyle \gamma^{2}\int x^{2} e^{-x/\gamma}\, dx \;=\; 2\gamma^{5} - \gamma^{3} E - 2\gamma^{4} E - 2\gamma^{5} E$.
\item $\displaystyle -\gamma(1+\gamma)\int x^{2} e^{-(1-x)/\gamma}\, dx
\;=\; -\gamma^{2} + \gamma^{3} - 2\gamma^{5} + (2\gamma^{4} + 2\gamma^{5})\, E$,
\\ using $(1+\gamma)(\gamma - 2\gamma^{2} + 2\gamma^{3}) = \gamma - \gamma^{2} + 2\gamma^{4}$.
\end{itemize}
\begin{align*}
    \int_{0}^{1} x^{2} K(x,\gamma)\, dx \;=\; \tfrac{\gamma}{2} - \gamma^{2} + \gamma^{3} - \gamma^{3}\, E.
\end{align*}
 
\paragraph{Integral $\int_{0}^{1} x I(x,\gamma)\, dx$.}
\begin{itemize}
\item $\displaystyle \int 2\gamma x^{3}\, dx \;=\; \tfrac{\gamma}{2}$.
\item $\displaystyle \int 4\gamma^{3} x\, dx \;=\; 2\gamma^{3}$.
\item $\displaystyle -2\gamma^{3}\int x e^{-x/\gamma}\, dx \;=\; -2\gamma^{5} + 2\gamma^{4} E + 2\gamma^{5} E$.
\item $\displaystyle -\gamma(1 + 2\gamma + 2\gamma^{2})\int x e^{-(1-x)/\gamma}\, dx
\;=\; -\gamma^{2} - \gamma^{3} + 2\gamma^{5} - (\gamma^{3} + 2\gamma^{4} + 2\gamma^{5})\, E$,
\\ using $(1 + 2\gamma + 2\gamma^{2})(\gamma - \gamma^{2}) = \gamma + \gamma^{2} - 2\gamma^{4}$.
\end{itemize}
Summing:
\begin{align*}
    \int_{0}^{1} x I(x,\gamma)\, dx \;=\; \tfrac{\gamma}{2} - \gamma^{2} + \gamma^{3} - \gamma^{3}\, E.
\end{align*}
 
\paragraph{Integral $\int_{0}^{1} x^{2} I(x,\gamma)\, dx$.}
\begin{itemize}
\item $\displaystyle \int 2\gamma x^{4}\, dx \;=\; \tfrac{2\gamma}{5}$.
\item $\displaystyle \int 4\gamma^{3} x^{2}\, dx \;=\; \tfrac{4\gamma^{3}}{3}$.
\item $\displaystyle -2\gamma^{3}\int x^{2} e^{-x/\gamma}\, dx
\;=\; -4\gamma^{6} + 2\gamma^{4} E + 4\gamma^{5} E + 4\gamma^{6} E$.
\item $\displaystyle -\gamma(1 + 2\gamma + 2\gamma^{2})\int x^{2} e^{-(1-x)/\gamma}\, dx
\;=\; -\gamma^{2} - 4\gamma^{6} + (2\gamma^{4} + 4\gamma^{5} + 4\gamma^{6})\, E$,
\\ using $(1 + 2\gamma + 2\gamma^{2})(1 - 2\gamma + 2\gamma^{2}) = 1 + 4\gamma^{4}$.
\end{itemize}
Summing:
\begin{align*}
    \int_{0}^{1} x^{2} I(x,\gamma)\, dx \;=\; \tfrac{2\gamma}{5} - \gamma^{2} + \tfrac{4\gamma^{3}}{3} - 8\gamma^{6} + (4\gamma^{4} + 8\gamma^{5} + 8\gamma^{6})\, E.
\end{align*}
 
\begin{align*}
    \begin{aligned}
       H_{22} \;=\; \tfrac{1}{4}\Bigl[\;
       & a^{4}\Bigl(\tfrac{2\gamma}{5} - \gamma^{2} + \tfrac{4\gamma^{3}}{3} - 8\gamma^{6} + (4\gamma^{4} + 8\gamma^{5} + 8\gamma^{6})\, E\Bigr) \\
       +\;& 4 a^{3} t_{\min}\Bigl(\tfrac{\gamma}{2} - \gamma^{2} + \gamma^{3} - \gamma^{3}\, E\Bigr) \\
       +\;& 2 a^{2} t_{\min}^{2}\Bigl[\Bigl(\tfrac{2\gamma}{3} - \gamma^{2} + 2\gamma^{3} - 4\gamma^{4} + (\gamma^{2} + 2\gamma^{3} + 4\gamma^{4})\, E\Bigr) \\[-2pt]
       & \hphantom{2 a^{2} t_{\min}^{2}\Bigl[\,} + 2\Bigl(\tfrac{2\gamma}{3} - \gamma^{2} + 2\gamma^{4} - (2\gamma^{3} + 2\gamma^{4})\, E\Bigr)\Bigr] \\
       +\;& 4 a t_{\min}^{3}\Bigl(\gamma - \gamma^{2} + \gamma^{2}\, E\Bigr) \\
       +\;& t_{\min}^{4}\Bigl(2\gamma - 2\gamma^{2} + 2\gamma^{2}\, E\Bigr)
       \;\Bigr].
    \end{aligned}
\end{align*}

(When $t_{\min} = 0$, only the first line survives, giving
$H_{22} = \tfrac{a^{4}}{4}\int_{0}^{1} x^{2} I(x,\gamma)\, dx$, recovering the expression for the
un-shifted design.)

\section{Additional simulation results}\label{sup:sim}

Recall that Section~4.1 reports four inferential settings obtained by combining two modeling choices: the use of thresholding versus no thresholding and fixing $g_e=g_e^\ast$ versus estimating $g_e$ within the MCMC algorithm. All experiments are conducted on datasets simulated from $\MF_1$, where the discrepancy is inactive on the first part of the input domain and becomes active afterwards.
The four settings considered are:
no thresholding with $(g_e,h_0)$ estimated;
no thresholding with $g_e$ fixed at its true value (main paper);
thresholding with $(g_e,h_0)$ estimated;
thresholding with $g_e$ fixed (main paper).
Sections~\ref{sup:scnI_III}--\ref{sup:oracle} report the complete results for the cases with $(g_e,h_0)$ estimated and thresholding, additional experiments under the OGP prior, and an oracle identifiability diagnostic for $(k,\gamma_\delta)$. Throughout this section, we use the same simulation setup as in Section~4.1: $\boldsymbol{\theta}^\ast=(46.45,9.8)^\top$, ${\lambda^2}^\ast=0.01$, $k^\ast=0.1$, $n=45$ design points on the normalized time instants of the \emph{Blue Basketball} experiment, and $50$ replicated datasets per configuration.

\subsection{Additional diagnostics for the simulation under $\MF_0$}\label{sup:m0_extra}

This subsection collects two complementary diagnostics for the experiment of Section~4.1 of the main paper in which $50$ datasets of size $n=45$ are simulated from $\MF_0$
~and inference is carried out under the mixture model $\MF_\alpha$. The main paper
reports the posterior summaries of $(\boldsymbol{\theta},\lambda^2,\alpha)$; we report here (i)~a posterior-predictive check against the simulated trajectories
and (ii)~the empirical coverage of central credible intervals for
$(g_e,h_0,\lambda^2)$.

\paragraph{Posterior-predictive check.}
 
\begin{figure}[h!]
    \centering
    \begin{minipage}{0.45\textwidth}
        \centering
        \includegraphics[width=\linewidth]{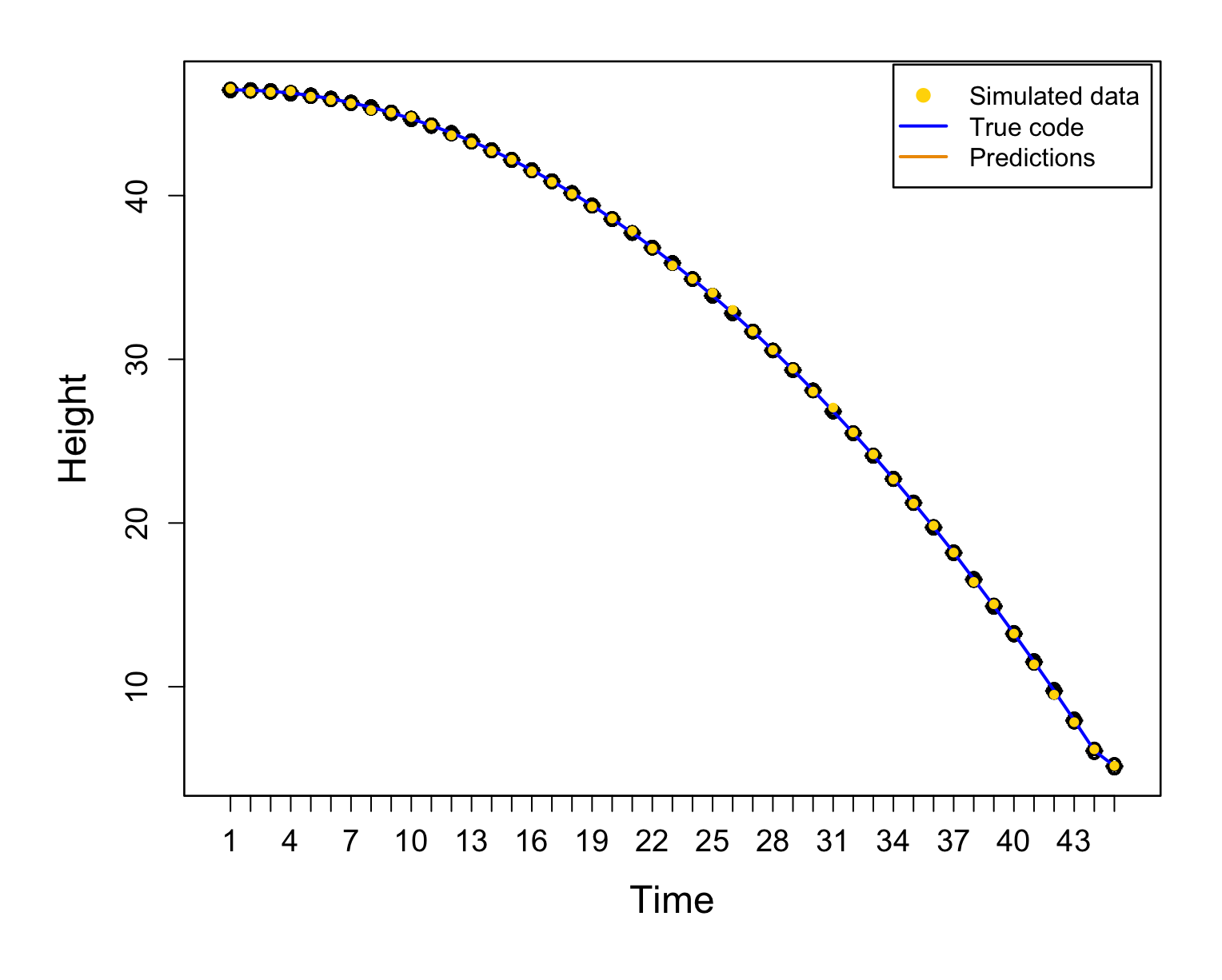}
    \end{minipage}
    \hfill
    \begin{minipage}{0.45\textwidth}
        \centering
        \includegraphics[width=\linewidth]{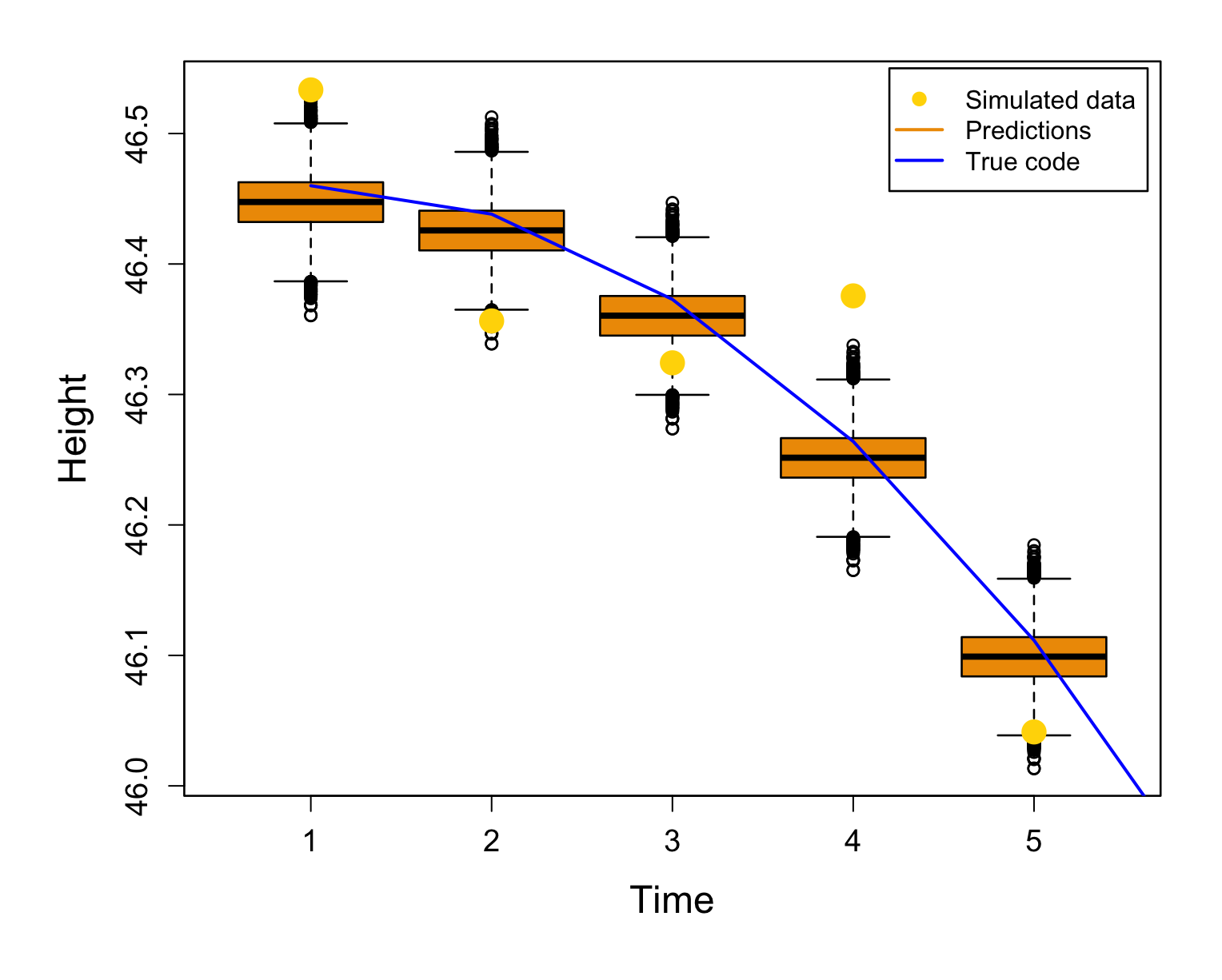}
    \end{minipage}
    \caption{Comparison between a simulated dataset and the posterior predictive distribution under the mixture model $\MF_\alpha$ when the data come from $\MF_0$. Left: full input domain. Right: zoomed view on the first five design points. In this experiment, the variance parameter $\lambda^2$ is endowed with the Jeffreys prior.}
    \label{fig:compare-classic}
\end{figure}
Figure~\ref{fig:compare-classic} compares the simulated observations with the posterior predictive distribution. The left panel shows that the posterior predictive mean closely tracks the true code output across the entire input domain, with no visible systematic deviation. The right panel provides a zoomed view of the first five input points. The true code values lie near the
centre of the posterior predictive distributions, and the spread of the boxplots reflects only observational variability. No systematic bias or structural discrepancy is observed.

\paragraph{Empirical coverage of credible intervals.}
To assess the calibration accuracy of the posterior distributions obtained from the MCMC algorithm, we evaluate the empirical coverage of central credible intervals across the $50$ simulated datasets. For each dataset and each parameter $\phi \in \{g_e, h_0,\lambda^2\}$, we construct a central posterior credible interval from the empirical quantiles of the MCMC chain and record whether the true value lies inside it.

For each parameter $\phi \in \{g_e,h_0,\lambda^2\}$ and each simulated dataset $v=1,\ldots,V$ (with $V=50$), we compute a central posterior credible interval $C_{\phi,v}^{(\ell)}=\bigl[L_{\phi,v}^{(\ell)},\,U_{\phi,v}^{(\ell)}\bigr],$ at level $\ell \in \{0.50,0.95\}$.
The bounds $L_{\phi,v}^{(\ell)}$ and $U_{\phi,v}^{(\ell)}$ are computed from the post burn-in MCMC sample of $\phi$ as the empirical quantiles of order $(1-\ell)/2$ and $(1+\ell)/2$, respectively; that is, $L_{\phi,v}^{(0.95)}$ and $U_{\phi,v}^{(0.95)}$ correspond to the empirical $2.5\%$ and $97.5\%$ quantiles of the posterior sample, while $L_{\phi,v}^{(0.50)}$ and $U_{\phi,v}^{(0.50)}$ correspond to the empirical $25\%$ and $75\%$ quantiles. The empirical coverage reported in Table~\ref{tab:ci_coverage} is defined as $
\mathrm{Coverage}_{\phi}^{(\ell)}
=
\frac{1}{V}\sum_{v=1}^{V}
\mathbf{1}\!\left\{\phi^\star \in C_{\phi,v}^{(\ell)}\right\},$
that is, the proportion of simulated datasets for which the true parameter value $\phi^\star$ belongs to the corresponding posterior credible interval. The mean CI length is
$
\mathrm{Mean\ CI\ length}_{\phi}^{(\ell)}
=
\frac{1}{V}\sum_{v=1}^{V}
\left(U_{\phi,v}^{(\ell)}-L_{\phi,v}^{(\ell)}\right),
$
and summarizes the average width of the posterior credible intervals across replications. Finally, letting  $\hat{\phi}_v=\mathbb{E}(\phi\mid Y_v)$ denote the posterior mean of the parameter $\phi \in \{g_e,h_0,\lambda^2\}$ computed from the post burn-in MCMC sample of dataset $v$, the mean absolute bias is computed as 
$
\mathrm{Mean\ abs.\ bias}_{\phi}
=
\frac{1}{V}\sum_{v=1}^{V}\left|\hat{\phi}_v-\phi^\star\right|.
$
Since this quantity is based on posterior means rather than on credible intervals, it does not depend on the confidence level, which explains why the same value is reported for the $50\%$ and $95\%$ rows.
  \begin{table}[h!]
\centering
\begin{tabular}{lcccc}
\hline
Parameter & CI level & Coverage & Mean CI length & Mean abs.\ bias \\
\hline
$g_e$        & $50\%(95\%)$ & $42\%(96\%)$ & 0.0151(0.0446) & 0.0092(0.0092) \\
$h_0$      & $50\%(95\%)$ & $48\%(90\%)$ & 0.0295(0.0874) & 0.0188(0.0188) \\
$\lambda^2$& $50\%(95\%)$ & $48\%(96\%)$ & 0.0028(0.0086) & 0.0017(0.0017) \\
\hline
\end{tabular}
\caption{Empirical coverage, average length and mean absolute bias of central credible intervals for $g_e$, $h_0$ and $\lambda^2$ across 50 simulated datasets.}
\label{tab:ci_coverage}
\end{table}
Table~\ref{tab:ci_coverage} shows that the empirical coverage of all three parameters is close to the nominal levels, with a mild under-coverage at the $50\%$ level ($42\%$--$48\%$) and coverage between $90\%$ and $96\%$ at the $95\%$ level. The mean absolute bias is small for $g_e$, $h_0$ and $\lambda^2$ alike, indicating accurate recovery of both the calibration parameters and the noise variance. 
The slight under-coverage at the $50\%$ level is consistent with the well-known difficulty of fully calibrating variance-related uncertainty in GP discrepancy models, where uncertainty on scale parameters tends to be underestimated.

\begin{figure}[h!]
  \centering
  \begin{subfigure}{0.28\textwidth}
    \centering
    \includegraphics[width=\linewidth]{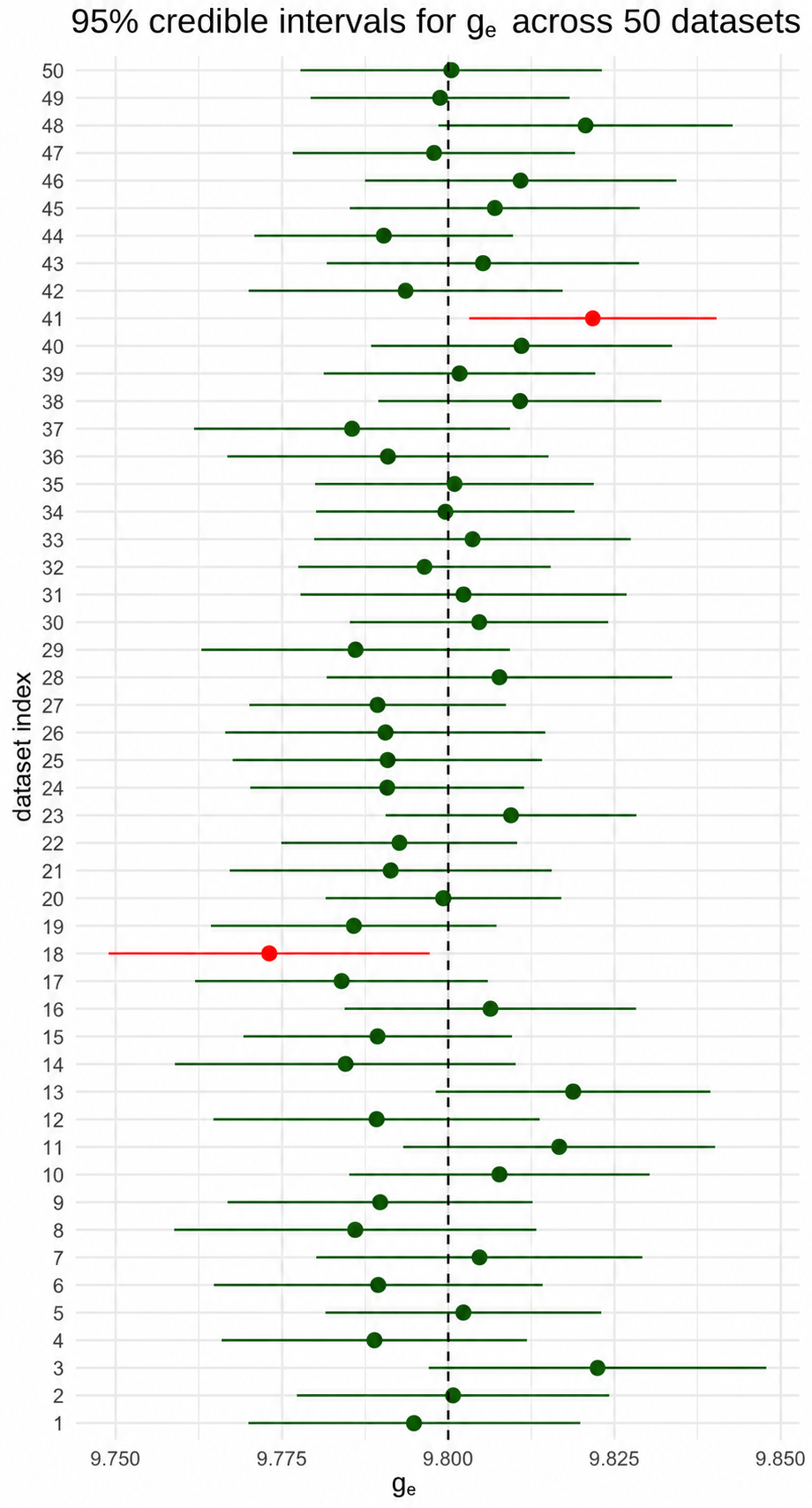}
  \end{subfigure}
  \hfill
  \begin{subfigure}{0.28\textwidth}
    \centering
    \includegraphics[width=\linewidth]{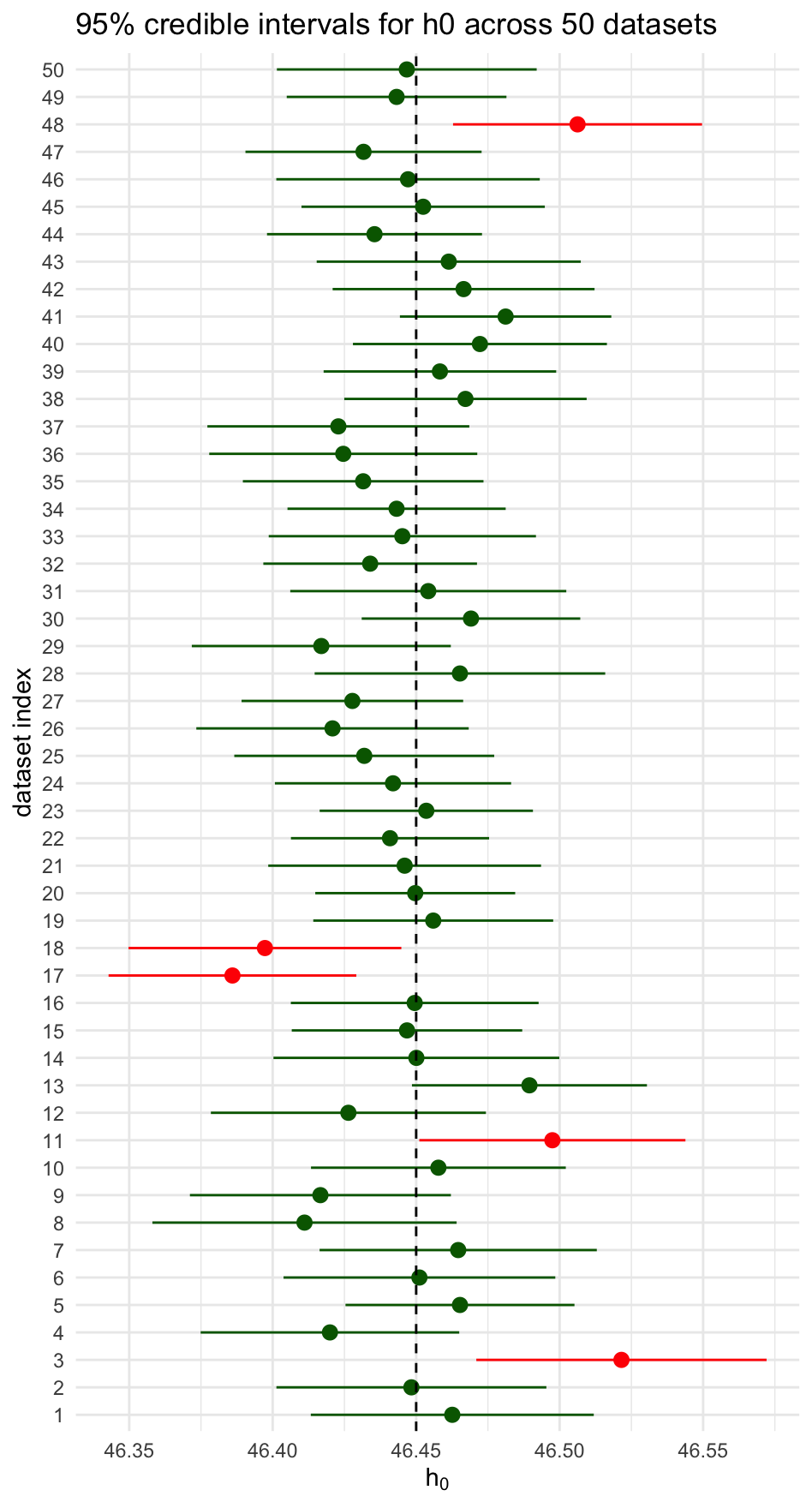}
  \end{subfigure}
  \hfill
  \begin{subfigure}{0.28\textwidth}
    \centering
    \includegraphics[width=\linewidth]{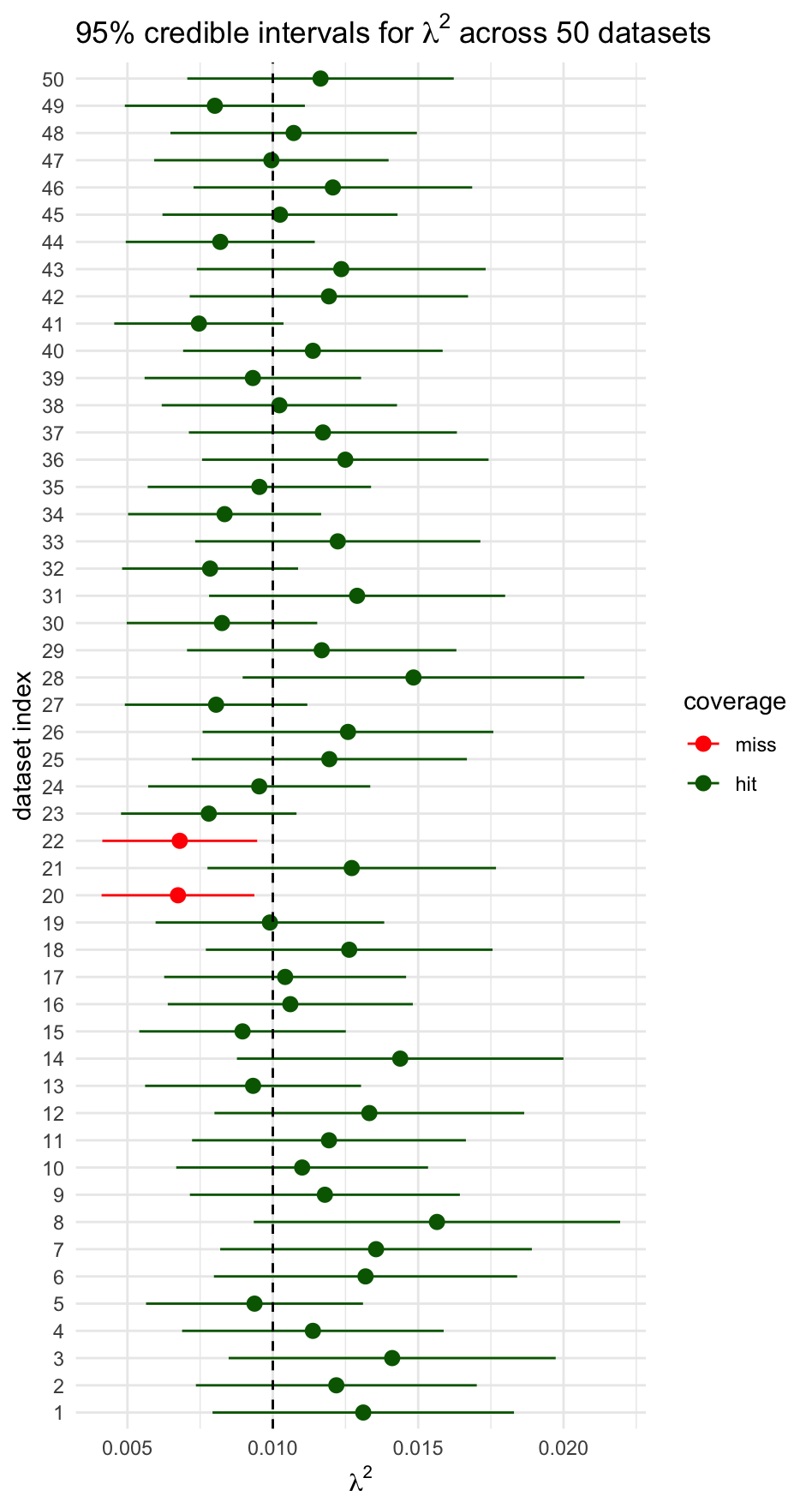}
  \end{subfigure}
  \caption{Posterior $95\%$ credible intervals for the calibration parameters $g_e$ (left), $h_0$ (middle) and the noise variance $\lambda^2$ (right) across the $50$ replicated datasets simulated under $\MF_0$ (classical Gaussian process). Each horizontal segment corresponds to one dataset; dots indicate the posterior medians and segments represent the $95\%$ equal-tailed credible intervals. The vertical dashed line marks the true parameter value used to generate the data, and intervals are coloured according to whether they cover the truth.}
  \label{fig:CI50_all_1}
\end{figure}

Figure~\ref{fig:CI50_all_1} shows that, for the calibration parameters $g_e$ and $h_0$, the credible intervals are generally centred close to the true values and cover them in nearly all replications. The few deviations appear to be isolated rather than systematic, which is consistent with stable calibration of the physical parameters under $\MF_0$. For $\lambda^2$, the intervals that miss the truth tend to do so on the same side of the dashed line, pointing to a mild systematic component in the recovered noise variance rather than to a coverage deficiency as such. Such sensitivity of variance components is common in hierarchical calibration settings, and it motivates interpreting inference on $\lambda^2$ more cautiously than inference on the structural parameters.
Overall, the figure supports accurate recovery of the physical parameters under $\MF_0$.

\subsection{Sensitivity to the allocation mechanism and to the sample size under $\MF_0$}\label{sup:m0_sensitivity}

This subsection reports a sensitivity analysis of the experiment of Section 4.1 of the main paper under $\MF_0$ (Figure~1 
in the main text), in which $50$ datasets of size $n=45$ are simulated from $\MF_0$ and inference is carried out under the mixture model $\MF_\alpha$ with the standard allocation mechanism. Two factors are varied independently and then jointly: (i)~the allocation mechanism between $\MF_0$ and $\MF_1$, replaced by its threshold-based version that prevents negligible posterior discrepancy values $\delta(x_i)$ from being assigned to the discrepancy component, and (ii)~the sample size, increased from $n=45$ to $n=100$. All other elements of the simulation design — priors, MCMC settings, true parameter values $(\boldsymbol{\theta}^\ast,\lambda^{2\ast})$, and the $50$ replicated datasets per scenario — are identical to those of the main-text experiment, so that the differences observed across Figures~\ref{fig:theta_alpha-classic_45_Seuil}-\ref{fig:theta_alpha-classic_100_Seuil} can be attributed unambiguously to these two factors.

\begin{figure}[h!]
    \centering
    \includegraphics[width=0.7\textwidth, height=0.4\textheight]{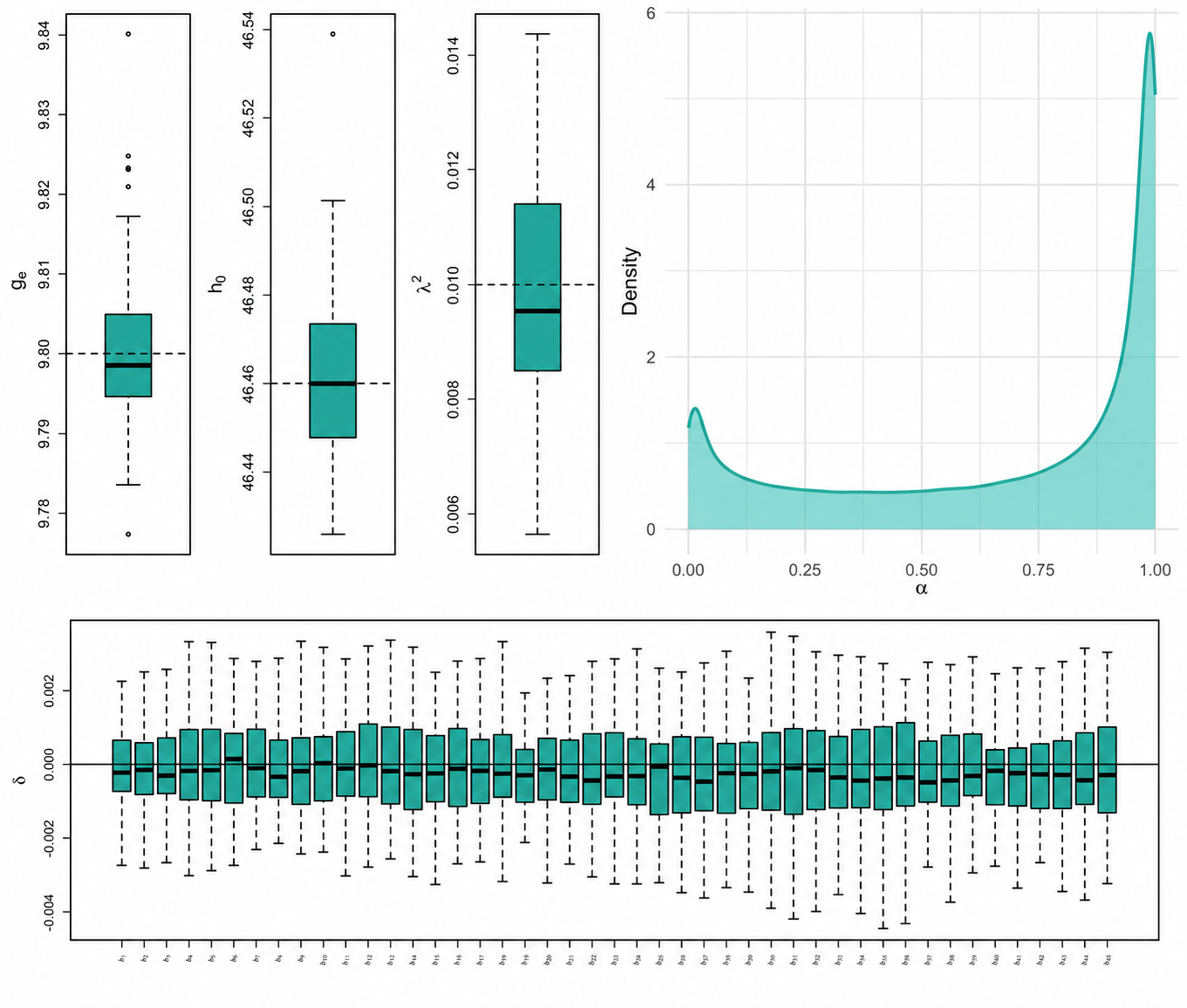}
    \caption{Results for $50$ datasets of size $n=45$ simulated under $\MF_0$ using the threshold-based allocation mechanism. (Top left): distributions of the posterior means of the model parameters compared to the true values, shown by horizontal dashed lines. (Top right): pooled posterior density of $\alpha$ across the $50$ datasets. (Bottom): distributions of the posterior means of the discrepancy values $\delta(x_i)$ across the $50$ datasets. The threshold mechanism prevents negligible discrepancy values from driving the allocation toward the discrepancy component.}
    \label{fig:theta_alpha-classic_45_Seuil}
\end{figure} 

\begin{figure}[h!]
    \centering
    \includegraphics[width=0.7\textwidth, height=0.4\textheight]{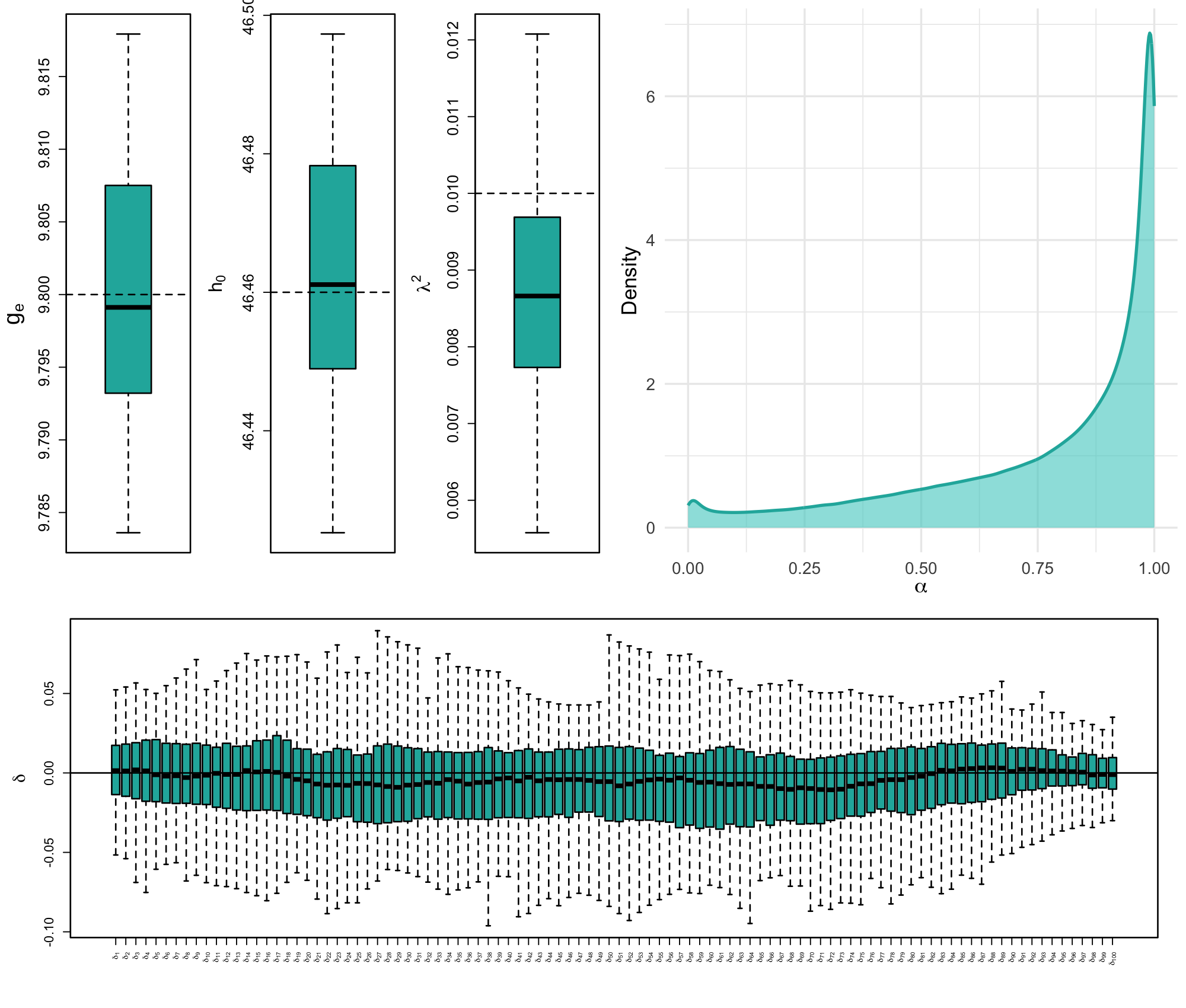}
     \caption{Results for $50$ datasets of size $n=100$ simulated under $\MF_0$. (Top left): distributions of the posterior means of the model parameters compared to the true values, shown by horizontal dashed lines. (Top right): pooled posterior density of $\alpha$ across the $50$ datasets. (Bottom): distributions of the posterior means of the discrepancy values $\delta(x_i)$ across the $50$ datasets.}
    \label{fig:theta_alpha-classic_100}
\end{figure} 

\begin{figure}[h!]
    \centering
    \includegraphics[width=0.7\textwidth, height=0.4\textheight]{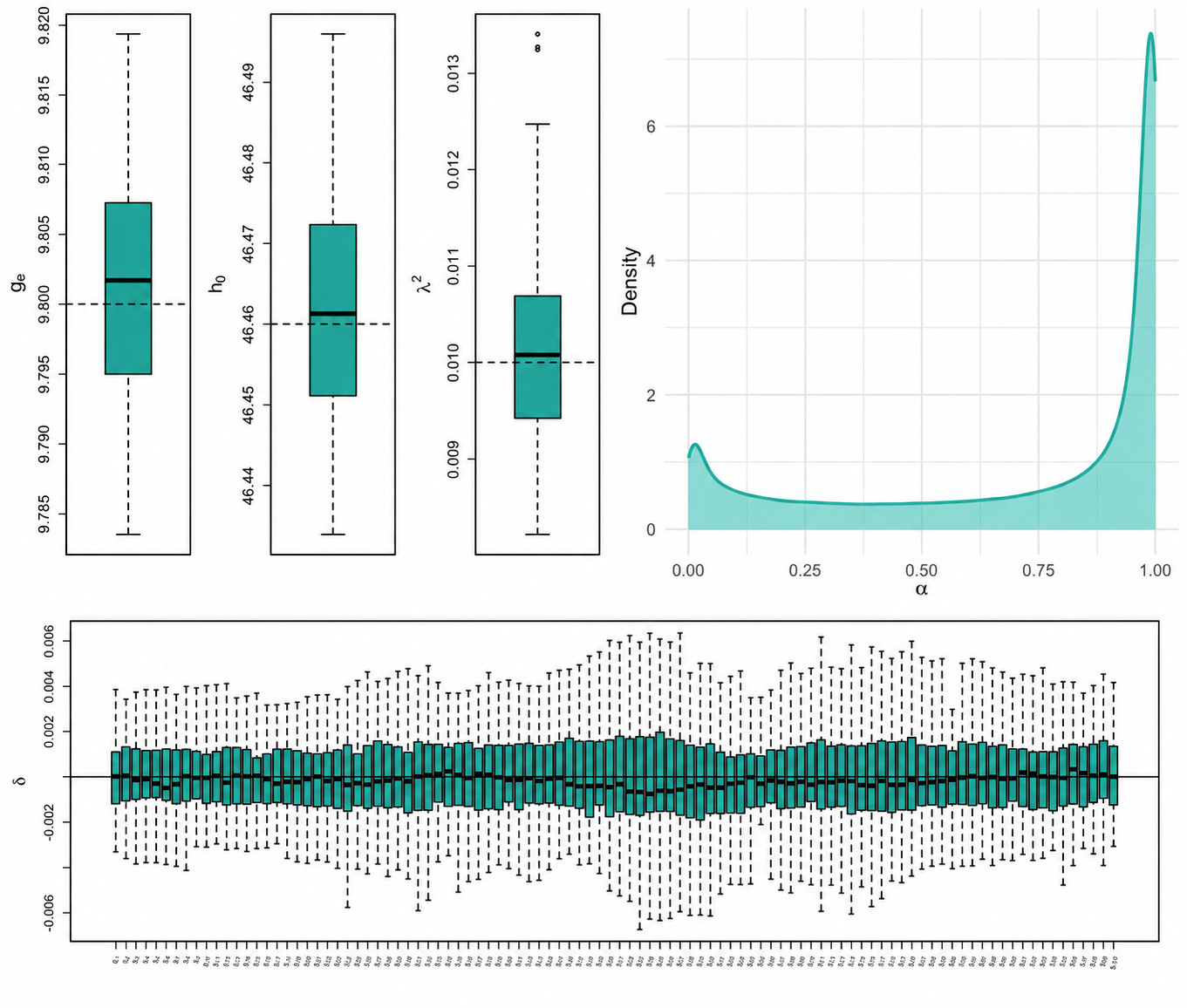}
    \caption{Results for $50$ datasets of size $n=100$ simulated under $\MF_0$ using the threshold-based allocation mechanism. (Top left): distributions of the posterior means of the model parameters compared to the true values, shown by horizontal dashed lines. (Top right): pooled posterior density of $\alpha$ across the $50$ datasets. (Bottom): distributions of the posterior means of the discrepancy values $\delta(x_i)$ across the $50$ datasets. Compared with the standard allocation mechanism, the thresholded version gives a more conservative interpretation of small posterior discrepancy values.}
    \label{fig:theta_alpha-classic_100_Seuil}
\end{figure}
Overall, the sensitivity analysis confirms that the proposed approach is robust to both the choice of the allocation mechanism and the sample size, and that its two natural refinements — thresholding and additional data — act jointly to improve the calibration and the model-selection behaviour of $\MF_\alpha$ when the data are generated under $\MF_0$.

\subsection{More simulation study results}\label{sup:scnI_III}
 
\paragraph{ Baseline mixture inference without thresholding, with free $(g_e,h_0)$.}
We fit the original mixture model on the same synthetic datasets as in
Section~4.1, Figure 3 of the main paper, but without any thresholding and without fixing $g_e$: both $g_e$ and $h_0$ are estimated from the data, and latent allocations are driven only by the global mixture mechanism through $\alpha$.
For a representative replicated dataset we also examine the pointwise
posterior inclusion probabilities 
$\hat p_i = \P\bigl(\zeta_i = 1 \mid \pmb{y}, \pmb{X}, \pmb{\theta}, \delta, \lambda^2, k, \gamma_\delta, \alpha\bigr)$.

\paragraph{Thresholded allocations with $(g_e,h_0)$ free.}
We now use exactly the same thresholding rule as in the main text (Figure 3, threshold $s=0.3$), but we let both $g_e$ and $h_0$ be estimated from the data. The comparison with the thresholded, $g_e$-fixed case of the main paper therefore isolates the effect of the calibration-discrepancy confounding under a fixed thresholding mechanism.

\begin{figure}[h!]
        \centering
        \includegraphics[width=0.5\textwidth]{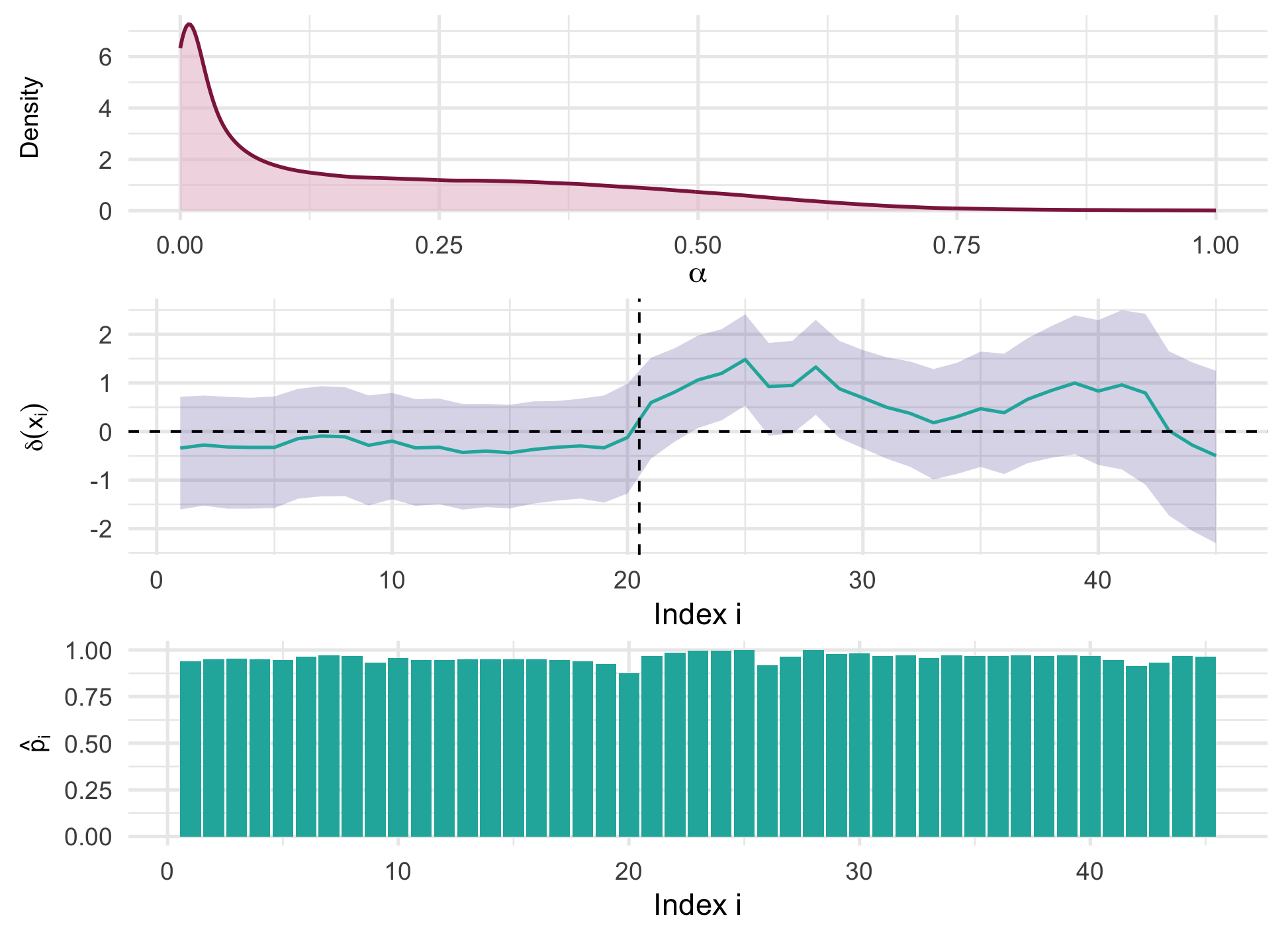}\includegraphics[width=0.5\textwidth]{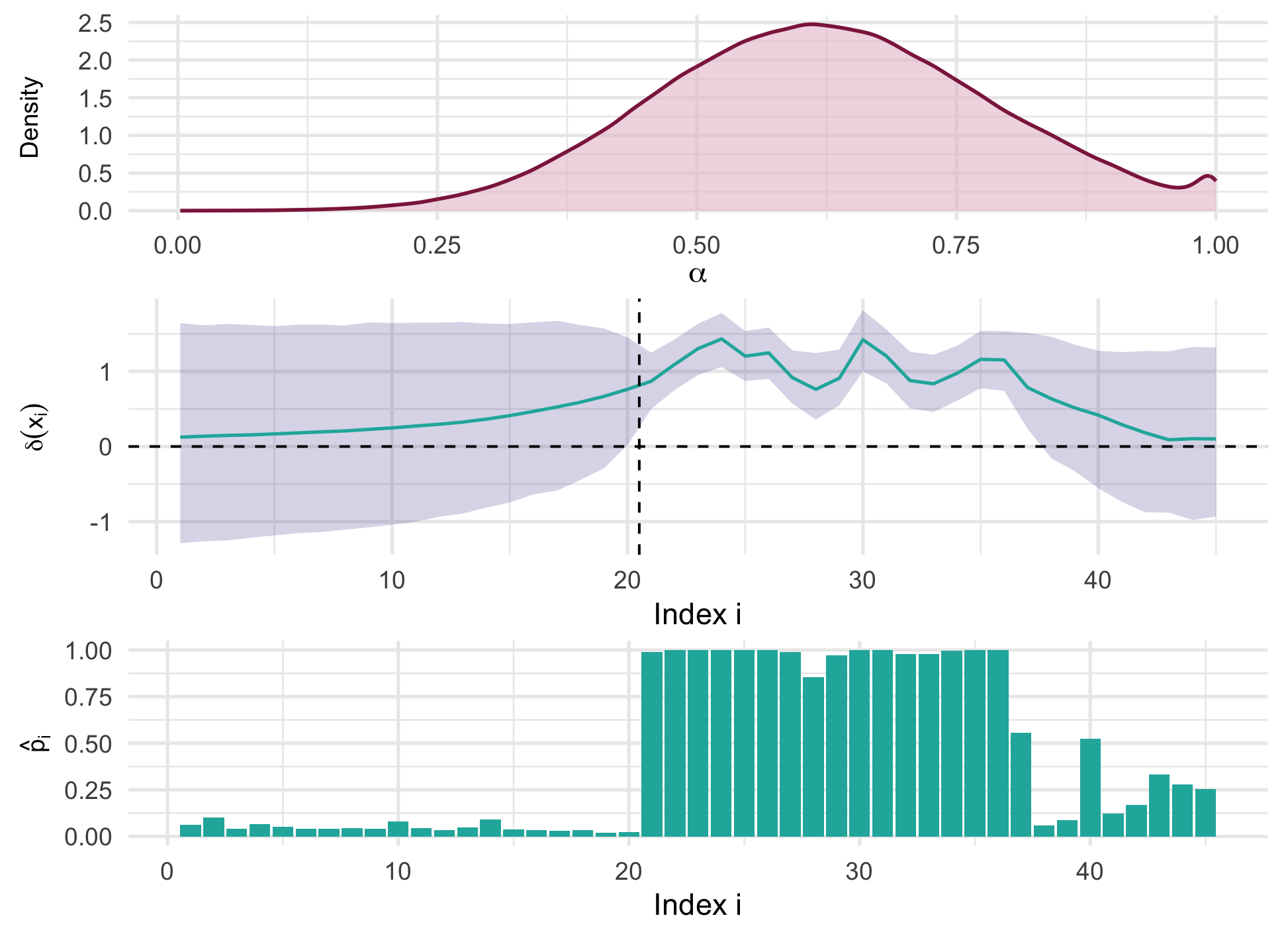}
        \caption{\small Bayesian inference : (Left panel) without thresholding and (Right panel) with thresholding, when all of the parameters $g_e, h_0,\delta,\lambda^2,\alpha$ are estimated. (Top) Pooled posterior density of $\alpha$. (Middle) Posterior means of $\delta(x_i)$ with $95\%$ credible bands plotted for one representative dataset. The vertical dashed line marks the true activation boundary at $i=n_0+0.5$. (Bottom) Pointwise posterior inclusion probabilities $\hat p_i$.}
        \label{fig:seuil_senario1_3}
\end{figure}

Figure~\ref{fig:seuil_senario1_3} left shows that without thresholding the model detects a discrepancy somewhere, but does not localize it sharply: the posterior mean of $\delta(x_i)$ does shift upward after the true boundary, but the credible
bands still overlap zero near the transition, and the pointwise inclusion
probabilities remain high across the whole input domain including the
region where the true discrepancy is negligible. Once the
discrepancy-corrected component is globally preferred, the standard mixture
model tends to use it almost everywhere. This is precisely the setting in
which a threshold-based local diagnostic is useful, and the corresponding case in the main paper (thresholding with $g_e$ fixed) shows the improvement obtained by combining thresholding with a
fixed $g_e$.

Figure~\ref{fig:seuil_senario1_3} right shows that thresholding on its own-without also
reducing the calibration-discrepancy confounding-yields a much less
contrasted local diagnostic than the thresholded, $g_e$-fixed case of the main paper. The posterior of $\alpha$ is no longer concentrated on small values: its mass spreads over intermediate values
and part of it even moves above $1/2$. The posterior mean of $\delta(x_i)$ is
shrunk strongly towards zero across most of the trajectory, and the pointwise
inclusion probabilities $\hat p_i$ remain low almost everywhere, with only a
mild increase near the true change-point. In other words, once $g_e$ is free,
part of the systematic lack-of-fit that should be attributed to $\delta(X)$
is re-expressed as a shift in the calibration parameter, the threshold is
triggered less often, and the local allocation pattern becomes harder to read.

Taken together, these two experiments confirm the interpretation given in the main paper: sharp local detection of the active region requires \emph{both} the thresholded allocation rule \emph{and} a reduction of the calibration--discrepancy confounding. Neither ingredient alone is sufficient. The thresholded, $g_e$-fixed case of the main paper combines the two and produces the
clearest local allocation map.
 
\subsection{Inference under the OGP prior when data are simulated from a classical GP}\label{sup:ogp_sim}
 
In this experiment, we evaluate the robustness of the OGP discrepancy model
introduced in Section~\ref{sup:ogp} under a mild prior misspecification. For
each value of the true correlation length
$\gamma_\delta^\ast\in\{0.01,0.1,0.2,\ldots,0.9\}$, we generate $50$ datasets
of size $n=45$ from
$y_i = g(x_i)\boldsymbol{\theta}^\ast+\delta^\ast(x_i)+\varepsilon_i$,
with the discrepancy $\delta^\ast$ simulated from the \emph{classical}
(non-orthogonal) GP with exponential correlation kernel, while \emph{inference}
is performed under the OGP prior.

\begin{figure}[h!]
\centering
\includegraphics[width=0.9\textwidth]{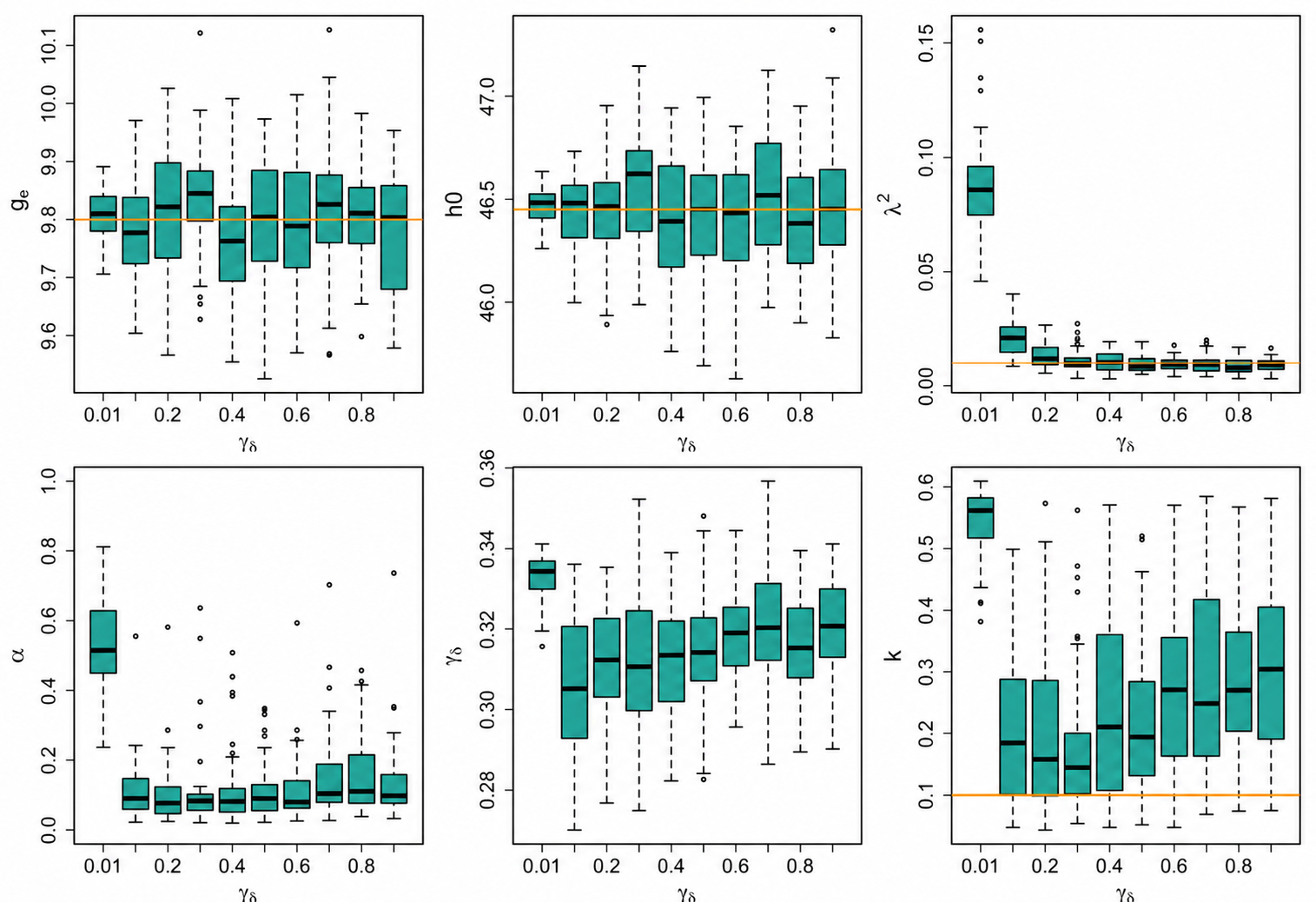}
\caption{\textbf{Inference under the OGP prior, data simulated from a classical GP.}
For each $\gamma_\delta^\ast\in\{0.01,0.1,\ldots,0.9\}$ and $50$ replicated datasets of
size $n=45$, boxplots show the posterior means across replications for
$g_e$, $h_0$, $\lambda^2$, $\alpha$, $\gamma_\delta$ and $k$. Red dashed horizontal
lines indicate the true values. The MCMC is run for $10{,}000$ iterations with a burn-in of $2{,}000$.}
\label{fig:ogp_inf}
\end{figure}

The aim of this experiment is to check that adopting the OGP prior does not degrade inference when the discrepancy is in fact generated by a classical GP, i.e.\ under a mild prior misspecification. The results
(Figure~\ref{fig:ogp_inf}) are qualitatively similar to those obtained under the classical GP prior in the main paper. The calibration parameters $(g_e,h_0)$ are recovered stably around their true values across the whole range of $\gamma_\delta^\ast$. The posterior of the mixture weight $\alpha$ concentrates near zero once $\gamma_\delta^\ast\gtrsim 0.1$, correctly favouring the discrepancy-corrected component, and the $\gamma_\delta^\ast=0.01$ case remains a stress test in which the GP discrepancy becomes nearly
indistinguishable from additional white noise: there $\alpha$ shifts upward and becomes more dispersed, while $\lambda^2$ and $k$ inflate accordingly. The variance-related hyper-parameters $(\lambda^2,k,\gamma_\delta)$ display the usual amplitude range trade-off but remain centred in plausible regions.

\subsection{Oracle identifiability diagnostic for $(k,\gamma_\delta)$}\label{sup:oracle}
 
A well-known challenge in Bayesian calibration with Gaussian-process discrepancy
is that variance-related hyper-parameters are often weakly identified
from finite datasets: similar likelihood values can be obtained by trading
off measurement noise against discrepancy magnitude, and by compensating the
discrepancy amplitude with its correlation range
\citep{kennedy2001,brynjarsdottir2014}.
This is consistent with
what we observe under $\MF_1$ in the main paper, where $(g_e,h_0)$ remain stable
while the dispersion of $(k,\gamma_\delta)$ is markedly larger.
 
To disentangle intrinsic identifiability limitations from potential
implementation issues, we run an \emph{oracle} diagnostic in which inference
is performed \emph{conditionally on the realized discrepancy}: for each
replicated dataset we condition on the simulated discrepancy path $\delta^\ast$
and treat it as known throughout the run. We also remove mixture-allocation
uncertainty by assigning all observations to the discrepancy-corrected
component ($\zeta_i\equiv 1$), so that the information about
$(k,\gamma_\delta)$ comes solely from the GP model for $\delta(X)$.
We compare three configurations:
\begin{itemize}
\item \textbf{Scenario A:} $\delta$ fixed; both $k$ and $\gamma_\delta$ are estimated.
\item \textbf{Scenario B:} $\delta$ fixed; $k=k^\ast$ fixed; only $\gamma_\delta$ is estimated.
\item \textbf{Scenario C:} $\delta$ fixed; $\gamma_\delta=\gamma_\delta^\ast$ fixed; only $k$ is estimated.
\end{itemize}
 
\begin{figure}[h!]
\centering
\begin{subfigure}[t]{0.32\textwidth}
  \centering
  \includegraphics[width=\linewidth]{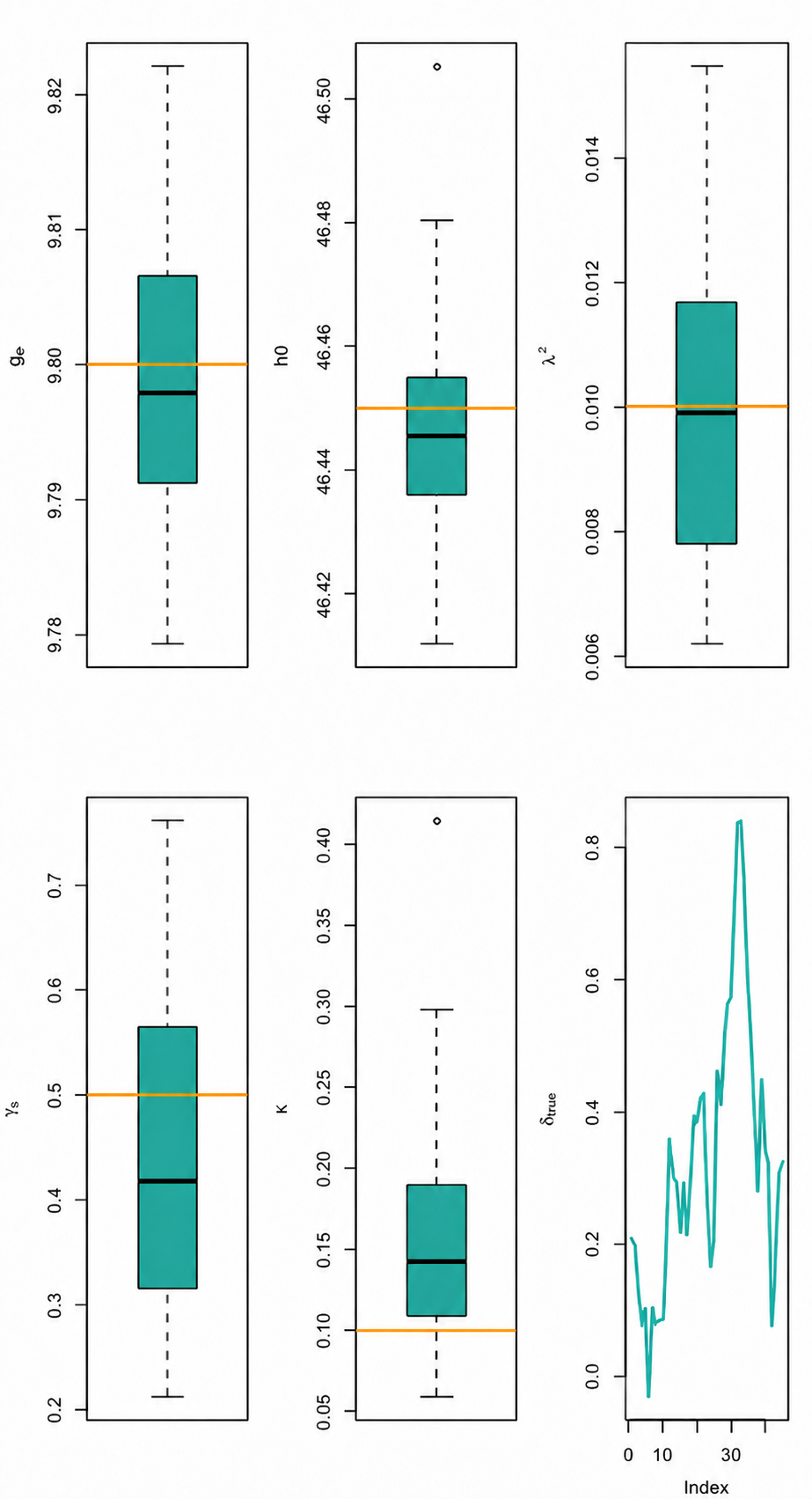}
  \caption{Scenario~A: $k$, $\gamma_\delta$ estimated.}
\end{subfigure}\hfill
\begin{subfigure}[t]{0.32\textwidth}
  \centering
  \includegraphics[width=\linewidth]{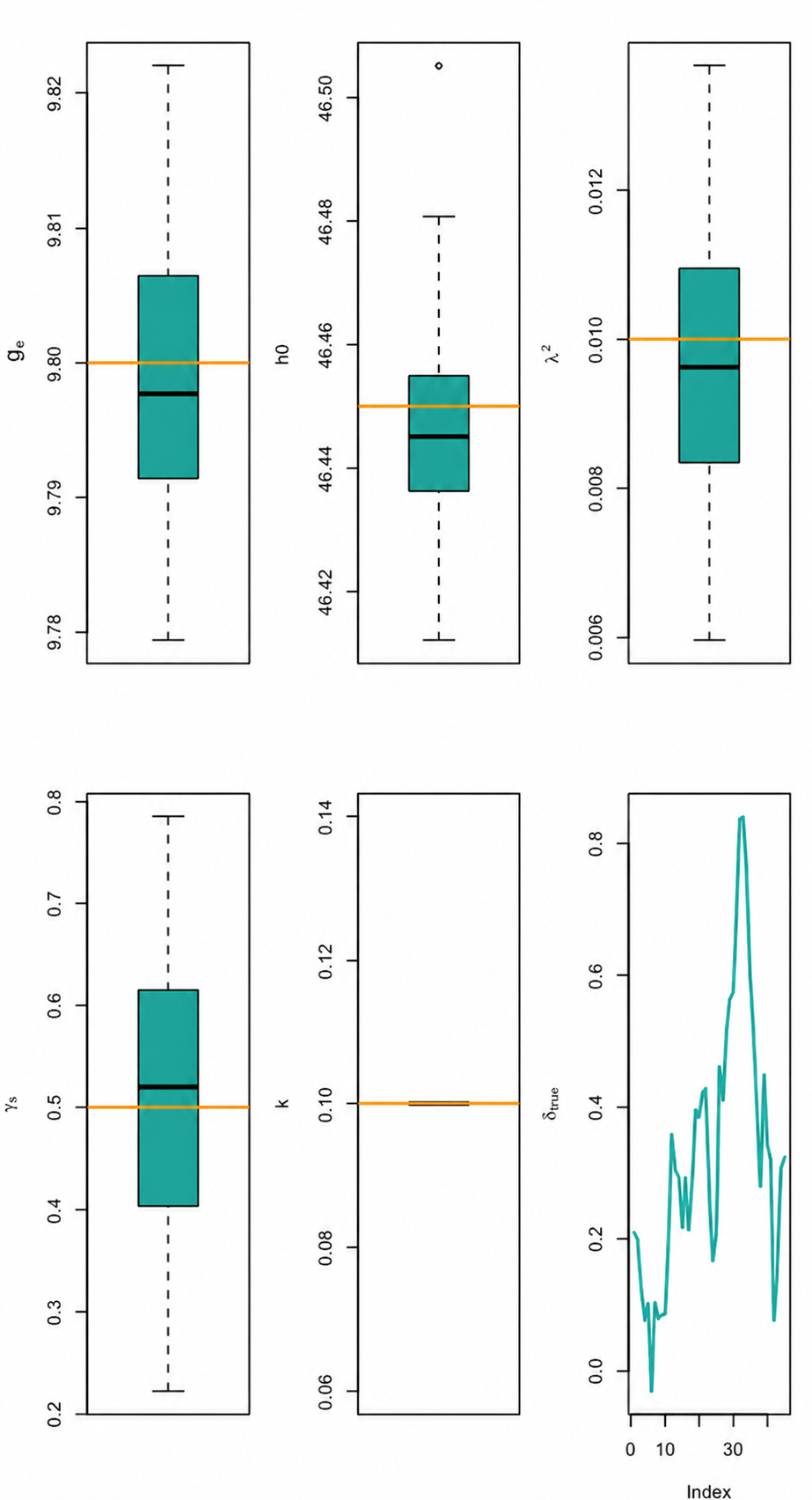}
  \caption{Scenario~B: $k=k^\ast$ fixed; $\gamma_\delta$ estimated.}
\end{subfigure}\hfill
\begin{subfigure}[t]{0.32\textwidth}
  \centering
  \includegraphics[width=\linewidth]{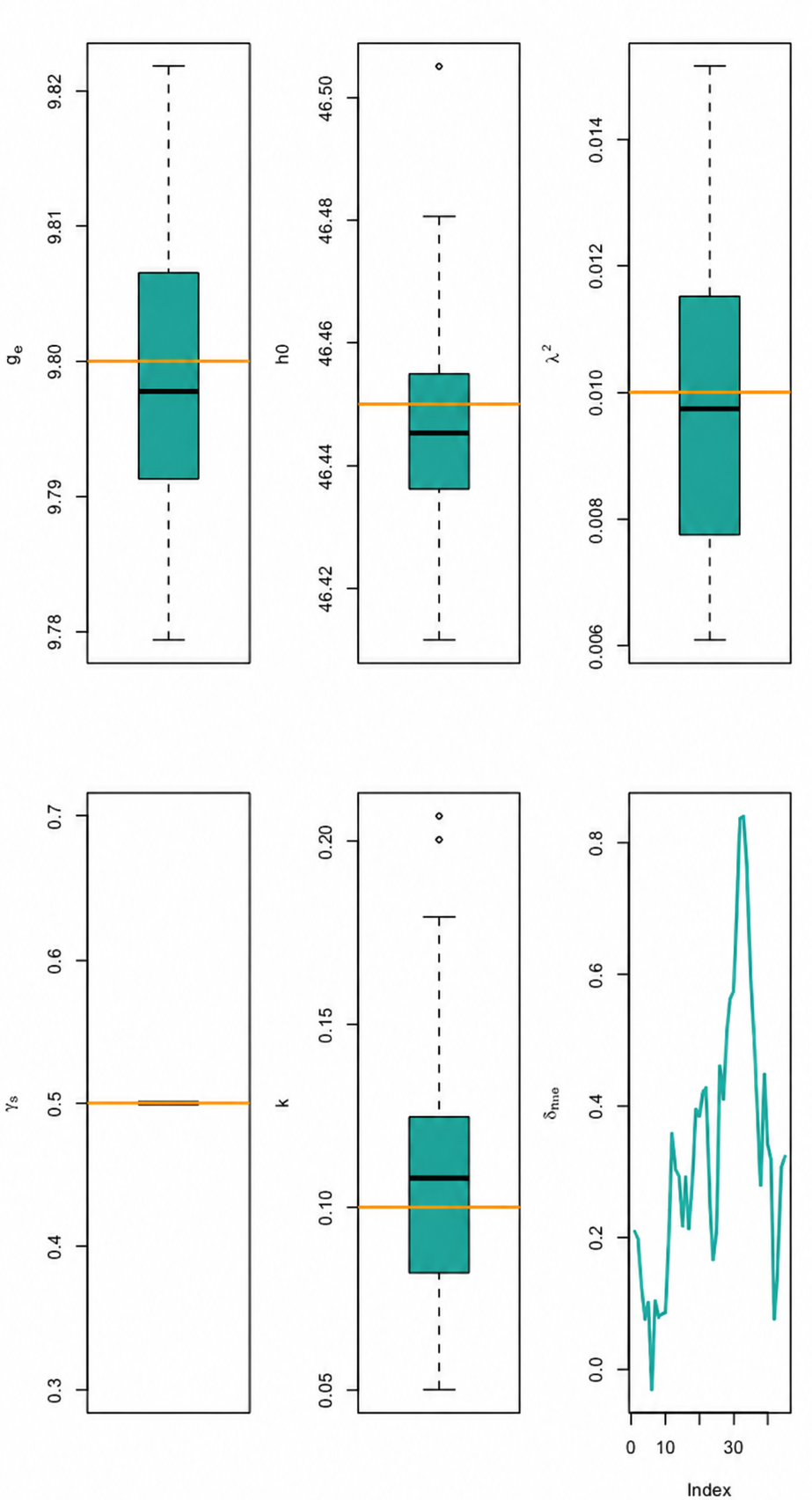}
  \caption{Scenario~C: $\gamma_\delta=\gamma_\delta^\ast$ fixed; $k$ estimated.}
\end{subfigure}
\caption{\textbf{Oracle diagnostic conditional on the true discrepancy.}
Posterior-mean summaries across $50$ replicated datasets (boxplots) for
$(g_e,h_0,\lambda^2,\gamma_\delta,k)$ under three oracle configurations. The
discrepancy path $\delta^\ast$ is held fixed during inference; horizontal lines
indicate the true values.}
\label{fig:oracle}
\end{figure}

Figure~\ref{fig:oracle} shows that the calibration parameters $(g_e,h_0)$ are
recovered stably across the three configurations, and $\lambda^2$ is reasonably
calibrated once $\delta$ is fixed: the main numerical difficulty observed in
the full model is \emph{not} driven by the updates for $(g_e,h_0,\lambda^2)$. The
picture differs for the discrepancy hyper-parameters. In Scenario~A, the
posterior means of $(k,\gamma_\delta)$ exhibit noticeable spread and tend to
move in opposite directions: replications with smaller inferred
$\gamma_\delta$ are often associated with larger values of $k$, and conversely.
This is the typical ``range vs.\ amplitude'' confounding in GP discrepancy
models, here expressed through $\sigma_\delta^2=\lambda^2/k$: over a finite
design, rather different $(k,\gamma_\delta)$ pairs can induce similar prior
densities for the realized $\delta^\ast$ and lead to comparable fits.
Scenarios~B and~C make this confounding explicit. When $k$ is fixed at
$k^\ast$, the posterior means of $\gamma_\delta$ concentrate tightly around
$\gamma_\delta^\ast$; conversely, when $\gamma_\delta$ is fixed, the posterior
means of $k$ concentrate around $k^\ast$. Taken together, these oracle results
point to a genuine identifiability limitation among variance-related
components rather than to a coding issue in the MCMC implementation, and this
limitation concerns mainly the discrepancy hyper-parameters. Crucially, it
does not undermine the model-choice objective, which is driven by the
posterior behaviour of the mixture weight $\alpha$ in the main experiments.

\section{Additional real-data results: Blue Basketball with $g_e$ free}\label{sup:real}

In Section~4.2 of the main paper, the real-data analysis of the \emph{Blue
Basketball} dataset uses $g_e$ fixed at its nominal value $9.8\,\mathrm{m/s^2}$
in order to reduce the calibration--discrepancy confounding. For completeness,
we report here the companion experiment in which $g_e$ is left free and is
estimated jointly with $h_0$, $\lambda^2$, $k$, $\gamma_\delta$, $\alpha$ and
the discrepancy process $\delta(X)$. We use the same classical GP prior on
$\delta$ and the same unthresholded allocation rule as in the first paragraph
of Section~4.2.

\begin{figure}[h!]
    \centering
    \begin{subfigure}[t]{0.49\textwidth}
        \centering
        \includegraphics[width=\linewidth]{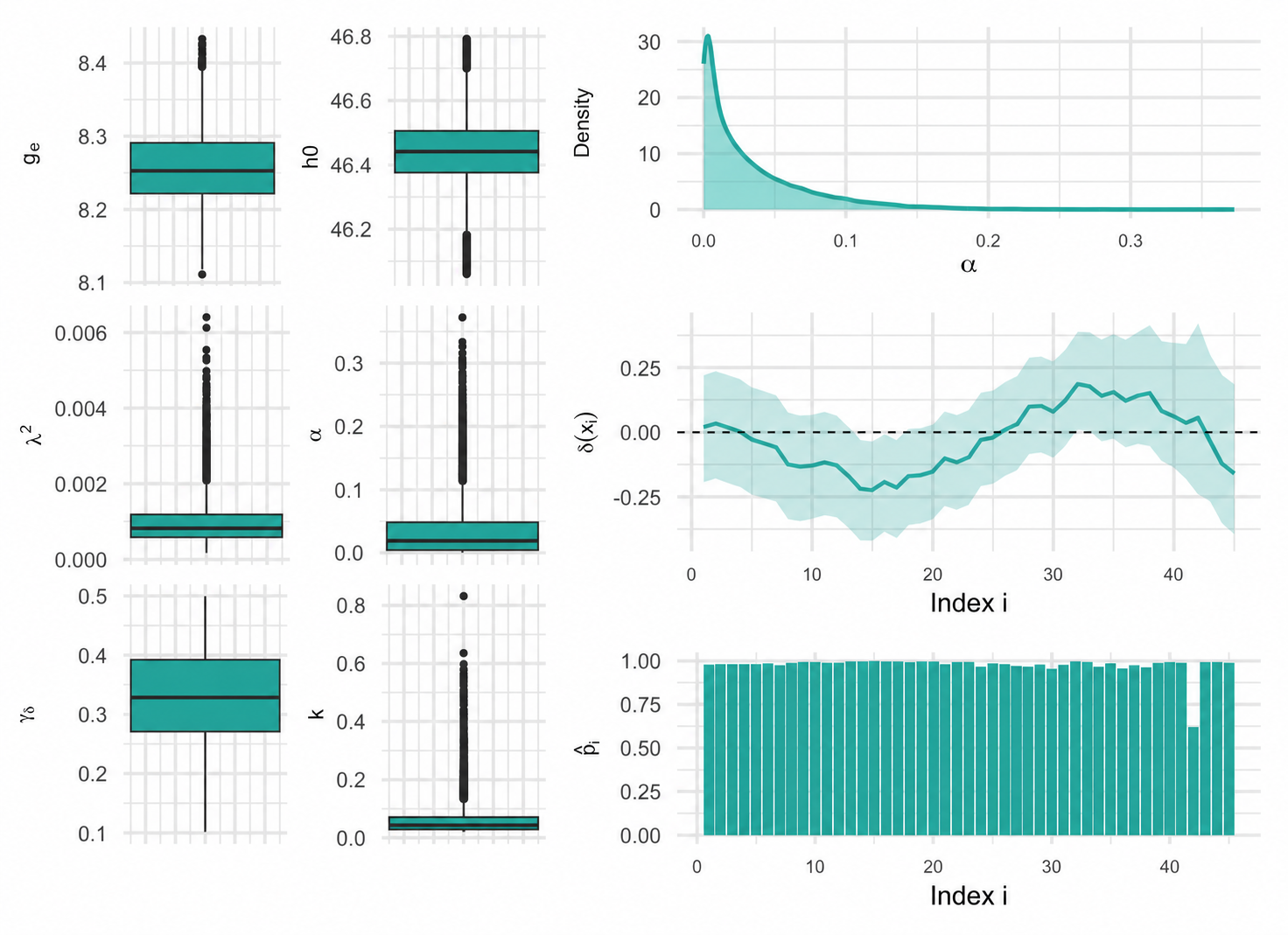}
        \caption{Unthresholded mixture.}
        \label{fig:real_blue_basketball_aa}
    \end{subfigure}\hfill
    \begin{subfigure}[t]{0.49\textwidth}
        \centering
        \includegraphics[width=\linewidth]{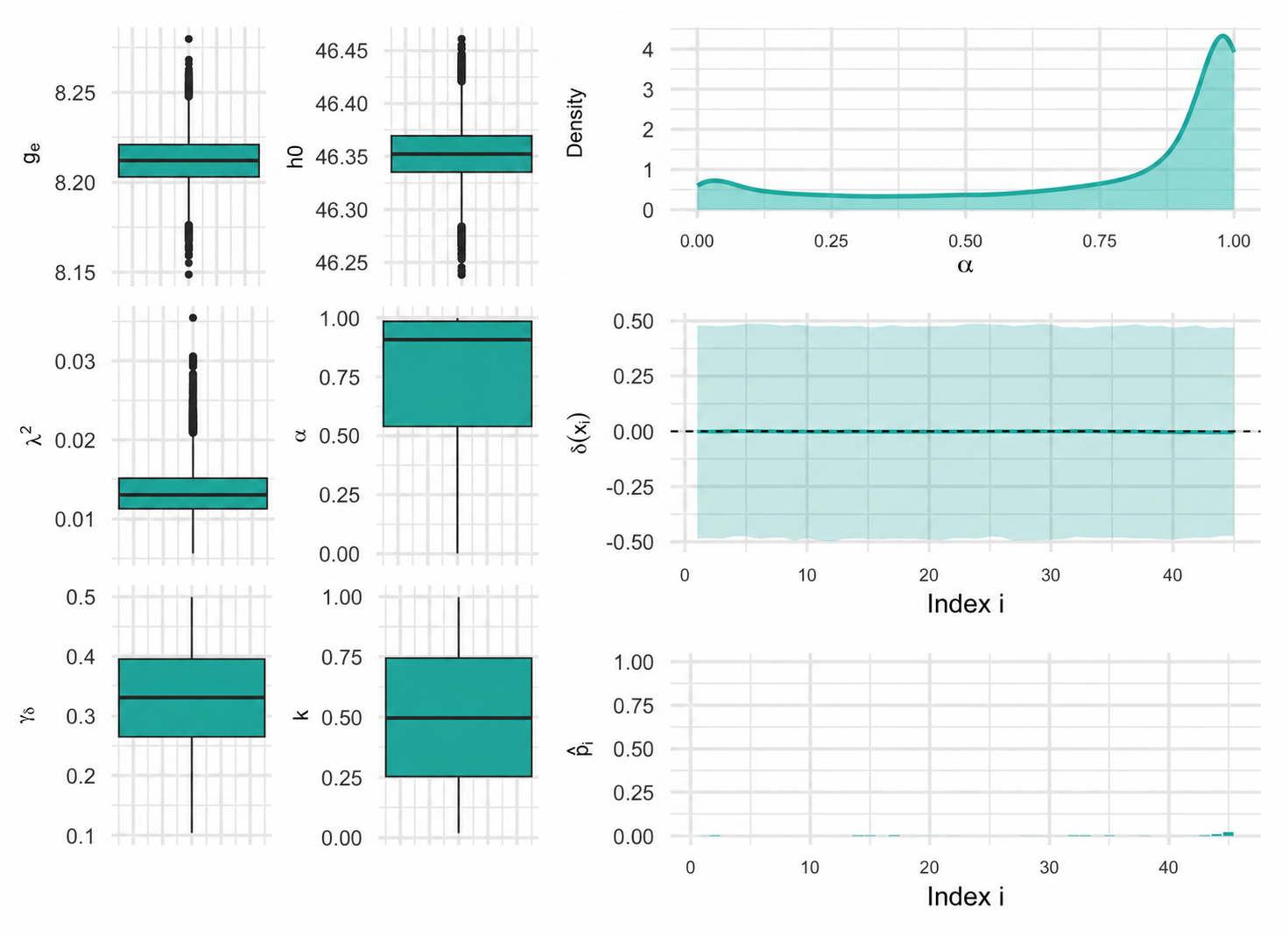}
        \caption{Thresholded mixture, $s=0.3$.}
        \label{fig:real_blue_basketball_ba}
    \end{subfigure}
    \caption{\small Real data application (Blue Basketball), $g_e$ estimated jointly with the other parameters. Posterior summaries: (Left-Boxplots) Posterior distribution of the MCMC draws; (Top-Right-panel) Posterior density of $\alpha$ samples; (Middle-Right-panel) Posterior mean (solid-line) of $\delta(x_i)$ with pointwise $95\%$ credible bands; (Bottom-Right-panel) Pointwise posterior inclusion probabilities $\hat p_i$.}
    \label{fig:real_blue_basketballa}
\end{figure}

Figure~\ref{fig:real_blue_basketballa}a is qualitatively similar to the analogous fit with
$g_e$ fixed reported in the main paper. The posterior density of $\alpha$ is
concentrated near zero, so the mixture assigns most of its mass to the
discrepancy-corrected component. The posterior summary of $\delta(x_i)$ stays
away from zero across much of the domain, reflecting that the real trajectory
exhibits mild but coherent departures from the idealized free-fall model. The
posterior mass of $(\lambda^2,k)$ concentrates near small values, reflecting
the well-known confounding between measurement noise and discrepancy amplitude:
a good fit can often be obtained by reducing the noise variance while letting
the GP absorb the remaining variability. Fixing $g_e$ in the main paper makes
this confounding more explicit (since systematic lack-of-fit can no longer be
absorbed by a shift in $g_e$) but does not qualitatively change the global
conclusion: at the dataset level, the mixture prefers the
discrepancy-corrected component; at the local level (after thresholding, with
$g_e$ fixed, Section~4.2 of the main paper), the discrepancy-corrected component
is selected only beyond a clear transition point along the trajectory.

Figure~\ref{fig:real_blue_basketballa}b shows that the posterior of the gravitational constant concentrates around $g_e\approx 8.21$, with $h_0\approx 46.35$, that is, a slightly reduced effective gravity that absorbs a part of the systematic positive bias caused by air drag. With these calibration values, the data-versus-physics residuals lie within $[-0.50, 0.50]$ across the entire trajectory, so all of them fall below the threshold level $s=0.3$.
 
The thresholding device then has a strong consequence: at every index $i$, the local evidence is judged compatible with $\MF_0$ and the pointwise inclusion probabilities are essentially zero, $\hat p_i \approx 0$ for all $i$. The posterior of $\alpha$ concentrates near one, in agreement with the fact that the data are now globally allocated to the no-discrepancy component. Since no observation is allocated to $\MF_1$, the discrepancy $\delta(x_i)$ is no longer informed by the likelihood and its posterior collapses to the prior: its mean is essentially zero across the whole domain, with symmetric pointwise $95\%$ credible bands of approximately
$\pm 0.5$, uniform in $i$. The hyperparameters $(\lambda^2,k,\gamma_\delta)$ are similarly close to their prior under this regime.
 
This experiment isolates the role of the $g_e$--$\delta$ confounding on real data. When $g_e$ is allowed to vary, the calibration parameter alone explains enough of the systematic departure from a vacuum trajectory that the local residuals fall below the threshold; the thresholded mixture then turns off the discrepancy term entirely and provides no localized information.

\section{Comparison of classical GP and orthogonal GP priors on real data}\label{sup:cgp-ogp-real}
Section~\ref{sup:ogp} introduces the orthogonal Gaussian process prior on $\delta(X)$, which imposes the constraint $\langle \delta, g(x)\boldsymbol{\theta} \rangle = 0$ in order 
to prevent the discrepancy from absorbing the contribution of the calibration parameters. A natural question is whether this constraint effectively resolves the identifiability issue between 
$\boldsymbol{\theta}$ and $\delta(X)$ on real data. To investigate this, we run the mixture model on the Blue Basketball dataset with all parameters free ($g_e$ estimated jointly with the other parameters, no thresholding) under the orthogonal GP prior, and compare it with the classical-GP fit already reported in Figure~\ref{fig:real_blue_basketball}. Figure~\ref{fig:cgp-ogp-comparison} shows the orthogonal-GP result; the two analyses produce essentially identical posterior summaries across all parameters. In particular:
\begin{itemize}
    \item the posterior of $\alpha$ concentrates near the same values 
    in both cases, indicating that the mixture reaches the same 
    conclusion about the relative weight of the 
    discrepancy-corrected component;
    \item the posterior mean of $\delta(x_i)$ and the pointwise 
    $95\%$ credible bands are visually indistinguishable between the 
    two priors;
    \item the pointwise inclusion probabilities $\hat p_i$ exhibit 
    the same pattern, with uniform allocation across observations.
\end{itemize}

\begin{figure}[h!]
    \centering
    \includegraphics[width=0.6\textwidth]{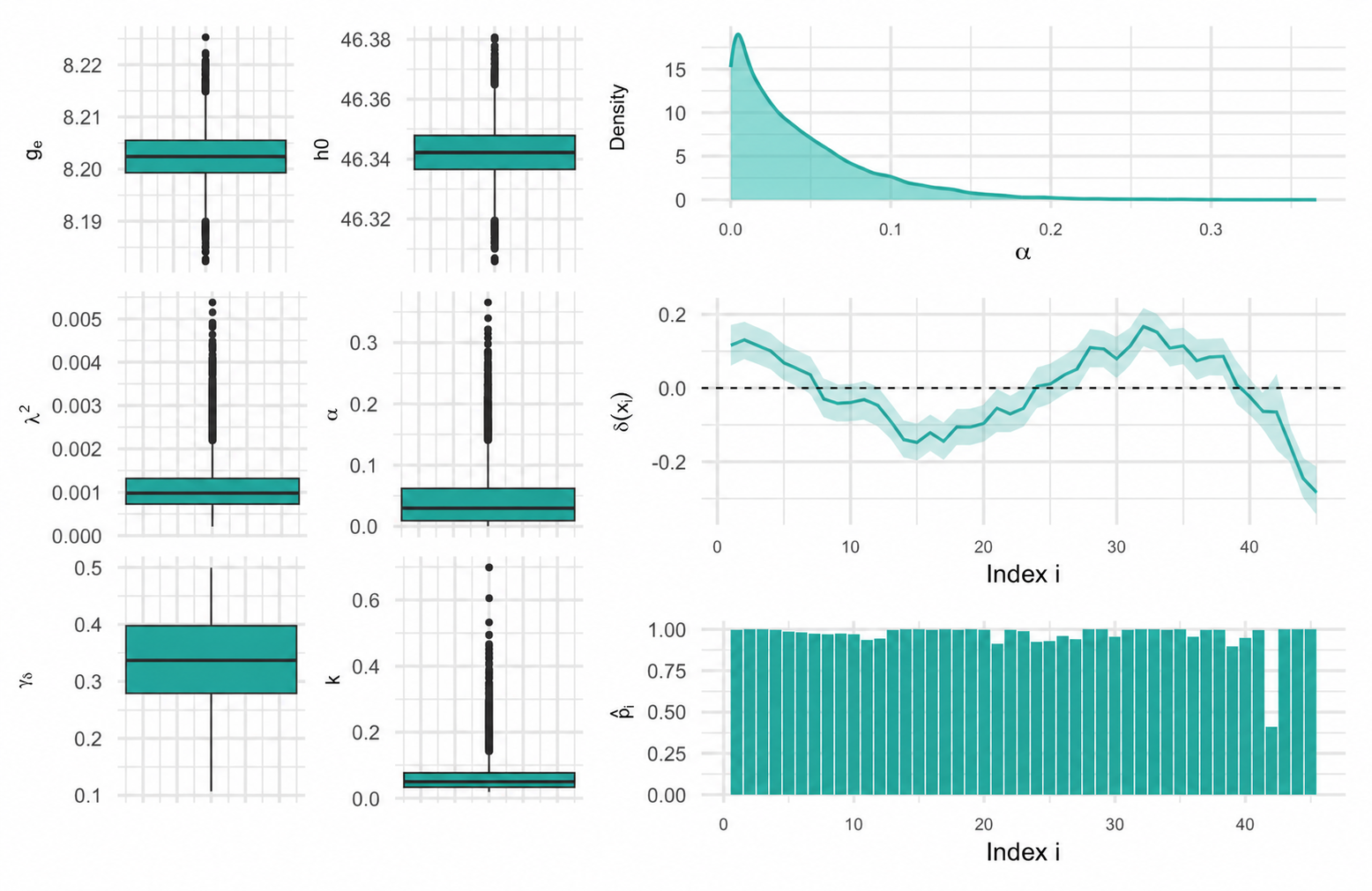}
    \caption{\textbf{Orthogonal GP prior on the Blue Basketball real dataset, $g_e$ free, no thresholding.}
    From left to right: marginal posterior boxplots of the calibration and discrepancy hyperparameters; posterior density of $\alpha$; posterior mean of $\delta(x_i)$ with pointwise $95\%$ credible bands; pointwise inclusion probabilities $\hat p_i$. The corresponding classical-GP fit is shown in Figure~\ref{fig:real_blue_basketball_a}; the two are essentially identical, showing that the orthogonality constraint does not resolve the identifiability issue between $\boldsymbol{\theta}$ and $\delta(X)$ in this example.}
    \label{fig:cgp-ogp-comparison}
\end{figure}

This result shows that in this application the orthogonality 
constraint does not provide additional identifiability for the 
calibration parameters compared to the classical GP. The confounding 
between $\boldsymbol{\theta}$ and $\delta(X)$ is driven primarily 
by the linear structure of the physical model 
$g(x)\boldsymbol{\theta}$ and the flexibility of the GP, rather than 
by the choice of prior covariance kernel. Both approaches reach the 
same empirical conclusion: the mixture systematically allocates mass 
to the discrepancy-corrected component $\MF_1$, and the discrepancy 
function exhibits a coherent positive trend along the trajectory. 
This comparison justifies the use of the classical GP prior in the 
main analysis, as the simpler specification achieves identical 
inferential conclusions while preserving computational convenience.

}

\end{document}